\newtheorem{thm}{Theorem}
\newtheorem{lem}{Lemma}
\newtheorem{proof}{proof}
\newtheorem{rem}{Remark}
\newtheorem{exam}{Example}
\fi \hyphenation{op-tical net-works
semi-conduc-tor}
\begin{document}
\title{On the Sequence Reconstruction Problem for the Single-Deletion
Two-Substitution Channel}

\author{ Wentu~Song,
         ~Kui~Cai,~\IEEEmembership{Senior Member,~IEEE},
         and Tony Q. S. Quek,~\IEEEmembership{Fellow,~IEEE}
\thanks{Wentu~Song and Kui~Cai are with the Science, Mathematics and Technology
        Cluster, Singapore University of Technology and Design,
        Singapore 487372 (e-mail:
        \{wentu\_song, cai\_kui\}@sutd.edu.sg).
        \emph{Corresponding author: Kui Cai.}}
\thanks{Tony Q. S. Quek is with the Information Systems Technology and Design
        Pillar, Singapore University of Technology and Design,
        Singapore 487372 (e-mail:
        tonyquek@sutd.edu.sg).}
}

\maketitle

\begin{abstract}
The Levenshtein sequence reconstruction problem studies the
reconstruction of a transmitted sequence from multiple erroneous
copies of it. A fundamental question in this field is to determine
the minimum number of erroneous copies required to guarantee
correct reconstruction of the original sequence. This problem is
equivalent to determining the maximum possible intersection size
of two error balls associated with the underlying channel.
Existing research on the sequence reconstruction problem has
largely focused on channels with a single type of error, such as
insertions, deletions, or substitutions alone. However, relatively
little is known for channels that involve a mixture of error
types, for instance, channels allowing both deletions and
substitutions. In this work, we study the sequence reconstruction
problem for the single-deletion two-substitution channel, which
allows one deletion and at most two substitutions applied to the
transmitted sequence. Specifically, we prove that if two $q$-ary
length-$n$ sequences have the Hamming distance $d\geq 2$, where
$q\geq 2$ is any fixed integer, then the intersection size of
their error balls under the single-deletion two-substitution
channel is upper bounded by $(q^2-1)n^2-(3q^2+5q-5)n+O_q(1)$,
where $O_q(1)$ is a constant independent of $n$ but dependent on
$q$. Moreover, we show that this upper bound is tight up to an
additive constant.
\end{abstract}

\begin{IEEEkeywords}
Sequence reconstruction, reconstruction codes, deletion,
substitution.
\end{IEEEkeywords}

\IEEEpeerreviewmaketitle

\section{Introduction}

The sequence reconstruction problem studies the reconstruction of
a sequence from multiple erroneous copies of it (also referred to
as \emph{reads} in data storage applications), which are obtained
by transmitting the original sequence over a number of identical
channels. A fundamental question in this field is to determine the
minimum number of reads (equivalently, the minimum number of
channels) required to guarantee correct reconstruction of the
transmitted sequence. Typically, the transmitted sequence is taken
from a pre-specified code $\mathscr C$ and the minimum number of
reads required for correct reconstruction is equal to one plus the
maximum intersection size between the error balls of any two
distinct codewords of $\mathscr C$. This quantity is referred to
as the \emph{read coverage} of $\mathscr C$ for the underlying
channel. Therefore, determining the read coverage of a given code
$\mathscr C$, or designing codes with prescribed read coverage
(referred to as \emph{reconstruction codes}), is a central
research topic in the study of sequence reconstruction.

This problem was first formulated and systematically studied by
Levenshtein in his seminal works
\cite{Levenshtein01-IT,Levenshtein01-CTA}, where the fundamental
bounds on reconstruction were established for several types of
channels such as the insertion channel, the deletion channel, and
the substitution channel. Following Levenshtein's formulation, a
significant body of research has focused on sequence
reconstruction under channels with a single type of
synchronization error. For deletion channels, Gabrys and Yaakobi
\cite{Gabrys18-SR} studied reconstruction from multiple
deletion-corrupted reads and derived bounds on the read coverage
of single-deletion-correcting codes. This line of work was further
developed by Pham \emph{et al.} \cite{Pham22}, who provided a
complete asymptotic solution for sequence reconstruction over
$t$-deletion channels, assuming that the code $\mathscr C$ is an
$(\ell-1)$-deletion correcting code for some positive integer
$\ell\leq t$. Related problems involving insertions were
investigated in \cite{Sala17-IT}, where exact reconstruction from
insertions in synchronization codes was considered. Beyond
deriving the bounds of read coverage, several works studied the
design of reconstruction codes. Reconstruction codes correcting
two deletions were constructed in \cite{Chrisnata22-IT,
Y-Sun-23-SRCC} and reconstruction codes correcting two insertions
were constructed in \cite{Z-L-G-23}. Constructions of
reconstruction codes for single-edit error were developed in
\cite{KuiCai22, Chee18}, where an edit error means a deletion, an
insertion or a substitution error. Extensions to burst errors were
explored in \cite{Y-Sun-23}, and more recently, reconstruction
under multiple bursts of insertions or deletions was investigated
in \cite{Lan-Sun-Yu-Ge-25}. The relationship between
reconstruction under insertion and deletion channels was further
clarified in \cite{Sun-Ge-25}.

In contrast to channels with a single error type, sequence
reconstruction under mixed-error channels is considerably more
challenging and remains much less understood. Abu-Sini and Yaakobi
\cite{Abu-Sini21} studied Levenshtein's reconstruction problem for
single-insertion single-substitution channel, and computed the
size of the error balls for the single-deletion
multiple-substitution channel. The size and structure of error
balls for channels with at most three edits were further examined
in \cite{Abbasian24}. Sequence reconstruction for single-deletion
single-substitution channel was investigated in
\cite{Wentu25-SeqR}, and a tight bound on the intersection size of
error balls was derived. Reconstruction codes for correcting one
deletion and one substitution were proposed in
\cite{Li-Sun-Ge-25}. A comprehensive overview of sequence
reconstruction problems and their applications, particularly in
DNA-based storage systems, can be found in \cite{Yang25-JSAIT}.

In this work, we study the sequence reconstruction problem for the
$q$-ary $(q\geq 2)$ single-deletion two-substitution channel,
which allows one deletion and at most two substitutions applied to
the transmitted sequence. Codes correcting a mixture of deletions
and substitutions have been studied in several existing works
under the classical error-correction or list-decoding models
\cite{Smagloy24-IT}$-$\cite{Sima20}. However, to the best of our
knowledge, sequence reconstruction for channels allowing one
deletion and multiple substitutions has not been previously
investigated. Our main contribution is an upper bound on the
intersection size of the error balls of two $q$-ary sequences with
Hamming distance $d\geq 2$. Specifically, we prove that for any
two $q$-ary sequences of length $n$ with a Hamming distance $d\geq
2$, the size of the intersection of their error balls under the
single-deletion two-substitution channel is upper bounded by
$(q^2-1)n^2-(3q^2+5q-5)n+O_q(1)$, where $O_q(1)$ is a constant
independent of $n$ but dependent on $q$.\footnote{In this work, as
we assume that $q$ is fixed, so a constant means a number that
does not depend on $n$ but may depend on $q$. For notational
simplicity, we borrow the big $O$ notation and use $O_q(1)$ to
denote an arbitrary constant, which may be positive or negative.}
Moreover, we show that this upper bound is tight up to an additive
constant, thereby characterizing the intersection size of error
balls for the single-deletion two-substitution channel to within a
constant gap.

We remark that the bound derived in this paper can in fact be made
fully explicit, in the sense that the exact value of the $O_q(1)$
term can be determined. However, deriving this constant would
require a more tedious analysis, resulting in substantially longer
and less transparent proofs, without providing additional insight
into the main asymptotic behavior. For this reason, we express
this term using the $O_q(1)$ notation.

The paper is organized as follows. In Section
\uppercase\expandafter{\romannumeral 2}, we state the problem and
our main result, as well as some simple observations and auxiliary
results. In Section \uppercase\expandafter{\romannumeral 3}, we
prove the main for the case that the Hamming distance between the
two sequences is $d=2$. In Section
\uppercase\expandafter{\romannumeral 4}, we prove the main result
for the case that the Hamming distance between the two sequences
is $d\geq 3$. The paper is concluded in Section
\uppercase\expandafter{\romannumeral 5}.

\section{Preliminaries}

We use $|A|$ to denote the size of a set $A$. For any integers
$m\leq n$, let $[m,n]=\{m,m+1,\ldots,n\}$ and called it an
\emph{interval}. If $m>n$, let $[m,n]=\emptyset$. For simplicity,
we denote $[n]=[1,n]$. Let $\Sigma_q=\{0,1,\cdots,q-1\}$, where
$q\geq 2$ is an integer. For any $\bm x\in\Sigma_q^n$, let $x_i$
denote the $i$th component of $\bm x$ and then $\bm x$ is denoted
as $\bm x=x_1x_2\cdots x_n$ or $\bm x=(x_1,x_2,\cdots, x_n)$. If
$D=\{i_1,i_2,\cdots,i_m\}\subseteq[n]$ such that
$i_1<i_2<\cdots<i_m$, then $x_D=x_{i_1}x_{i_2}\cdots x_{i_m}$ is
called a \emph{subsequence} of $\bm x$. If $D$ is an interval,
$x_D$ is called a \emph{substring} of $\bm x$. A \emph{run} of
$\bm x$ is a maximal substring of $\bm x$ consisting of identical
symbols. For any two sequences $\bm x$ and $\bm y$, let $\bm x\bm
y$ denote the concatenation of $\bm x$ and $\bm y$. For any
$a\in\Sigma_q$, $a^L$ is the sequence consisting of $L$
consecutive $a$.

Given two non-negative integers $t$ and $s$ such that $t+s<n$, for
any $\bm{x}\in\Sigma_q^n$, the \emph{$t$-deletion $s$-substitution
ball} of $\bm x$, denoted by $B^{\text{DS}}_{t,s}(\bm{x})$, is the
set of all sequences that can be obtained from $\bm{x}$ by exactly
$t$ deletions and at most $s$ substitutions. The
\emph{$t$-deletion ball} of $\bm x$ is
$B^{\text{D}}_{t}(\bm{x})\triangleq B^{\text{DS}}_{t,0}(\bm{x})$,
and the \emph{$s$-substitution ball} of $\bm x$ is
$B^{\text{S}}_{s}(\bm{x})\triangleq B^{\text{DS}}_{0,s}(\bm{x})$.
Let $B\in\{B^{\text{D}}_{t}, B^{\text{S}}_{s},
B^{\text{DS}}_{t,s}\}$. For any $\bm x,\bm x'\in\Sigma_q^n$, we
denote $B(\bm x,\bm x')=B(\bm x)\cap B(\bm x')$. A code $\mathscr
C\subseteq\Sigma_q^n$ is said to be an $(n,N,B)$-reconstruction
code if $\nu(\mathscr C;B)<N$, where
$$\nu(\mathscr C; B)\triangleq\max\{|B(\bm x,\bm x')|:
\bm x,\bm x'\in\mathscr C,\bm x\neq\bm x'\}$$ is called the read
coverage of $\mathscr C$ with respect to $B$.

For any $\bm x\in\Sigma_q^n$, we denote $\text{supp}(\bm x)=\{i\in
[n]: x_i\neq 0\}$ and call it the support of $\bm x$. If $\bm
x=x_1x_2\cdots x_n$ and $\bm x'=x'_1x'_2\cdots x'_n$, then
$\text{supp}(\bm x-\bm x')$ is the set of $i\in[n]$ such that
$x_i\neq x'_i$. The Hamming distance between $\bm x$ and $\bm x'$,
denoted by $d_{\text{H}}(\bm x,\bm x')$, is defined as the size of
$\text{supp}(\bm x-\bm x')$, i.e., $d_{\text{H}}(\bm x,\bm
x')=|\text{supp}(\bm x-\bm x')|$. By definition, for any $\bm
y\in\Sigma_q^n$, $\bm y\in B^{\text{S}}_{s}(\bm x)$ if and only if
$|\text{supp}(\bm y-\bm x)|\leq s$.

In this work, we assume that $q\geq 2$ is an arbitrarily fixed
positive integer and we are to estimate $\nu\big(\mathscr C;
B=B^{\text{DS}}_{1,2}\big)$, assuming that $\mathscr C\subseteq
\Sigma_q^n$ is a code with minimum Hamming distance $d\geq 2$. Our
main result is the following theorem.
\begin{thm}\label{main-thm}
Suppose $\bm x,\bm x'\in\Sigma_q^n$ such that $d_{\text{H}}(\bm
x,\bm x')\geq 2$. We have $|B^{\text{DS}}_{1,2}(\bm x, \bm
x')|\leq (q^2-1)n^2-(3q^2+5q-5)n+O_q(1)$, where $O_q(1)$ is a
constant independent of $n$ but dependent on $q$. Moreover, there
exist two sequences $\bm x,\bm x'\in\Sigma_q^n$ such that
$d_{\text{H}}(\bm x,\bm x')=2$ and $|B^{\text{DS}}_{1,2}(\bm x,
\bm x')|=(q^2-1)n^2-(3q^2+5q-5)n+O_q(1)$.
\end{thm}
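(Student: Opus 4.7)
The backbone of the argument is the decomposition $B^{\text{DS}}_{1,2}(\bm x)=\bigcup_{i\in[n]} B^{\text{S}}_{2}(\bm x^{(i)})$, where $\bm x^{(i)}\in \Sigma_q^{n-1}$ denotes the length-$(n-1)$ sequence obtained by deleting the $i$-th coordinate of $\bm x$. Intersecting both sides gives
\begin{equation*}
B^{\text{DS}}_{1,2}(\bm x,\bm x')=\bigcup_{i,j\in[n]}\bigl(B^{\text{S}}_{2}(\bm x^{(i)})\cap B^{\text{S}}_{2}(\bm x'^{(j)})\bigr).
\end{equation*}
My first step is to tabulate $|B^{\text{S}}_{2}(\bm u)\cap B^{\text{S}}_{2}(\bm v)|$ purely as a function of $h:=d_{\text{H}}(\bm u,\bm v)$: it is a full substitution ball of size $\Theta((q-1)^2 n^2)$ when $h=0$, is $\Theta(q(q-1)\,n)$ when $h=1$, is $O_q(1)$ when $h\in\{2,3,4\}$, and is empty for $h\geq 5$. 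Consequently only pairs $(i,j)$ with $h(i,j):=d_{\text{H}}(\bm x^{(i)},\bm x'^{(j)})\leq 1$ can contribute super-linearly, reducing the problem to a precise enumeration of such pairs.

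The second step is to enumerate the contributing pairs in each stratum and aggregate their contributions. When $h(i,j)=0$ the pair is a common deletion witness of $\bm x$ and $\bm x'$, a situation governed by the run structure of the two sequences: such pairs can occur only in rigid ``run-shift'' configurations and their number is sharply controlled. When $h(i,j)=1$ the pair is a near-common witness, and a local analysis on the single post-deletion mismatch partitions these pairs into a bounded list of geometric patterns whose total count can be read off directly from the runs of $\bm x$ and $\bm x'$. The case split $d_{\text{H}}(\bm x,\bm x')=2$ versus $d_{\text{H}}(\bm x,\bm x')\geq 3$ is essential here: since $d_{\text{H}}(\bm x^{(i)},\bm x'^{(j)})\geq d_{\text{H}}(\bm x,\bm x')-2$, the case $d_{\text{H}}\geq 3$ eliminates the $h(i,j)=0$ stratum outright and shrinks the $h(i,j)=1$ stratum, while the case $d_{\text{H}}=2$ retains both strata in full. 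A key subtlety is verifying that the same leading coefficient $(q^2-1)$ emerges in both regimes, even though it arises from different arithmetic combinations; matching coefficients then yields $(q^2-1)n^2-(3q^2+5q-5)n+O_q(1)$.

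The principal obstacle is overcounting: a single $\bm y$ in the intersection can be witnessed by several distinct pairs $(i,j)$, so the trivial union bound $\sum_{i,j}|B^{\text{S}}_{2}(\bm x^{(i)})\cap B^{\text{S}}_{2}(\bm x'^{(j)})|$ can exceed the true size by a $\Theta(1)$ factor and would overshoot the sharp constants. I would resolve this by assigning a canonical witness $(i,j)$ to each $\bm y$, for instance the lexicographically smallest compatible pair, and then bounding the contribution of non-canonical witnesses by $O_q(n)$, which is absorbed into the linear and constant terms. For the matching lower bound, the tightness example should be an explicit pair $\bm x,\bm x'\in\Sigma_q^n$ with $d_{\text{H}}(\bm x,\bm x')=2$ in which $\bm x$ has many runs (so that the deletion ball $B^{\text{D}}_{1}(\bm x)$ is as large as possible) and the two disagreement positions are placed to maximize the number of $(i,j)$ with $h(i,j)\leq 1$; a direct enumeration of $B^{\text{DS}}_{1,2}(\bm x,\bm x')$ on such a pair should reproduce both the quadratic coefficient $(q^2-1)$ and the linear coefficient $-(3q^2+5q-5)$ exactly, completing the matching lower bound.
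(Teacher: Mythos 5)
Your overall framework (decompose the deletion--substitution ball over deletion positions, stratify the pairs $(i,j)$ by $h(i,j)=d_{\text{H}}(\bm x^{(i)},\bm x'^{(j)})$, and count pairs per stratum via the run structure) is exactly the paper's strategy, but your tabulation of the stratum sizes contains an error that breaks the leading coefficient. For two sequences $\bm u,\bm v\in\Sigma_q^{n-1}$ with $d_{\text{H}}(\bm u,\bm v)=2$, the intersection $B^{\text{S}}_{2}(\bm u)\cap B^{\text{S}}_{2}(\bm v)$ is \emph{not} $O_q(1)$: it has size $2(q-1)(n-1)+q^2-6q+6=\Theta_q(n)$ (take $\bm w$ agreeing with $\bm u$ at one disagreement position, with $\bm v$ at the other, and arbitrary in one further coordinate). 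Since there can be $\Theta(n)$ pairs $(i,j)$ with $h(i,j)=2$ (e.g.\ all the ``diagonal'' pairs $(j_\ell,j_\ell)$ ranging over the runs when $d_{\text{H}}(\bm x,\bm x')=2$), this stratum contributes $2(q-1)n^2$ to the total. That is precisely the term needed to raise the quadratic coefficient from $(q-1)^2$ (coming from the $O(1)$ many $h\in\{0,1\}$ pairs) to $(q-1)^2+2(q-1)=q^2-1$. As written, your accounting can only produce a leading coefficient of $(q-1)^2$, which contradicts your own matching lower bound; the ``key subtlety'' you flag about the coefficient $(q^2-1)$ emerging ``from different arithmetic combinations'' cannot be resolved without restoring the $h=2$ stratum as a quadratic contributor, and then the overcounting problem you describe must be handled within that stratum as well (the paper does this via careful inclusion--exclusion among the $E^{j_\ell}_{j_\ell}$ and the two full balls).

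A second, independent gap is the inequality $d_{\text{H}}(\bm x^{(i)},\bm x'^{(j)})\geq d_{\text{H}}(\bm x,\bm x')-2$, which is false: deleting one coordinate from each sequence at different positions shifts the alignment of everything in between, and the resulting Hamming distance is governed by $\big|\text{supp}(\bm x-\bm x')\backslash[i,j]\big|$ plus a shift term, not by $d-2$. For example $\bm x=0101$, $\bm x'=1010$ have $d_{\text{H}}=4$ yet $x_{[4]\backslash\{1\}}=x'_{[4]\backslash\{4\}}$. Consequently the case $d_{\text{H}}(\bm x,\bm x')\geq 3$ does \emph{not} eliminate the $h(i,j)=0$ stratum (the paper's analysis for $d\geq 3$ explicitly retains up to two such pairs, each contributing a full substitution ball of size $\Theta(n^2)$), and the intersection need not vanish even for large $d$. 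Your plan for the $d\geq 3$ regime therefore starts from a false premise and must instead bound, for each $d'\in\{0,1,2,3,4\}$, the number of pairs achieving $h(i,j)=d'$ via the interplay between $\text{supp}(\bm x-\bm x')$ and the runs of the two sequences.
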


To prove Theorem \ref{main-thm}, we consider two cases according
to the Hamming distance $d_{\text{H}}(\bm x,\bm x')$, namely,
$d_{\text{H}}(\bm x,\bm x')=2$ and $d_{\text{H}}(\bm x,\bm x')\geq
3$. The proof for the case $d_{\text{H}}(\bm x,\bm x')=2$ will be
given in Section \uppercase\expandafter{\romannumeral 3}, while
the proof for the case $d_{\text{H}}(\bm x,\bm x')\geq 3$ will be
presented in Section \uppercase\expandafter{\romannumeral 4}. In
each case, the analysis is further divided into three subcases.
For each subcase, we show that either $|B^{\text{DS}}_{1,2}(\bm x,
\bm x')|\leq (q^2-1)n^2-(3q^2+5q-5)n+O_q(1)$ or
$|B^{\text{DS}}_{1,2}(\bm x, \bm x')|\leq a_qn^2+b_qn+O_q(1)$ for
some constant $a_q$ and $b_q$ such that $a_q<q^2-1$. Consequently,
for sufficiently large $n$, the latter bound is also upper bounded
by $(q^2-1)n^2-(3q^2+5q-5)n+O_q(1)$. Therefore, in all cases, we
obtain $|B^{\text{DS}}_{1,2}(\bm x, \bm x')|\leq
(q^2-1)n^2-(3q^2+5q-5)n+O_q(1)$.

In the rest of this section, we present some observations and
auxiliary results that will be used in the proof of Theorem
\ref{main-thm}.

\subsection{Intersection of error balls for the substitution channel}

It is easy to show that the size of $s$-substitution balls is
$($e.g., see \cite[Chapter 1]{Huffman}$)$
\begin{align}\label{eq-lmd-0s}|B^{\text{S}}_{s}(\bm u)|
=\xi^{q,n}_{0,s}\triangleq \sum_{k=0}^{s}\binom{n}{k}(q-1)^k,
~\forall\!~\bm u\in\Sigma_q^{n}.\end{align}

Consider the intersection size of substitution balls. Let
\begin{equation}\label{eq-lmd-ds2}
\xi^{2,n}_{d,s}=\!\left\{\!\begin{aligned}
&\sum_{i=0}^{d}\binom{d}{i}\sum_{k=0}^{s-d+\min\{i,d-i\}}
\binom{n-d}{k}, ~~~\!~~1\leq d\leq s,\\
&\sum_{i=d-s}^{s}\binom{d}{i}
\sum_{k=0}^{s-d+\min\{i,d-i\}}\binom{n-d}{k}, ~~s<d\leq 2s.
\end{aligned}\right.
\end{equation}
For $q>2$, let
\begin{equation}\label{eq-lmd-ds3}
\xi^{q,n}_{d,s}=\!\left\{\!\begin{aligned}
&\sum_{i=0}^{d}\sum_{j=0}^{d-i}\binom{d}{i}\binom{d-i}{j}
\eta_{i,j}, ~~~~~\!~~~1\leq d\leq s,\\
&\sum_{i=d-s}^{s}\sum_{j=d-s}^{d-i}\binom{d}{i}\binom{d-i}{j}
\eta_{i,j}, ~~s<d\leq 2s.
\end{aligned}\right.
\end{equation}
where
$\eta_{i,j}=(q-2)^{d-i-j}\sum_{k=0}^{s-d+\min\{i,j\}}\binom{n-d}{k}(q-1)^{k}$.
Then we have the following lemma.
\begin{lem}\label{lem-sub-int-size}
Suppose $\bm u,\bm u'\in\Sigma_q^{n}$ such that the Hamming
distance $d_{\text{H}}(\bm u,\bm u')=d\geq 1$. If $d\leq 2s$, then
$|B^{\text{S}}_{s}(\bm u,\bm u')|=\xi^{q,n}_{d,s}$; if $d\geq
2s+1$, then $|B^{\text{S}}_{s}(\bm u,\bm u')|=0$.
\end{lem}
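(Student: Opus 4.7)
The plan is to compute $|B^{\text{S}}_{s}(\bm u,\bm u')|$ by a direct combinatorial enumeration, parameterized by how a candidate $\bm y$ behaves on the positions where $\bm u$ and $\bm u'$ disagree. Fix $D=\text{supp}(\bm u-\bm u')$, so $|D|=d$, and note that on the complement $[n]\setminus D$ the two sequences coincide, so $\bm y$ contributes equally to $|\text{supp}(\bm y-\bm u)|$ and $|\text{supp}(\bm y-\bm u')|$ from those coordinates. The joint constraint $\bm y\in B^{\text{S}}_{s}(\bm u)\cap B^{\text{S}}_{s}(\bm u')$ then splits cleanly into an ``inside $D$'' contribution and an ``outside $D$'' contribution.

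First I would dispose of the case $d\geq 2s+1$ via the triangle inequality for Hamming distance: any $\bm y$ in the intersection would satisfy $d=d_{\text{H}}(\bm u,\bm u')\leq d_{\text{H}}(\bm y,\bm u)+d_{\text{H}}(\bm y,\bm u')\leq 2s$, contradicting $d\geq 2s+1$. Hence $|B^{\text{S}}_{s}(\bm u,\bm u')|=0$ in that regime.

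For $d\leq 2s$, I would classify each coordinate $i\in D$ into three types according to $y_i$: type~A if $y_i=u_i$, type~B if $y_i=u'_i$, type~C if $y_i\notin\{u_i,u'_i\}$. Writing $i$, $j$, $d-i-j$ for the sizes of these three classes, the number of disagreements with $\bm u$ contributed by $D$ is $d-i$ and with $\bm u'$ is $d-j$. Adding $k$ disagreements chosen on the $n-d$ matching positions outside $D$ (where $\bm y$ may be any of $q-1$ non-matching symbols), the joint constraint becomes $(d-i)+k\leq s$ and $(d-j)+k\leq s$, i.e., $k\leq s-d+\min\{i,j\}$ (requiring in particular $i,j\geq d-s$ when $d>s$). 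The number of choices for the inside-$D$ pattern is $\binom{d}{i}\binom{d-i}{j}(q-2)^{d-i-j}$, and for the outside part $\binom{n-d}{k}(q-1)^{k}$; summing over $i$, $j$, $k$ in the appropriate ranges yields exactly the formula for $\xi^{q,n}_{d,s}$ in \eqref{eq-lmd-ds3}. The binary case $q=2$ has to be singled out because the type~C class is forbidden (there is no symbol distinct from both $u_i$ and $u'_i$), which collapses the double sum to a single sum with $j=d-i$ and the factor $(q-1)^{k}=1$, recovering \eqref{eq-lmd-ds2}.

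The main obstacle is purely bookkeeping: getting the lower endpoints of the summations right in the regime $s<d\leq 2s$, where both $\min\{i,j\}\geq d-s$ and the support-count constraints on $\bm y$ become simultaneously binding, and verifying that the separate $q=2$ formula is the correct degenerate limit of the $q>2$ formula. Once the classification of positions in $D$ is in place and the constraints on $k$ are derived, the rest of the proof is a routine product-of-sums argument.
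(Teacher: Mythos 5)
Your proposal is correct and follows essentially the same route as the paper's own proof: the paper likewise partitions the disagreement set $S=\text{supp}(\bm u-\bm u')$ according to whether $v_\ell=u_\ell$, $v_\ell=u'_\ell$, or neither, counts the extra disagreements on $[n]\setminus S$, and derives the same constraints $d-|D_1|+|D_3|\leq s$, $d-|D_2|+|D_3|\leq s$ leading to $k\leq s-d+\min\{i,j\}$ and the endpoint conditions $i,j\geq d-s$ in the regime $s<d\leq 2s$. The only cosmetic difference is that the paper dispatches the $q=2$ case by citation rather than by observing directly, as you do, that the third class of positions is empty.
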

\begin{proof}
For $q=2$, the results can be obtained from \cite[Lemma
2]{Eitan19}, so we only consider $q>2$.

Let $S=\text{supp}(\bm u-\bm u')$. For any $\bm v\in\Sigma_q^{n}$,
let $D_1=\{\ell\in S: v_\ell=u_\ell\}$, $D_2=\{\ell\in S:
v_\ell=u'_\ell\}$ and $D_3=\{\ell\in [n]\backslash S: v_\ell\neq
u_\ell\}$. Then $D_1,D_2,D_3$ are mutually disjoint. Clearly,
$d_{\text{H}}(\bm v,\bm u)=d-|D_1|+|D_3|$ and $d_{\text{H}}(\bm
v,\bm u')=d-|D_2|+|D_3|$, so $\bm v\in B^{\text{S}}_{s}(\bm u,\bm
u')$ if and only if $d-|D_1|+|D_3|\leq s$ and $d-|D_2|+|D_3|\leq
s$. Therefore, if $\bm v\in B^{\text{S}}_{s}(\bm u,\bm u')$, then
we can obtain: 1) $d-|D_1|\leq s-|D_3|\leq s$; 2) $d-|D_2|\leq
s-|D_3|\leq s$; and 3) $|D_3|\leq
s-\max\{d-|D_1|,d-|D_2|\}=s-d+\min\{|D_1|,|D_2|\}$. We need to
consider the following two cases.

Case 1: $1\leq d\leq s$. Then conditions 1) and 2) are always
satisfied. Note that $D_1\cup D_2\subseteq S$, so we have $0\leq
|D_1|\leq d$ and $0\leq |D_2|\leq d-|D_1|$. Thus, in this case,
from conditions 1)$-$3), we can obtain: 4) $0\leq |D_1|\leq d$; 5)
$0\leq |D_2|\leq d-|D_1|$; and 6) $|D_3|\leq
s-d+\min\{|D_1|,|D_2|\}$. Conversely, it is easy to verify that if
$D_1,D_2,D_3$ satisfy conditions 4)$-$6), then $\bm v\in
B^{\text{S}}_{s}(\bm u,\bm u')$. Let $i=|D_1|$, $j=|D_2|$ and
$k=|D_3|$. Then we can obtain $|B^{\text{S}}_{s}(\bm u,\bm
u')|=\xi^{q,n}_{d,s}$.

Case 2: $s+1\leq d\leq 2s$. From conditions 1) and 2), we have
$d-s\leq|D_i|$, $i=1,2$. Note that $D_1\cup D_2\subseteq S$, so
$|D_1|+|D_2|\leq d$, which implies $|D_1|\leq d-|D_2|\leq s$ and
$|D_2|\leq d-|D_1|$. Thus, in this case, from conditions 1)$-$3),
we can obtain: 7) $d-s\leq |D_1|\leq s$; 85) $d-s\leq |D_2|\leq
d-|D_1|$; and 9) $|D_3|\leq s-d+\min\{|D_1|,|D_2|\}$. Conversely,
it is easy to verify that if $D_1,D_2,D_3$ satisfy conditions
7)$-$9), then $\bm v\in B^{\text{S}}_{s}(\bm u,\bm u')$. Let
$i=|D_1|$, $j=|D_2|$ and $k=|D_3|$. Then we can also obtain
$|B^{\text{S}}_{s}(\bm u,\bm u')|=\xi^{q,n}_{d,s}$.
\end{proof}


In this paper, we consider $s=2$. By \eqref{eq-lmd-0s}, we can
obtain
\begin{align}\label{xxx-SBN}
\xi^{q,n-1}_{0,2}
=\frac{(q-1)^2}{2}n^2-\frac{(3q-5)(q-1)}{2}n+(q^2-3q+3).\end{align}
Moreover, for $q\geq 2$, by \eqref{eq-lmd-ds2} and
\eqref{eq-lmd-ds3}, we can obtain\footnote{When $q=2$, by
\eqref{eq-lmd-ds2}, we can obtain $\xi^{2,n-1}_{d,2}=2(n-1)$ for
$d\in\{1,2\}$ and $\xi^{2,n-1}_{d,2}=6$ for $d\in\{3,4\}$. It is
easy to verify that for $q=2$, $\xi^{2,n-1}_{d,2}$ can also be
expressed as \eqref{xxp-case2}.}
\begin{equation}\label{xxp-case2}
\xi^{q,n-1}_{d,2}=\!\left\{\!\begin{aligned}
&q(q-1)n-2q^2+3q, ~~~\!~~~d=1,\\
&2(q-1)n+q^2-6q+6, \!~~d=2,\\
&6q-6, ~~~~~~~~~~~~~~~~~\!\!~~~~~~~d=3,\\
&6, ~~~~~~~~~~~~~~~~~~~~~\!\!~~~~~~~~\!~~d=4.
\end{aligned}\right.
\end{equation} It is easy to verify that
$\xi^{q,n-1}_{d,2}<\xi^{q,n-1}_{d-1,2}$ for $n>3$ and
$d\in\{1,2,3,4\}$. The following lemma gives a specific
description of the sequences in $B^{\text{S}}_{2}(\bm u,\bm u')$
for any $\bm u,\bm u'\in\Sigma_q^{n-1}$.

\begin{lem}\label{lem-Cap-SE}
Suppose $\bm u,\bm u'\in\Sigma_q^{n-1}$.
\begin{enumerate}
\item [1)] If $\text{supp}(\bm u-\bm u')=\{i_1\}$, then $\bm u=\bm
wu_{i_1}\bm w'$ and $\bm u'=\bm wu'_{i_1}\bm w'$ for some $\bm
w\in\Sigma_q^{i_1-1}$ and $\bm w'\in\Sigma_q^{n-1-i_1}$ and
$u_{i_1}\neq u'_{i_1}$. Then for any $\bm v\in\Sigma_q^{n-1}$,
$\bm v\in B^{\text{S}}_{2}(\bm u,\bm u')$ if and only if it is of
the form $\bm v=[\bm w]\!~v\!~[\bm w']$, where $[\bm
w]\in\Sigma_q^{i_1-1}$ and $[\bm w']\in\Sigma_q^{n-1-i_1}$, such
that $v\in\Sigma_q$ and $[\bm w]\!~[\bm w']$ can be obtained from
$\bm w\bm w'$ by at most one substitution, where $[\bm w]\!~[\bm
w']$ is the concatenation of the two sequences $[\bm w]$ and $[\bm
w']$.

\item [2)] If $\text{supp}(\bm u-\bm u')=\{i_1,i_2\}$ and
$i_1<i_2$, then we can write $\bm u=\bm w_1u_{i_1}\bm
w_2u_{i_2}\bm w_3$ and $\bm u'=\bm w_1u'_{i_1}\bm w_2u'_{i_2}\bm
w_3$, where $u_{i_1}\neq u'_{i_1}$ and $u_{i_2}\neq u'_{i_2}$,
such that $\bm w_1\in\Sigma_q^{i_1-1}$, $\bm
w_2\in\Sigma_q^{i_2-i_1-1}$ and $\bm w_3\in\Sigma_q^{n-1-i_2}$.
For any $\bm v\in\Sigma_q^{n-1}$, $\bm v\in B^{\text{S}}_{2}(\bm
u,\bm u')$ if and only if it is of the form $\bm v=[\bm w_1]v[\bm
w_2]v'\!~[\bm w_3]$, where $[\bm w_1]\in\Sigma_q^{i_1-1}$, $[\bm
w_2]\in\Sigma_q^{i_2-i_1-1}$ and $[\bm w_3]\in\Sigma_q^{n-1-i_2}$,
such that one of the following two conditions hold: (i) $[\bm
w_1]\!~[\bm w_2]\!~[\bm w_3]=\bm w_1\bm w_2\bm w_3$ and
$v\!~v'\in\Sigma_q^2$; (ii) $[\bm w_1]\!~[\bm w_2]\!~[\bm w_3]$ is
obtained from $\bm w_1\bm w_2\bm w_3$ by exactly one substitution
and $vv'\in\{u_{i_1}u'_{i_2}, u'_{i_1}u_{i_2}\}$, where $[\bm
w_1]\!~[\bm w_2]\!~[\bm w_3]$ is the concatenation of $[\bm w_1]$,
$[\bm w_2]$ and $[\bm w_3]$. Clearly, conditions (i) and (ii) are
exclusive.

\item [3)] If  $\text{supp}(\bm u-\bm u')=\{i_1,i_2,i_3\}$ such
that $i_1<i_2<i_3$, then we can write $\bm u=\bm w_1u_{i_1}\bm
w_2u_{i_2}\bm w_3u_{i_3}\bm w_4$ and $\bm u'=\bm w_1u'_{i_1}\bm
w_2u'_{i_2}\bm w_3u'_{i_3}\bm w_4$, where $u_{i_\ell}\neq
u'_{i_\ell}$, $\ell=1,2,3$, such that $\bm
w_1\in\Sigma_q^{i_1-1}$, $\bm w_2\in\Sigma_q^{i_2-i_1-1}$, $\bm
w_3\in\Sigma_q^{i_3-i_2-1}$ and $\bm w_4\in\Sigma_q^{n-1-i_3}$.
For any $\bm v\in\Sigma_q^{n-1}$, $\bm v\in B^{\text{S}}_{2}(\bm
u,\bm u')$ if and only if it is of the form $\bm v=\bm
w_1v_{i_1}\bm w_2v_{i_2}\bm w_3v_{i_3}\bm w_4$ such that
$(v_{i_1},v_{i_2},v_{i_3})\in \{(\Sigma_q,u_{i_2},u'_{i_3})$,
$(\Sigma_q,u'_{i_2},u_{i_3})$, $(u_{i_1},\Sigma_q,u'_{i_3})$,
$(u'_{i_1},\Sigma_q,u_{i_3})$, $(u_{i_1},u'_{i_2},\Sigma_q)$,
$(u'_{i_1},u_{i_2},\Sigma_q)\}$, where
$(\Sigma_q,u_{i_2},u'_{i_3})=\{(v,u_{i_2},u'_{i_3}):v\in\Sigma_q\}$
and the other notations are defined similarly.

\item [4)] If  $\text{supp}(\bm u-\bm u')=\{i_1,i_2,i_3,i_4\}$
such that $i_1<i_2<i_3<i_4$, then we can write $\bm u=\bm
w_1u_{i_1}\bm w_2u_{i_2}\bm w_3u_{i_3}\bm w_4u_{i_4}\bm w_5$ and
$\bm u'=\bm w_1u'_{i_1}\bm w_2u'_{i_2}\bm w_3u'_{i_3}\bm
w_4u'_{i_4}\bm w_5$, where $u_{i_\ell}\neq u'_{i_\ell}$,
$\ell=1,2,3,4$, such that $\bm w_1\in\Sigma_q^{i_1-1}$, $\bm
w_2\in\Sigma_q^{i_2-i_1-1}$, $\bm w_3\in\Sigma_q^{i_3-i_2-1}$,
$\bm w_4\in\Sigma_q^{i_4-i_3-1}$ and $\bm
w_5\in\Sigma_q^{n-1-i_4}$. For any $\bm v\in\Sigma_q^{n-1}$, $\bm
v\in B^{\text{S}}_{2}(\bm u,\bm u')$ if and only if it is of the
form $\bm v=\bm w_1v_{i_1}\bm w_2v_{i_2}\bm w_3v_{i_3}\bm
w_4v_{i_4}\bm w_5$ such that $v_{i_1}v_{i_2}v_{i_3}v_{i_4}\in
\{u_{i_1}u_{i_2}u'_{i_3}u'_{i_4}$,
$u_{i_1}u'_{i_2}u_{i_3}u'_{i_4}$,
$u_{i_1}u'_{i_2}u'_{i_3}u_{i_4}$,
$u'_{i_1}u_{i_2}u_{i_3}u'_{i_4}$,
$u'_{i_1}u_{i_2}u'_{i_3}u_{i_4}$,
$u'_{i_1}u'_{i_2}u_{i_3}u_{i_4}\}$.
\end{enumerate}
\end{lem}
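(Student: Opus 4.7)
The core idea is the same position partition used in Lemma~\ref{lem-sub-int-size}, enriched with the actual symbol values. Let $S=\text{supp}(\bm u-\bm u')$, so $|S|=d\in\{1,2,3,4\}$. For any candidate $\bm v\in\Sigma_q^{n-1}$, classify each coordinate $\ell\in[n-1]$ according to how $v_\ell$ relates to $u_\ell$ and $u'_\ell$: let $D_1=\{\ell\in S: v_\ell=u_\ell\}$, $D_2=\{\ell\in S: v_\ell=u'_\ell\}$, $D_4=\{\ell\in S: v_\ell\notin\{u_\ell,u'_\ell\}\}$, and $D_3=\{\ell\in[n-1]\setminus S:v_\ell\neq u_\ell\}$. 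These are pairwise disjoint with $|D_1|+|D_2|+|D_4|=d$, and a direct count gives the two key identities
\[
d_{\text{H}}(\bm v,\bm u)=|D_2|+|D_4|+|D_3|,\qquad d_{\text{H}}(\bm v,\bm u')=|D_1|+|D_4|+|D_3|.
\]
Thus $\bm v\in B^{\text{S}}_2(\bm u,\bm u')$ if and only if both sums are $\leq 2$.

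I would then handle the four parts of the lemma by inspecting the admissible $(|D_1|,|D_2|,|D_3|,|D_4|)$-profiles subject to the two inequalities, and reading off, for each profile, which symbols are forced at the positions in $S$ and how many changes are allowed on $[n-1]\setminus S$. For Part~1, $d=1$, the two inequalities reduce to $|D_3|\leq 1+\min(|D_1|,|D_2|)$; since at most one of $|D_1|,|D_2|,|D_4|$ is $1$, the budget $|D_3|\leq 1$ holds uniformly, and the single position $i_1$ is free, matching the description $\bm v=[\bm w]\,v\,[\bm w']$ with one substitution in $\bm w\bm w'$. For Part~2, $d=2$, adding the two inequalities yields $|D_4|+|D_3|\leq 1$ together with $|D_3|\leq\min(|D_1|,|D_2|)$. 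Splitting on whether $|D_3|=0$ or $|D_3|=1$ gives exactly the two alternatives (i) and (ii): if $|D_3|=1$ then $|D_1|=|D_2|=1$ and $|D_4|=0$, forcing $v_{i_1}v_{i_2}\in\{u_{i_1}u'_{i_2},u'_{i_1}u_{i_2}\}$; if $|D_3|=0$ then the two $S$-positions are unconstrained in $\Sigma_q$ up to the global distance bound, which is automatic.

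For Parts~3 and~4 the same summing trick gives $|D_4|+2|D_3|\leq 4-d$, which forces $|D_3|=0$ and restricts $|D_4|$ sharply: $|D_4|\leq 1$ when $d=3$ and $|D_4|=0$ when $d=4$. In the $d=3$ case, either $|D_4|=0$ with $\{|D_1|,|D_2|\}=\{1,2\}$ (six choices of where the lone $D_2$-position, resp.\ lone $D_1$-position, sits), or $|D_4|=1$ with $|D_1|=|D_2|=1$; grouping the symmetric pair where the free position is the one in $D_4$ with the two determined cases where that same position is forced to $u$ or $u'$ yields exactly the six triples $(\Sigma_q,u_{i_2},u'_{i_3}),\ldots,(u'_{i_1},u_{i_2},\Sigma_q)$ listed in the statement. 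In the $d=4$ case, $|D_1|=|D_2|=2$ leaves only the $\binom{4}{2}=6$ choices of which pair of positions agrees with $\bm u$ versus $\bm u'$, and these are exactly the six patterns enumerated.

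The main (and only) obstacle is the bookkeeping: after the distance identities are in hand, the argument reduces to elementary counting, and care is needed primarily in Part~3 to verify that the three distinct profile types collapse into the six $\Sigma_q$-triples displayed, and in Part~2 to check that conditions (i) and (ii) are mutually exclusive (which follows from $|D_3|\in\{0,1\}$ being forced oppositely by the two branches). The converse direction in each part is routine once the distance identities above are applied to any $\bm v$ of the prescribed form.
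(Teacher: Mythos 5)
Your proof is correct and takes the natural route: the paper itself dismisses this lemma with ``the four claims are easily verified,'' and your coordinate partition $D_1,D_2,D_3,D_4$ together with the identities $d_{\text{H}}(\bm v,\bm u)=|D_2|+|D_4|+|D_3|$ and $d_{\text{H}}(\bm v,\bm u')=|D_1|+|D_4|+|D_3|$ is exactly the device used in the paper's proof of Lemma \ref{lem-sub-int-size}, so the verification goes through. One small slip in Part 2: summing the two inequalities gives $|D_4|+2|D_3|\le 2$, not $|D_4|+|D_3|\le 1$ (the profile $|D_4|=2$, $|D_3|=0$, where each of $v_{i_1},v_{i_2}$ equals neither the $\bm u$- nor the $\bm u'$-symbol, is admissible and falls under condition (i)); this does not affect your conclusion, since your $|D_3|=0$ branch correctly leaves both $S$-positions unconstrained, but the displayed inequality should be corrected.
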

\begin{proof}
Note that $\bm v\in B^{\text{S}}_{2}(\bm u,\bm u')$ if and only if
$d_{\text{H}}(\bm v,\bm u)\leq 2$ and $d_{\text{H}}(\bm v,\bm
u')\leq 2$. The four claims are easily verified.
\end{proof}

\subsection{Some observations and more auxiliary results}

\emph{Observation 1}: Given any $\bm x,\bm x'\in\Sigma_q^n$, to
simplify the notations, for any $j,j'\in[n]$, we denote
$E^{j}_{j'}=B^{\text{S}}_{2}\big(x_{[n]\backslash\{j\}},
x'_{[n]\backslash\{j'\}}\big)$. Moreover, let
$E^{j}=\bigcup_{j'\in[n]}E^{j}_{j'}$. Then we have
\begin{align}\label{eq1-Obsv-1}
B^{\text{DS}}_{1,s}(\bm x,\bm x')
=\bigcup_{j,j'\in[n]}B^{\text{S}}_{s}\big(x_{[n]\backslash\{j\}},
x'_{[n]\backslash\{j'\}}\big)=\bigcup_{j\in[n]}E^{j}.\end{align}
We can simplify \eqref{eq1-Obsv-1} as follows. Let $I_i,
i=1,2,\cdots,m,$ be mutually disjoint intervals such that
$\bigcup_{i=1}^mI_i=[n]$ and each $x_{I_i}$ is contained in a run
of $\bm x$. Let $I'_{i'}, i'=1,2,\cdots,m',$ be mutually disjoint
intervals such that $\bigcup_{i'=1}^{m'}I'_{i'}=[n]$ and each
$x_{I'_{i'}}$ is contained in a run of $\bm x'$. For each $i\in
[m]$ and $i'\in[m']$, arbitrarily fix a $j_i\in I_i$ and a
$j'_{i'}\in I'_{i'}$. Then we have $E^{j_i}
=\bigcup_{i'=1}^{m'}E^{j_i}_{j'_{i'}}=\bigcup_{i'=1}^{m'}
B^{\text{S}}_{s}\big(x_{[n]\backslash\{j_i\}},
x'_{[n]\backslash\{j'_{i'}\}}\big)$ and
\begin{align}\label{eq2-Obsv-1}
B^{\text{DS}}_{1,s}(\bm x,\bm x')=\bigcup_{i=1}^m
E^{j_i}=\bigcup_{i=1}^m\bigcup_{i'=1}^{m'}E^{j_i}_{j'_{i'}}.
\end{align}

\emph{Observation 2}: Given any $\bm x,\bm x'\in\Sigma_q^n$, for
any $j,j'\in[n]$, if $j\leq j'$, we have
\begin{align*}
d_{\text{H}}(x_{[n]\backslash\{j\}},
x'_{[n]\backslash\{j'\}}\big)=\big|\text{supp}(\bm x-\bm
x')\backslash[j,j']\big|+\Delta_{[j,j']}\end{align*} where
$\Delta_{[j,j']}=|\{i\in[j,j'-1]: x'_{i}\neq x_{i+1}\}|$ is the
number of $i\in[j,j'-1]$ such that $x'_{i}\neq x_{i+1}$; if
$j>j'$, we have
\begin{align*}d_{\text{H}}(x_{[n]\backslash\{j\}},
x'_{[n]\backslash\{j'\}}\big)=\big|\text{supp}(\bm x-\bm
x')\backslash[j',j]\big|+\Delta'_{[j',j]}\end{align*} where
$\Delta'_{[j',j]}=|\{i\in[j',j-1]: x_{i}\neq x'_{i+1}\}|$.

\begin{rem}\label{rem-spc-Obv2}
As a special case of Observation 2, for $j,j'\in[n]$ such that
$x_{j}$ is in the $\lambda$th run of $\bm x$ and $x_{j'}$ is in
the $\lambda'$th run of $\bm x$, then
$d_{\text{H}}(x_{[n]\backslash\{j\}},
x_{[n]\backslash\{j'\}}\big)=|\lambda'-\lambda|$. This simple fact
will also be used in our discussions.
\end{rem}


\begin{lem}\label{lem-dtn-jpj-run}
Suppose $\bm x,\bm x'\in\Sigma_q^n$ and $\text{supp}(\bm x-\bm
x')=\{i_1,i_2,\cdots,i_d\}$ such that $d\geq 2$ and
$i_1<i_2<\cdots<i_d$. Let $\{j_1,j_2,\cdots,j_m\}\subseteq [n]$ be
such that $j_{i-1}<j_i$ and $x_{[j_{i-1}+1,j_i]}$, $i=1,2,\cdots,
m$, are all runs of $\bm x$, and
$\{j'_1,j'_2,\cdots,j'_{m'}\}\subseteq [n]$ be such that
$j'_{i'-1}<j'_{i'}$ and $x'_{[j'_{i'-1}+1,j'_{i'}]}$,
$i'=1,2,\cdots, m'$, are all runs of $\bm x'$, where we set
$j_0=j'_0=0$. Moreover, we set $i_0=0$ and $i_{d+1}=n+1$. The
following statements hold:
\begin{enumerate}
 \item[1)] For any $d'<d$, there is no pair
 $(j_{i},j'_{i'})$ such that $\text{supp}(\bm x-\bm x')\cap
 \big([j_{i},j'_{i'}]\cup[j'_{i'},j_{i}]\big)=\emptyset$
 and $d_{\text{H}}\big(x_{[n]\backslash\{j_{i}\}},
 x'_{[n]\backslash\{j'_{i'}\}}\big)=d'$.
 \item[2)] For any $j_{i},j'_{i'}$ such that $\text{supp}(\bm x-\bm x')\cap
 \big([j_{i},j'_{i'}]\cup[j'_{i'},j_{i}]\big)=\emptyset$,
 $d_{\text{H}}\big(x_{[n]\backslash\{j_{i}\}},
 x'_{[n]\backslash\{j'_{i'}\}}\big)=d$ if and only if
 $j_{i}=j'_{i'}$. Hence, there are at most $n-d$ pairs
 $(j_{i},j'_{i'})$ such that $\text{supp}(\bm x-\bm x')\cap
 \big([j_{i},j'_{i'}]\cup[j'_{i'},j_{i}]\big)=\emptyset$
 and $d_{\text{H}}\big(x_{[n]\backslash\{j_{i}\}},
 x'_{[n]\backslash\{j'_{i'}\}}\big)=d$.
 \item[3)] For any $d'>d$, there are at most $n-d$ pairs $(j_{i},j'_{i'})$
 such that $j_{i}<j'_{i'}$, $\text{supp}(\bm x-\bm x')\cap
 \big[j_{i},j'_{i'}]=\emptyset$ and
 $d_{\text{H}}\big(x_{[n]\backslash\{j_{i}\}},
 x'_{[n]\backslash\{j'_{i'}\}}\big)=d'$;
 there are at most $n-d$ pairs $(j_{i},j'_{i'})$
 such that $j_{i}>j'_{i'}$, $\text{supp}(\bm x-\bm x')\cap
 [j'_{i'},j_{i}]=\emptyset$ and
 $d_{\text{H}}\big(x_{[n]\backslash\{j_{i}\}},
 x'_{[n]\backslash\{j'_{i'}\}}\big)=d'$.
 \item[4)] For any $\lambda,\lambda'\in[d]$ such that
 $\lambda\leq\lambda'$, let $S_{[i_\lambda,i_{\lambda'}]}=
 \left|\text{supp}(\bm x-\bm x')\backslash[i_\lambda,i_{\lambda'}]\right|$. Then for any
 $d'<S_{[i_\lambda,i_{\lambda'}]}+\Delta_{[i_\lambda,i_{\lambda'}]}$,
 there is no pair $(j_{i},j'_{i'})$ such that $i_{\lambda-1}<j_{i}\leq i_{\lambda}
 \leq i_{\lambda'}\leq j'_{i'}<i_{\lambda'+1}$ and $d_{\text{H}}(x_{[n]\backslash\{j_i\}},
 x'_{[n]\backslash\{j'_{i'}\}}\big)=d'$; for any
 $d'\geq S_{[i_\lambda,i_{\lambda'}]}+\Delta_{[i_\lambda,i_{\lambda'}]}$,
 there are at most $d'-S_{[i_\lambda,i_{\lambda'}]}-\Delta_{[i_\lambda,i_{\lambda'}]}+1$ pairs
 $(j_{i},j'_{i'})$ such that $i_{\lambda-1}<j_{i}\leq
 i_{\lambda}\leq i_{\lambda'}\leq j'_{i'}<i_{\lambda'+1}$
 and $d_{\text{H}}(x_{[n]\backslash\{j_i\}},
 x'_{[n]\backslash\{j'_{i'}\}}\big)=d'$, where $\Delta_{[i_\lambda,i_{\lambda'}]}$ is defined as in Observation 2.
 \item[5)] Let $\lambda,\lambda'$ and $S_{[i_\lambda,i_{\lambda'}]}$ be as in 4).
 Then for any $d'<S_{[i_\lambda,i_{\lambda'}]}+\Delta'_{[i_\lambda,i_{\lambda'}]}$,
 there is no pair $(j_{i},j'_{i'})$ such that $i_{\lambda-1}<j'_{i'}\leq i_{\lambda}
 \leq i_{\lambda'}\leq j_{i}<i_{\lambda'+1}$ and $d_{\text{H}}(x_{[n]\backslash\{j_i\}},
 x'_{[n]\backslash\{j'_{i'}\}}\big)=d'$; for any
 $d'\geq S_{[i_\lambda,i_{\lambda'}]}+\Delta'_{[i_\lambda,i_{\lambda'}]}$, there are at most
 $d'-S_{[i_\lambda,i_{\lambda'}]}-\Delta'_{[i_\lambda,i_{\lambda'}]}+1$ pairs
 $(j_{i},j'_{i'})$ such that $i_{\lambda-1}<j'_{i'}\leq i_{\lambda}
 \leq i_{\lambda'}\leq j_{i}<i_{\lambda'+1}$
 and $d_{\text{H}}(x_{[n]\backslash\{j_i\}},
 x'_{[n]\backslash\{j'_{i'}\}}\big)=d'$, where
 $\Delta'_{[i_\lambda,i_{\lambda'}]}$ is defined as in Observation 2.
\end{enumerate}
\end{lem}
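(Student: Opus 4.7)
The entire lemma is a combinatorial unpacking of Observation 2, which reduces every quantity $d_{\text{H}}\big(x_{[n]\backslash\{j_i\}},x'_{[n]\backslash\{j'_{i'}\}}\big)$ to an expression of the form $\bigl|\text{supp}(\bm x-\bm x')\setminus I\bigr|+\Delta_I$, where $I$ is the interval $[j_i,j'_{i'}]$ (and we use $\Delta'$ when $j_i>j'_{i'}$). My plan is to treat all five parts through this formula, splitting into the case where $I$ lies in a single ``gap'' between consecutive support positions (parts 1--3) and the case where $I$ straddles a block $[i_\lambda,i_{\lambda'}]$ of support positions (parts 4--5).

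For parts (1) and (2), assume WLOG $j_i\le j'_{i'}$. The hypothesis $\text{supp}(\bm x-\bm x')\cap[j_i,j'_{i'}]=\emptyset$ makes the first term equal to $d$, so $d_{\text{H}}=d+\Delta_{[j_i,j'_{i'}]}\ge d$, which gives part (1). For equality we need $\Delta_{[j_i,j'_{i'}]}=0$, i.e.\ $x'_k=x_{k+1}$ for every $k\in[j_i,j'_{i'}-1]$; combined with $x_k=x'_k$ on $[j_i,j'_{i'}]$ (from disjointness) this forces $x_k=x_{k+1}$ throughout $[j_i,j'_{i'}-1]$, contradicting that $j_i$ is a run-end of $\bm x$ unless $j_i=j'_{i'}$. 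The equal position must then avoid $\text{supp}(\bm x-\bm x')$, leaving at most $n-d$ candidates, which is the count stated in (2).

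For part (3) with $d'>d$, the same setup gives $\Delta_{[j_i,j'_{i'}]}=d'-d>0$, so $j_i\neq j'_{i'}$; WLOG $j_i<j'_{i'}$. Disjointness from the support forces both endpoints to lie in the same gap $(i_\ell,i_{\ell+1})$ of support positions, where $\bm x$ and $\bm x'$ coincide. I would bound the count by observing that $j_i\in[n]\setminus\text{supp}(\bm x-\bm x')$ already gives at most $n-d$ possibilities, and that within a gap $\Delta_{[j_i,j'_{i'}]}$ is monotone in both $j_i$ and $j'_{i'}$, so the pair is essentially recovered from $j_i$ (the second coordinate being pinned by the required $\Delta$-value together with the run structure of $\bm x=\bm x'$ in the gap). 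The symmetric statement for $j_i>j'_{i'}$ follows by exchanging the roles and using $\Delta'$.

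For parts (4) and (5) the interval $[j_i,j'_{i'}]$ contains $[i_\lambda,i_{\lambda'}]$, so the ``outside'' count in Observation 2 equals exactly $S_{[i_\lambda,i_{\lambda'}]}$, and the inclusion $[i_\lambda,i_{\lambda'}-1]\subseteq[j_i,j'_{i'}-1]$ gives $\Delta_{[j_i,j'_{i'}]}\ge \Delta_{[i_\lambda,i_{\lambda'}]}$, yielding the lower bound $d_{\text{H}}\ge S_{[i_\lambda,i_{\lambda'}]}+\Delta_{[i_\lambda,i_{\lambda'}]}$ that rules out smaller $d'$. For the count at a given $d'$, I would decompose the excess $\Delta_{[j_i,j'_{i'}]}-\Delta_{[i_\lambda,i_{\lambda'}]}$ into a left contribution $\alpha$ from $[j_i,i_\lambda-1]$ and a right contribution $\beta$ from $[i_{\lambda'},j'_{i'}-1]$, and show that as $j_i$ sweeps leftward through the run of $\bm x$ containing $i_\lambda$, $\alpha$ is a nondecreasing step function taking each value at most once, and similarly for $\beta$ as $j'_{i'}$ sweeps rightward. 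The number of $(\alpha,\beta)$ pairs summing to the required value $d'-S_{[i_\lambda,i_{\lambda'}]}-\Delta_{[i_\lambda,i_{\lambda'}]}$ is therefore at most $d'-S_{[i_\lambda,i_{\lambda'}]}-\Delta_{[i_\lambda,i_{\lambda'}]}+1$, which is the stated bound; part (5) is the mirror image using $\Delta'$. The main obstacle is this last bookkeeping step: carefully tracking how $\Delta_{[j_i,j'_{i'}]}$ changes as the endpoints slide within the runs of $\bm x$ and $\bm x'$ adjacent to $i_\lambda$ and $i_{\lambda'}$, and extracting the precise ``$+1$'' factor.
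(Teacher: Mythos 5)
Your proposal is correct and follows essentially the same route as the paper: everything is reduced to Observation 2's formula $d_{\text{H}}=\big|\text{supp}(\bm x-\bm x')\backslash I\big|+\Delta_I$, with parts 1)--3) handled by the strict growth of $\Delta$ across run-ends inside a support-free gap (so each $j_i$ admits at most one matching $j'_{i'}$), and parts 4)--5) by splitting the excess of $\Delta$ over $\Delta_{[i_\lambda,i_{\lambda'}]}$ into left and right run-boundary counts that each determine their endpoint, giving the $+1$ count. The only cosmetic difference is that the paper establishes the ``at most one $j'_{i'}$ per $j_i$'' step in parts 2) and 3) by a contradiction argument on runs rather than by your monotonicity phrasing; the content is identical.
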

\begin{proof}
Note that for $j_{i}\leq j'_{i'}$, $\text{supp}(\bm x-\bm x')\cap
\big([j_{i},j'_{i'}]\cup[j'_{i'},j_{i}]\big)=\emptyset$ if and
only if $\text{supp}(\bm x-\bm x')\cap [j_{i},j'_{i'}]=\emptyset$;
for $j_{i}>j'_{i'}$, $\text{supp}(\bm x-\bm x')\cap
\big([j_{i},j'_{i'}]\cup[j'_{i'},j_{i}]\big)=\emptyset$ if and
only if $\text{supp}(\bm x-\bm x')\cap [j'_{i'},j_{i}]=\emptyset$.
Since $\text{supp}(\bm x-\bm x')=\{i_1,i_2,\cdots,i_d\}$, then for
both cases, $\text{supp}(\bm x-\bm x')\cap
\big([j_{i},j'_{i'}]\cup[j'_{i'},j_{i}]\big)=\emptyset$ if and
only if $j_{i},j'_{i'}\in[i_{\lambda-1}+1,i_\lambda-1]$ for some
$\lambda\in[d+1]$.

1) If $j_{i}\leq j'_{i'}$ and $\text{supp}(\bm x-\bm x')\cap
\big([j_{i},j'_{i'}]\cup[j'_{i'},j_{i}]\big)=\emptyset$, we have
$|\text{supp}(\bm x-\bm x')\backslash[j_{i},j'_{i'}]|=d$, by
Observation 2, $d_{\text{H}}\big(x_{[n]\backslash\{j_i\}},
x'_{[n]\backslash\{j'_{i'}\}}\big)\geq |\text{supp}(\bm x-\bm
x')\backslash[j_{i},j'_{i'}]|=d$. Hence, for any $d'<d$, there is
no pair $(j_{i},j'_{i'})$ such that
$d_{\text{H}}\big(x_{[n]\backslash\{j_{i}\}},
x'_{[n]\backslash\{j'_{i'}\}}\big)=d'$. For $j_{i}>j'_{i'}$, the
proof is similar.

2) Since $\text{supp}(\bm x-\bm x')\cap
\big([j_{i},j'_{i'}]\cup[j'_{i'},j_{i}]\big)=\emptyset$, we have
$j_{i},j'_{i'}\in[i_{\lambda-1}+1,i_\lambda-1]$ for some
$\lambda\in[d+1]$. If $j_{i}=j'_{i'}$, by Observation 2, it is
easy to see that $d_{\text{H}}\big(x_{[n]\backslash\{j_{i}\}},
x'_{[n]\backslash\{j'_{i'}\}}\big)=d$. Conversely, suppose
$d_{\text{H}}\big(x_{[n]\backslash\{j_{i}\}},
x'_{[n]\backslash\{j'_{i'}\}}\big)=d$. If $j_{i}<j'_{i'}$, then
$|\text{supp}(\bm x-\bm x')\backslash[j_{i},j'_{i'}]|=d~($noticing
that $j_{i},j'_{i'}\in[i_{\lambda-1}+1,i_\lambda-1])$, so by
Observation 2, we must have $\Delta_{[j_{i},j'_{i'}]}=0$, i.e.,
$x'_{i}=x_{i+1}$ for all $i\in[j_{i},j'_{i'}-1]$. Moreover, by the
definition of $\text{supp}(\bm x-\bm x')$, we have $x_{i}=x'_{i}$
for all $i\in[i_{\lambda-1}+1,i_\lambda-1]$. Therefore, we can
obtain $x_{i}=x'_{i}=x_{i+1}=x'_{i+1}$ for all
$i\in[j_{i},j'_{i'}-1]$, which contradicts to the definition of
$j_i~($i.e., $x_{[j_{i-1}+1,j_{i}]}$ is a run of $\bm x)$. If
$j_{i}>j'_{i'}$, then similarly we can find
$x_{i}=x'_{i}=x_{i+1}=x'_{i+1}$ for all $i\in[j'_{i'},j_{i}-1]$,
which contradicts to the definition of $j_{i'}~($i.e.,
$x_{[j'_{i'-1}+1,j'_{i'}]}$ is a run of $\bm x')$. Thus, we must
have $j_{i}=j'_{i'}$. Note that $j_{i}\notin \text{supp}(\bm x-\bm
x')$, so the number of such pairs is at most $n-d$.

3) It suffices to prove that for each $\lambda\in[d+1]$ and each
$j_i\in[i_{\lambda-1}+1,i_{\lambda}-1]$, there is at most one
$j'_{i'}$ such that $j_{i}<j'_{i'}<i_{\lambda}$ and
$d_{\text{H}}\big(x_{[n]\backslash\{j_{i}\}},
x'_{[n]\backslash\{j'_{i'}\}}\big)=d'$. We can prove this by
contradiction. Suppose $i',i''\in[m']$ such that
$j_i<j'_{i'}<j'_{i''}<i_{\lambda}$ and
$d_{\text{H}}\big(x_{[n]\backslash\{j_{i}\}},
x'_{[n]\backslash\{j'_{i'}\}}\big)=d_{\text{H}}\big(x_{[n]\backslash\{j_{i}\}},
x'_{[n]\backslash\{j'_{i''}\}}\big)=d'$. Clearly, we have
$\big|\text{supp}(\bm x-\bm
x')\backslash[j_i,j'_{i'}]\big|=d=\big|\text{supp}(\bm x-\bm
x')\backslash[j_i,j'_{i''}]\big|$. By Observation 2, we can obtain
$\Delta_{[j_{i}, j'_{i'}]}=\Delta_{[j_{i}, j'_{i''}]}=d'-d$. Then
by the definition of $\Delta_{[j_{i}, j'_{i'}]}$ and
$\Delta_{[j_{i}, j'_{i''}]}$, we must have $x'_{i}=x_{i+1}$ for
all $i\in[j'_{i'},j'_{i''}-1]$. On the other hand, as
$i_{\lambda-1}<j_i<j'_{i'}<j'_{i''}<i_{\lambda}$, we have
$x_{i}=x'_{i}$ for all $i\in[j'_{i'},j'_{i''}]$. Therefore, we can
obtain
$x_{j'_{i'}}=x'_{j'_{i'}}=x_{j'_{i'}+1}=x'_{j'_{i'}+1}=\cdots=x_{j'_{i''}}=x'_{j'_{i''}}$,
which contradicts to the assumption that $x_{[j'_{i''-1},
j'_{i''}]}$ is a run of $\bm x'$. Thus, for each $\lambda\in[d+1]$
and each $j_i\in[i_{\lambda-1}+1,i_{\lambda}-1]$, there is at most
one $j'_{i'}$ such that $j_{i}<j'_{i'}<i_{\lambda}$ and
$d_{\text{H}}\big(x_{[n]\backslash\{j_{i}\}},
x'_{[n]\backslash\{j'_{i'}\}}\big)=d'$. In total, there are at
most $n-d$ pairs $(j_{i},j'_{i'})$ such that $j_{i}<j'_{i'}$,
$\text{supp}(\bm x-\bm x')\cap [j'_{i'},j_{i}]=\emptyset$ and
$d_{\text{H}}\big(x_{[n]\backslash\{j_{i}\}},
x'_{[n]\backslash\{j'_{i'}\}}\big)=d'$. For $j_{i}>j'_{i'}$, the
proof is similar.

4) By the definition of $\text{supp}(\bm x-\bm x')$, we have
$x_{[i_{\lambda-1}+1,i_{\lambda}-1]}=x'_{[i_{\lambda-1}+1,i_{\lambda}-1]}$
and
$x_{[i_{\lambda'}+1,i_{\lambda'+1}-1]}=x'_{[i_{\lambda'}+1,i_{\lambda'+1}-1]}$.
Clearly, for any $j_i\in[i_{\lambda-1}+1,i_{\lambda}]$ and
$j'_{i'}\in[i_{\lambda'},i_{\lambda'+1}-1]$, we can obtain
$|\text{supp}(\bm x-\bm x')\backslash[j_{i},j'_{i'}]|
=|\text{supp}(\bm x-\bm x')\backslash[i_{\lambda},j_{\lambda'}]|
=S_{[i_\lambda,i_{\lambda'}]}$ and
$\Delta_{[j_i,j'_{i'}]}=(\tau-i)+\Delta_{[i_\lambda,i_{\lambda'}]}+(i'-\tau')$,
where $\tau\in[m]$ be such that
$i_{\lambda}\in[j_{\tau-1}+1,j_{\tau}]$ and $\tau'\in[m']$ be such
that $i_{\lambda'}\in[j'_{\tau'-1}+1,j'_{\tau'}]$. By the
assumption, we have $\tau-i\geq 0$ and $i'-\tau'\geq 0$, so by
Observation 2, we have $d_{\text{H}}\big(x_{[n]\backslash\{j_i\}},
x'_{[n]\backslash\{j'_{i'}\}}\big)=|\text{supp}(\bm x-\bm
x')\backslash[j_{i},j'_{i'}]|+\Delta_{[j_i,j'_{i'}]}
=S_{[i_\lambda,i_{\lambda'}]}+(\tau-i)+\Delta_{[i_\lambda,i_{\lambda'}]}+(i'-\tau')
\geq
S_{[i_\lambda,i_{\lambda'}]}+\Delta_{[i_\lambda,i_{\lambda'}]}$.
Therefore, there is no such pair $(j_{i},j'_{i'})$ with
$d_{\text{H}}(x_{[n]\backslash\{j_i\}},
x'_{[n]\backslash\{j'_{i'}\}}\big)<S_{[i_\lambda,i_{\lambda'}]}+\Delta_{[i_\lambda,i_{\lambda'}]}$.
Moreover, for each $d'\geq
S_{[i_\lambda,i_{\lambda'}]}+\Delta_{[i_\lambda,i_{\lambda'}]}$,
if $d_{\text{H}}(x_{[n]\backslash\{j_i\}},
x'_{[n]\backslash\{j'_{i'}\}}\big)=d'$, then we have
$(\tau-i)+(i'-\tau')=d'-(S_{[i_\lambda,i_{\lambda'}]}+\Delta_{[i_\lambda,i_{\lambda'}]})$.
Note that there are at most
$d'-(S_{[i_\lambda,i_{\lambda'}]}+\Delta_{[i_\lambda,i_{\lambda'}]})+1$
pairs $(\tau-i,i'-\tau')$ such that $\tau-i\geq 0$, $i'-\tau'\geq
0$ and
$(\tau-i)+(i'-\tau')=d'-(S_{[i_\lambda,i_{\lambda'}]}+\Delta_{[i_\lambda,i_{\lambda'}]})$.
So, there are at most
$d'-(S_{[i_\lambda,i_{\lambda'}]}+\Delta_{[i_\lambda,i_{\lambda'}]})+1$
pairs $(j_i,j'_{i'})$ such that
$d_{\text{H}}\big(x_{[n]\backslash\{j_i\}},
x'_{[n]\backslash\{j'_{i'}\}}\big)=d'$.

5) The proof is similar to 4).
\end{proof}


\section{Intersection of error balls of sequences with
Hamming distance $d=2$}

In this section, we always assume that $\bm x,\bm x'\in\Sigma_q^n$
such that $\text{supp}(\bm x-\bm x')=\{i_1,i_2\}$ and $i_1<i_2$.
We will prove that $|B^{\text{DS}}_{1,2}(\bm x, \bm x')|\leq
(q^2-1)n^2-(3q^2+5q-5)n+O_q(1)$ and show that this bound is
achievable. Recall that according to the definition in Observation
1, for any $j,j'\in[n]$, we have
$E^{j}_{j'}=B^{\text{S}}_{2}\big(x_{[n]\backslash\{j\}},
x'_{[n]\backslash\{j'\}}\big)$. The following lemma will be used
to simplify the expression of $B^{\text{DS}}_{1,2}(\bm x,\bm x')$.

\begin{lem}\label{lem-d2-smpl}
Let $\{j_1,j_2,\cdots,j_m\}\subseteq [n]$ be such that
$j_{i-1}<j_i$ and $x_{[j_{i-1}+1,j_i]}$, $i=1,2,\cdots, m$, are
all runs of $\bm x$. The following statements hold.
\begin{enumerate}
 \item[1)] If $\ell\in[m]$ such that
 $[j_{\ell-2}+1,j_{\ell}]\cap\{i_1,i_2\}=\emptyset$, then
 $E^{j_{\ell}}_{j_{\ell-1}}\subseteq
 \big(E^{j_{\ell}}_{j_{\ell}}\cup E^{j_{\ell-1}}_{j_{\ell-1}}\big)$; if
 $\ell\in[m]$ such that
 $[j_{\ell-3}+1,j_{\ell}]\cap\{i_1,i_2\}=\emptyset$, then we have
 $E^{j_\ell}_{j_{\ell-2}}\subseteq
 \big(E^{j_{\ell}}_{j_{\ell}}\cup E^{j_{\ell-2}}_{j_{\ell-2}}\big)$.
 \item[2)] If $\ell\in[m]$ such that
 $[j_{\ell-1}+1,j_{\ell+1}]\cap\{i_1,i_2\}=\emptyset$, then
 $E^{j_{\ell}}_{j_{\ell+1}}\subseteq
 \big(E^{j_{\ell}}_{j_{\ell}}\cup E^{j_{\ell+1}}_{j_{\ell+1}}\big)$; if
 $\ell\in[m]$ such that
 $[j_{\ell-1}+1,j_{\ell+2}]\cap\{i_1,i_2\}=\emptyset$, then we have
 $E^{j_\ell}_{j_{\ell+2}}\subseteq
 \big(E^{j_{\ell}}_{j_{\ell}}\cup E^{j_{\ell+2}}_{j_{\ell+2}}\big)$.
\end{enumerate}
\end{lem}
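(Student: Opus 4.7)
The plan is to prove each inclusion by a short contradiction argument that exploits the tight Hamming-distance relationships among the four deletion-sequences involved.

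For part 1(a), set $\bm a=x_{[n]\setminus\{j_\ell\}}$, $\bm b=x_{[n]\setminus\{j_{\ell-1}\}}$, $\bm a'=x'_{[n]\setminus\{j_\ell\}}$ and $\bm b'=x'_{[n]\setminus\{j_{\ell-1}\}}$. The hypothesis $[j_{\ell-2}+1,j_\ell]\cap\{i_1,i_2\}=\emptyset$ says that $\bm x$ and $\bm x'$ coincide on $[j_{\ell-2}+1,j_\ell]$; in particular the two consecutive runs of $\bm x$ ending at $j_{\ell-1}$ and $j_\ell$ are also sub-blocks of runs of $\bm x'$. Using Remark \ref{rem-spc-Obv2} and Observation 2 I would record the pairwise distances $d_{\text{H}}(\bm a,\bm a')=d_{\text{H}}(\bm b,\bm b')=2$; $d_{\text{H}}(\bm a,\bm b)=d_{\text{H}}(\bm a',\bm b')=1$, with the unique disagreement at the coordinate $j_{\ell-1}$ where $\bm a,\bm a'$ carry $x_{j_{\ell-1}}$ and $\bm b,\bm b'$ carry $x_{j_{\ell-1}+1}$; and $d_{\text{H}}(\bm a,\bm b')=d_{\text{H}}(\bm b,\bm a')=3$. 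Now assume for contradiction that $\bm v\in E^{j_\ell}_{j_{\ell-1}}\setminus(E^{j_\ell}_{j_\ell}\cup E^{j_{\ell-1}}_{j_{\ell-1}})$, so $d_{\text{H}}(\bm v,\bm a),d_{\text{H}}(\bm v,\bm b')\leq 2$ while $d_{\text{H}}(\bm v,\bm a'),d_{\text{H}}(\bm v,\bm b)\geq 3$. The triangle inequality with $d_{\text{H}}(\bm a',\bm b')=1$ forces $d_{\text{H}}(\bm v,\bm a')=d_{\text{H}}(\bm v,\bm b')+1$, which can hold only if $v_{j_{\ell-1}}=\bm b'_{j_{\ell-1}}=x_{j_{\ell-1}+1}$; symmetrically, equality in $d_{\text{H}}(\bm v,\bm b)\leq d_{\text{H}}(\bm v,\bm a)+1$ forces $v_{j_{\ell-1}}=\bm a_{j_{\ell-1}}=x_{j_{\ell-1}}$. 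Since $x_{j_{\ell-1}}\neq x_{j_{\ell-1}+1}$, this is the desired contradiction.

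For part 1(b), replace $\bm b,\bm b'$ by $\bm c=x_{[n]\setminus\{j_{\ell-2}\}}$ and $\bm c'=x'_{[n]\setminus\{j_{\ell-2}\}}$, and use the stronger hypothesis $[j_{\ell-3}+1,j_\ell]\cap\{i_1,i_2\}=\emptyset$. A direct position-by-position check gives $d_{\text{H}}(\bm a,\bm c)=d_{\text{H}}(\bm a',\bm c')=2$, $d_{\text{H}}(\bm a,\bm a')=d_{\text{H}}(\bm c,\bm c')=2$, and crucially $d_{\text{H}}(\bm a,\bm c')=d_{\text{H}}(\bm c,\bm a')=4$. For $\bm v\in E^{j_\ell}_{j_{\ell-2}}$, the bound $d_{\text{H}}(\bm v,\bm a)+d_{\text{H}}(\bm v,\bm c')\geq d_{\text{H}}(\bm a,\bm c')=4$ with each summand $\leq 2$ forces both to equal $2$ and $\bm v$ to agree with $\bm a=\bm c'$ outside $D:=\text{supp}(\bm a-\bm c')$. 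On $D$, which consists of the two ``substitution'' coordinates $\alpha,\beta$ (the images of $i_1,i_2$ in the length-$(n-1)$ sequence) together with the two ``shift'' coordinates $j_{\ell-2}$ and $j_{\ell-1}$, the set $D$ partitions into $E_{\bm a}$ of size $2$ (where $\bm v\neq\bm a$, hence $\bm v=\bm c'$) and $E_{\bm c'}=D\setminus E_{\bm a}$ of size $2$ (where $\bm v=\bm a\neq\bm c'$). Writing $a=|E_{\bm a}\cap\{\alpha,\beta\}|\in\{0,1,2\}$, a careful tally of how $\bm a,\bm a',\bm c,\bm c'$ compare at each coordinate of $D$ yields $d_{\text{H}}(\bm v,\bm c)=2a$ and $d_{\text{H}}(\bm v,\bm a')=4-2a$. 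Since these cannot simultaneously exceed $2$, either $\bm v\in E^{j_\ell}_{j_\ell}$ or $\bm v\in E^{j_{\ell-2}}_{j_{\ell-2}}$.

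Part 2 is obtained by the mirror argument: swap the roles of ``earlier'' and ``later'' run-endpoints relative to $j_\ell$. The hypotheses $[j_{\ell-1}+1,j_{\ell+1}]\cap\{i_1,i_2\}=\emptyset$ and $[j_{\ell-1}+1,j_{\ell+2}]\cap\{i_1,i_2\}=\emptyset$ put the same agreement structure on $\bm x$ and $\bm x'$, with $j_{\ell-1},j_{\ell-2}$ replaced by $j_{\ell+1},j_{\ell+2}$; the same four deletion-sequences arise with the identical distance pattern, and the decisive ``shift coordinate'' is now $j_\ell$ rather than $j_{\ell-1}$, so the same contradiction goes through verbatim. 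The main obstacle is the bookkeeping in part (b): correctly identifying the four coordinates of $D$, verifying the values of $\bm a,\bm a',\bm c,\bm c'$ at each of them using the run/agreement structure on $[j_{\ell-3}+1,j_\ell]$, and showing cleanly that the two relevant distances collapse to the linear expressions $2a$ and $4-2a$ rather than to quantities whose sum could exceed $4$.
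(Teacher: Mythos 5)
Your proof is correct, and it takes a genuinely different route from the paper's. The paper writes out the relevant deletion sequences explicitly, invokes parts 3) and 4) of Lemma \ref{lem-Cap-SE} to enumerate the six admissible patterns for a word $\bm v\in E^{j_\ell}_{j_{\ell-1}}$ (resp.\ $E^{j_\ell}_{j_{\ell-2}}$), and then checks pattern by pattern that $\bm v$ is within Hamming distance $2$ of $x'_{[n]\backslash\{j_\ell\}}$ or of $x_{[n]\backslash\{j_{\ell-1}\}}$ (resp.\ $x_{[n]\backslash\{j_{\ell-2}\}}$). You argue metrically instead: for the adjacent-run case you use that each of the pairs $\big(x_{[n]\backslash\{j_\ell\}},x_{[n]\backslash\{j_{\ell-1}\}}\big)$ and $\big(x'_{[n]\backslash\{j_\ell\}},x'_{[n]\backslash\{j_{\ell-1}\}}\big)$ differs in the single coordinate $j_{\ell-1}$ (the latter because $\bm x=\bm x'$ on the window), so assuming $\bm v\notin E^{j_\ell}_{j_\ell}\cup E^{j_{\ell-1}}_{j_{\ell-1}}$ forces $v_{j_{\ell-1}}$ to equal both $\bar{x}_\ell$ and $\bar{x}_{\ell-1}$, a contradiction; for the two-apart case you use tightness of $d_{\text{H}}\big(x_{[n]\backslash\{j_\ell\}},x'_{[n]\backslash\{j_{\ell-2}\}}\big)=4$, so that $\bm v$ agrees with both centers off the four-element difference set $D=\{\alpha,\beta,j_{\ell-2},j_{\ell-1}\}$ and splits $D$ evenly, giving $d_{\text{H}}\big(\bm v,x_{[n]\backslash\{j_{\ell-2}\}}\big)=2a$ and $d_{\text{H}}\big(\bm v,x'_{[n]\backslash\{j_\ell\}}\big)=4-2a$, which cannot both exceed $2$. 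I checked the distance claims and the values of the four centers on $D$: the hypothesis $\{i_1,i_2\}\cap[j_{\ell-3}+1,j_\ell]=\emptyset$ does guarantee $\alpha,\beta\notin\{j_{\ell-2},j_{\ell-1}\}$, that all four centers coincide off $D$, and that at $\alpha,\beta$ one has $x_{[n]\backslash\{j_\ell\}}=x_{[n]\backslash\{j_{\ell-2}\}}$ and $x'_{[n]\backslash\{j_\ell\}}=x'_{[n]\backslash\{j_{\ell-2}\}}$ while at $j_{\ell-2},j_{\ell-1}$ the primed/unprimed sequences pair up the other way, so the tally $2a$ and $4-2a$ is exactly right. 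Your approach is more elementary in that it bypasses the case lists of Lemma \ref{lem-Cap-SE}, and it also avoids the paper's ``without loss of generality $j_\ell<i_1$'' reduction by treating the images $\alpha,\beta$ of $i_1,i_2$ abstractly; what the paper's computation buys is the explicit parametrization of the words $\bm v$, which is reused in the appendix claims. Your reduction of part 2) to the mirror situation is the same step the paper takes.
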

\begin{proof}
We only prove 1), because the proof of 2) is similar.

By the assumption of $\{j_1,j_2,\cdots,j_m\}$, we can assume
$x_{[j_{\ell-2}+1,j_{\ell-1}]}=x_{\ell-1}^{k_{\ell-1}}$ and
$x_{[j_{\ell-1}+1,j_{\ell}]}=x_{\ell}^{k_{\ell}}$. As
$[j_{\ell-2}+1,j_{\ell}]\cap\{i_1,i_2\}=\emptyset$, without loss
of generality, assume $j_\ell<i_1$. Noticing that $\text{supp}(\bm
x-\bm x')=\{i_1,i_2\}$, we can obtain the following
\eqref{eq-lem-d2-cmp2}, \eqref{eq-lem-d2-cmp3},
\eqref{eq-lem-d2-cmp1} and \eqref{eq-lem-d2-cmp4}, where the
corresponding omitted parts across different equations are
identical.\footnote{In this paper, when we write several equations
together for comparison, we will replace some of their identical
parts with ellipses for the sake of brevity.} For any $\bm v\in
E^{j_\ell}_{j_{\ell-1}}
=B^{\text{S}}_{2}\big(x_{[n]\backslash\{j_{\ell}\}},
x'_{[n]\backslash\{j_{\ell-1}\}}\big)$, by \eqref{eq-lem-d2-cmp2},
\eqref{eq-lem-d2-cmp3} and by 3) of Lemma \ref{lem-Cap-SE}, we can
obtain \eqref{eq-lem-d2-cmp5}.
\begin{align}
x_{[n]\backslash\{j_\ell\}}\!~&
=\cdots~\bar{x}_{\ell-1}^{k_{\ell-1}-1}~~\bar{x}_{\ell-1}~~
\bar{x}_{\ell}^{k_{\ell}-1}~\cdots\!~x_{i_1}\!~\cdots
\!~x_{i_2}\!~\cdots
\label{eq-lem-d2-cmp2}\\
x'_{[n]\backslash\{j_{\ell-1}\}}&
=\cdots~\bar{x}_{\ell-1}^{k_{\ell-1}-1}
~~~\bar{x}_{\ell}~~~~\bar{x}_{\ell}^{k_{\ell}-1}~
\cdots\!~x'_{i_1}\!~\cdots\!~x'_{i_2}\!~\cdots
\label{eq-lem-d2-cmp3}\\
\bm v~~& =\cdots~\bar{x}_{\ell-1}^{k_{\ell-1}-1}
~~\!~~v~~\!~~~\bar{x}_{\ell}^{k_{\ell}-1} ~\cdots\!~~u~~\cdots~
u'~\cdots
\label{eq-lem-d2-cmp5}\\
x'_{[n]\backslash\{j_\ell\}}\!~&
=\cdots~\bar{x}_{\ell-1}^{k_{\ell-1}-1}~~
\bar{x}_{\ell-1}~~\bar{x}_{\ell}^{k_{\ell}-1}
~\cdots\!~x'_{i_1}\!~\cdots\!~x'_{i_2}\!~\cdots
\label{eq-lem-d2-cmp1}\\
x_{[n]\backslash\{j_{\ell-1}\}}&
=\cdots~\bar{x}_{\ell-1}^{k_{\ell-1}-1}
~~~\bar{x}_{\ell}~~~~\bar{x}_{\ell}^{k_{\ell}-1}
~\cdots\!~x_{i_1}\!~\cdots\!~x_{i_2}\!~\cdots
\label{eq-lem-d2-cmp4}
\end{align}
where $(v,u,u')\in\{(\Sigma_q,x_{i_1},x'_{i_2})$,
$(\Sigma_q,x'_{i_1},x_{i_2})$,
$(\bar{x}_{\ell-1},\Sigma_q,x'_{i_2})$,
$(\bar{x}_{\ell},\Sigma_q,x_{i_2})$,
$(\bar{x}_{\ell-1},x'_{i_1},\Sigma_q)$,
$(\bar{x}_{\ell},x_{i_1},\Sigma_q)\}$. For
$(v,u,u')\in\{(\Sigma_q,x_{i_1},x'_{i_2})$,
$(\Sigma_q,x'_{i_1},x_{i_2})$,
$(\bar{x}_{\ell-1},\Sigma_q,x'_{i_2})$,
$(\bar{x}_{\ell-1},x'_{i_1},\Sigma_q)\}$, by comparing
\eqref{eq-lem-d2-cmp5} and \eqref{eq-lem-d2-cmp1}, we can find
that $d_{\text{H}}\big(\bm v,
x'_{[n]\backslash\{j_{\ell}\}}\big)\leq 2$, and so $\bm v\in
B^{\text{S}}_{2}\big(x_{[n]\backslash\{j_\ell\}},
x'_{[n]\backslash\{j_{\ell-1}\}}\big)\cap
B^{\text{S}}_{2}\big(x'_{[n]\backslash\{j_\ell\}}\big)\subseteq
B^{\text{S}}_{2}\big(x_{[n]\backslash\{j_\ell\}},
x'_{[n]\backslash\{j_{\ell}\}}\big)=E^{j_\ell}_{j_{\ell}}$; for
$(v,u,u')\in\{(\bar{x}_{\ell},\Sigma_q,x_{i_2})$,
$(\bar{x}_{\ell},x_{i_1},\Sigma_q)\}$, by comparing
\eqref{eq-lem-d2-cmp5} and \eqref{eq-lem-d2-cmp4}, we can find
that $d_{\text{H}}\big(\bm v,
x_{[n]\backslash\{j_{\ell-1}\}}\big)\leq 2$, and so $\bm v\in
B^{\text{S}}_{2}\big(x_{[n]\backslash\{j_\ell\}},
x'_{[n]\backslash\{j_{\ell-1}\}}\big)\cap
B^{\text{S}}_{2}\big(x_{[n]\backslash\{j_{\ell-1}\}}\big)\subseteq
B^{\text{S}}_{2}\big(x_{[n]\backslash\{j_{\ell-1}\}},
x'_{[n]\backslash\{j_{\ell-1}\}}\big)=E^{j_{\ell-1}}_{j_{\ell-1}}$.
Thus, we have $E^{j_\ell}_{j_{\ell-1}}\subseteq
\big(E^{j_\ell}_{j_\ell}\cup E^{j_{\ell-1}}_{j_{\ell-1}}\big)$.

By the assumption of $\{j_1,j_2,\cdots,j_m\}$, we can assume
$x_{[j_{\ell-3}+1,j_{\ell-2}]}=x_{\ell-2}^{k_{\ell-2}}$,
$x_{[j_{\ell-2}+1,j_{\ell-1}]}=x_{\ell-1}^{k_{\ell-1}}$ and
$x_{[j_{\ell-1}+1,j_{\ell}]}=x_{\ell}^{k_{\ell}}$. As
$[j_{\ell-3}+1,j_{\ell}]\cap\{i_1,i_2\}=\emptyset$, without loss
of generality, we can assume $j_\ell<i_1$. Then we can obtain the
following \eqref{eq1-lem-d2-cmp}, \eqref{eq2-lem-d2-cmp},
\eqref{eq4-lem-d2-cmp} and \eqref{eq5-lem-d2-cmp}. For any $\bm
v\in E^{j_\ell}_{j_{\ell-2}}
=B^{\text{S}}_{2}\big(x_{[n]\backslash\{j_\ell\}},
x'_{[n]\backslash\{j_{\ell-2}\}}\big)$, by \eqref{eq1-lem-d2-cmp},
\eqref{eq2-lem-d2-cmp} and by 4) of Lemma \ref{lem-Cap-SE}, we can
obtain \eqref{eq3-lem-d2-cmp}.
\begin{align}
x_{[n]\backslash\{j_\ell\}}\!~&
=\cdots~\bar{x}_{\ell-2}^{k_{\ell-2}-1}~~\bar{x}_{\ell-2}~
~\bar{x}_{\ell-1}^{k_{\ell-1}-1}~~\bar{x}_{\ell-1}~~
\bar{x}_{\ell}^{k_{\ell}-1}
~\cdots\!~x_{i_1}\!~\cdots\!~x_{i_2}\!~\cdots
\label{eq1-lem-d2-cmp}\\
x'_{[n]\backslash\{j_{\ell-2}\}}
&=\cdots~\bar{x}_{\ell-2}^{k_{\ell-2}-1}
~~\bar{x}_{\ell-1}~~\bar{x}_{\ell-1}^{k_{\ell-1}-1}
~~\!~~\bar{x}_{\ell}~\!~~~
\bar{x}_{\ell}^{k_{\ell}-1}~\cdots\!~x'_{i_1}\!~\cdots\!~x'_{i_2}\!~
\cdots\label{eq2-lem-d2-cmp}\\
\bm v~~& =\cdots~\bar{x}_{\ell-2}^{k_{\ell-2}-1}
~~~~v~~~~\bar{x}_{\ell-1}^{k_{\ell-1}-1}~\!~~~v'~~~~
\bar{x}_{\ell}^{k_{\ell}-1}~\cdots\!~~u~~\cdots~ u'~\cdots
\label{eq3-lem-d2-cmp}\\
x'_{[n]\backslash\{j_\ell\}}\!~
&=\cdots~\bar{x}_{\ell-2}^{k_{\ell-2}-1} ~~\bar{x}_{\ell-2}
~~\bar{x}_{\ell-1}^{k_{\ell-1}-1}~~\bar{x}_{\ell-1}~~
\bar{x}_{\ell}^{k_{\ell}-1}~\cdots\!~x'_{i_1}\!~\cdots\!~x'_{i_2}\!~
\cdots\label{eq4-lem-d2-cmp}\\
x_{[n]\backslash\{j_{\ell-2}\}}&
=\cdots~\bar{x}_{\ell-2}^{k_{\ell-2}-1}
~~\bar{x}_{\ell-1}~~\bar{x}_{\ell-1}^{k_{\ell-1}-1}
~\!~~~\bar{x}_{\ell}~\!~~~
\bar{x}_{\ell}^{k_{\ell}-1}~\cdots\!~x_{i_1}\!~\cdots\!~x_{i_2}\!~
\cdots\label{eq5-lem-d2-cmp}
\end{align}
where $(v,v',u,u')\in
\{(\bar{x}_{\ell-2},\bar{x}_{\ell-1},x'_{i_1},x'_{i_2})$,
$(\bar{x}_{\ell-2},\bar{x}_{\ell},x_{i_1},x'_{i_2})$,
$(\bar{x}_{\ell-2},\bar{x}_{\ell},x'_{i_1},x_{i_2})$,
$(\bar{x}_{\ell-1},\bar{x}_{\ell-1},x_{i_1},x'_{i_2})$,
$(\bar{x}_{\ell-1},\bar{x}_{\ell-1},x'_{i_1},x_{i_2})$,
$(\bar{x}_{\ell-1},\bar{x}_{\ell},x_{i_1},x_{i_2})\}$. For
$(v,v',u,u')\in
\{\bar{x}_{\ell-2},\bar{x}_{\ell-1},x'_{i_1},x'_{i_2})$,
$(\bar{x}_{\ell-2},\bar{x}_{\ell},x_{i_1},x'_{i_2})$,
$(\bar{x}_{\ell-2},\bar{x}_{\ell},x'_{i_1},x_{i_2})$,
$(\bar{x}_{\ell-1},\bar{x}_{\ell-1},x_{i_1},x'_{i_2})$,
$(\bar{x}_{\ell-1},\bar{x}_{\ell-1},x'_{i_1},x_{i_2})\}$, by
comparing \eqref{eq3-lem-d2-cmp} and \eqref{eq4-lem-d2-cmp}, we
can find that $d_{\text{H}}\big(\bm v,
x'_{[n]\backslash\{j_{\ell}\}}\big)\leq 2$, and so $\bm v\in
B^{\text{S}}_{2}\big(x_{[n]\backslash\{j_\ell\}},
x'_{[n]\backslash\{j_{\ell-2}\}}\big)\cap
B^{\text{S}}_{2}\big(x'_{[n]\backslash\{j_\ell\}}\big)\subseteq
B^{\text{S}}_{2}\big(x_{[n]\backslash\{j_\ell\}},
x'_{[n]\backslash\{j_{\ell}\}}\big)=E^{j_\ell}_{j_{\ell}}$; for
$(v,v',u,u')=(\bar{x}_{\ell-1},\bar{x}_{\ell},x_{i_1},x_{i_2})$,
by comparing \eqref{eq3-lem-d2-cmp} and \eqref{eq5-lem-d2-cmp}, we
can find that $d_{\text{H}}\big(\bm v,
x_{[n]\backslash\{j_{\ell-2}\}}\big)=0$, and so $\bm v\in
B^{\text{S}}_{2}\big(x_{[n]\backslash\{j_\ell\}},
x'_{[n]\backslash\{j_{\ell-2}\}}\big)\cap
B^{\text{S}}_{2}\big(x_{[n]\backslash\{j_{\ell-2}\}}\big)\subseteq
B^{\text{S}}_{2}\big(x_{[n]\backslash\{j_{\ell-2}\}},
x'_{[n]\backslash\{j_{\ell-2}\}}\big)=E^{j_{\ell-2}}_{j_{\ell-2}}$.
Thus, we have $E^{j_\ell}_{j_{\ell-2}}\subseteq
\big(E^{j_\ell}_{j_\ell}\cup E^{j_{\ell-2}}_{j_{\ell-2}}\big)$.
\end{proof}

\begin{rem}
In 1) of Lemma \ref{lem-d2-smpl}, as
$x_{[j_{\ell-3}+1,j_{\ell-2}]}$, $x_{[j_{\ell-2}+1,j_{\ell-1}]}$
and $x_{[j_{\ell-1}+1,j_{\ell}]}$ are three runs of $\bm x$, so
we can obtain $E^{j}_{j'}=E^{j_\ell}_{j_{\ell-1}}\subseteq
E^{j_\ell}_{j_{\ell}}\cup
E^{j_{\ell-1}}_{j_{\ell-1}}=E^{j}_{j}\cup E^{j'}_{j'}$ for any
$j\in[j_{\ell-1}+1,j_{\ell}]$ and
$j'\in[j_{\ell-2}+1,j_{\ell-1}]$, and
$E^{j}_{j'}=E^{j_\ell}_{j_{\ell-2}}\subseteq
E^{j_\ell}_{j_{\ell}}\cup
E^{j_{\ell-2}}_{j_{\ell-2}}=E^{j}_{j}\cup E^{j'}_{j'}$ for any
$j\in[j_{\ell-1}+1,j_{\ell}]$ and
$j'\in[j_{\ell-3}+1,j_{\ell-2}]$. Similarly, in 2) of Lemma
\ref{lem-d2-smpl}, we can obtain $E^{j}_{j'}\subseteq
E^{j}_{j}\cup E^{j'}_{j'}$ for any $j\in[j_{\ell-1}+1,j_{\ell}]$
and $j'\in[j_{\ell}+1,j_{\ell+1}]~($Resp.
$j'\in[j_{\ell+1}+1,j_{\ell+2}])$.
\end{rem}

\subsection{For the case that both $x_{[n]\backslash\{i_1\}}
=x'_{[n]\backslash\{i_2\}}$ and $x_{[n]\backslash\{i_2\}}
=x'_{[n]\backslash\{i_1\}}$}

In this subsection, we consider the case that both
$x_{[n]\backslash\{i_1\}}=x'_{[n]\backslash\{i_2\}}$ and
$x_{[n]\backslash\{i_2\}}=x'_{[n]\backslash\{i_1\}}$. For this
case, we must have $i_2=i_1+1$, $x_{i_1}=x'_{i_2}$ and
$x_{i_2}=x'_{i_1}$. In fact, from the assumption
$x_{[n]\backslash\{i_1\}}=x'_{[n]\backslash\{i_2\}}$, we can find
$x'_{i_1}=x_{i_1+1}$, and from
$x_{[n]\backslash\{i_2\}}=x'_{[n]\backslash\{i_1\}}$, we can find
$x_{i_1}=x'_{i_1+1}$. Hence, we have $x_{i_1+1}\neq x'_{i_1+1}$.
This is because otherwise, we can obtain
$x'_{i_1}=x_{i_1+1}=x'_{i_1+1}=x_{i_1}$, which contradicts to the
assumption that $i_1\in\text{supp}(\bm x-\bm x')$. As
$\text{supp}(\bm x-\bm x')=\{i_1,i_2\}$, we have $i_2=i_1+1$,
$x_{i_1}=x'_{i_2}$ and $x_{i_2}=x'_{i_1}$. We state this fact as
the following remark for emphasis.

\begin{rem}\label{rem-both-i1-i2}
If both $x_{[n]\backslash\{i_1\}}=x'_{[n]\backslash\{i_2\}}$ and
$x_{[n]\backslash\{i_2\}}=x'_{[n]\backslash\{i_1\}}$, then
$i_2=i_1+1$, $x_{i_1}=x'_{i_2}\neq x_{i_2}=x'_{i_1}$. Hence, $\bm
x =\bm wab\bm w'$ and $\bm x'=\bm wba\bm w'$, where $\bm
w=x_{[1,i_1-1]}=x'_{[1,i_1-1]}, a=x_{i_1}=x'_{i_2},
b=x'_{i_1}=x_{i_2}$ and $\bm w'=x_{[i_2+1,n]}=x'_{[i_2+1,n]}$.
\end{rem}

When using Observation 1 to analyze $B^{\text{DS}}_{1,2}(\bm x,\bm
x')$, one needs to split $\bm x$ and $\bm x'$ into runs. The
following remark gives two forms of $\bm x$ and $\bm x'$ when
split into runs.

\begin{rem}\label{rem-form-run}
Let $\{j_1,j_2,\cdots,j_m\}\subseteq [n]$ be such that
$j_{i-1}<j_i$ and $x_{[j_{i-1}+1,j_i]}$, $i=1,2,\cdots, m$, are
all runs of $\bm x$, and $\{j'_1,j'_2,\cdots,j'_{m'}\}\subseteq
[n]$ be such that $j'_{i'-1}<j'_{i'}$ and
$x'_{[j'_{i'-1}+1,j'_{i'}]}$, $i'=1,2,\cdots, m'$, are all runs of
$\bm x'$, where we set $j_0=j'_0=0$. Then we can denote
$x_{[j_{i-1}+1,j_i]}=\bar{x}_i^{k_i}$ for each $i\in[m]$ and
$x'_{[j'_{i'-1}+1,j'_{i'}]}=\tilde{x}_{i'}^{k'_{i'}}$ for each
$i'\in[m']$, where $k_i=j_i-j_{i-1}$ and
$k'_{i'}=j'_{i'}-j'_{i'-1}$. Let $\tau\in[m]$ and $\tau'\in[m']$
be such that $i_1\in [j_{\tau-1}+1,j_{\tau}]\cap
[j'_{\tau'-1}+1,j'_{\tau'}]$. In other words, $x_{i_1}$ is in the
$\tau$th run of $\bm x$ and $x'_{i_1}$ is in the $\tau'$th run of
$\bm x'$. Then by Remark \ref{rem-both-i1-i2}, we have $i_2\in
[j_{\tau}+1,j_{\tau+1}]\cap [j'_{\tau'}+1,j'_{\tau'+1}]$ and
$$B^{\text{DS}}_{1,2}(\bm x,\bm
x')=\big(\bigcup_{i=1}^{\tau-1}E^{j_i}\big) \cup E^{i_1}\cup
E^{i_2}\cup\big(\bigcup_{i=\tau+2}^{m}E^{j_i}\big).$$ Also by
Remark \ref{rem-both-i1-i2}, we have
\begin{align}\label{eq-seqx-form}\bm
x=\bar{x}_1^{k_1}~\cdots~\bar{x}_{\tau-1}^{k_{\tau-1}}~a^{\sigma_1}
~a~b~b^{\sigma_2}
~\bar{x}_{\tau+2}^{k_{\tau+2}}~\cdots~\bar{x}_{m}^{k_{m}}\end{align}
and \begin{align}\label{eq-seqxp-form}\bm
x'=\tilde{x}_1^{k'_1}~\cdots~\tilde{x}_{\tau'-1}^{k'_{\tau'-1}}~b^{\sigma'_1}
~b~a~a^{\sigma'_2}
~\tilde{x}_{\tau'+2}^{k'_{\tau'+2}}~\cdots~\tilde{x}_{m'}^{k'_{m'}}\end{align}
where $\sigma_1,\sigma_2,\sigma_1',\sigma_2'\geq 0$ such that
$a^{\sigma_1}a=\bar{x}_{\tau}^{k_{\tau}}$,
$b\!~b^{\sigma_2}=\bar{x}_{\tau+1}^{k_{\tau+1}}$,
$b^{\sigma'_1}b=\tilde{x}_{\tau'}^{k'_{\tau'}}$ and
$a\!~a^{\sigma'_2}=\tilde{x}_{\tau'+1}^{k'_{\tau'+1}}$. By
comparison, we have
$\bar{x}_1^{k_1}~\cdots~\bar{x}_{\tau-1}^{k_{\tau-1}}~a^{\sigma_1}
=\bm
w=\tilde{x}_1^{k'_1}~\cdots~\tilde{x}_{\tau'-1}^{k'_{\tau'-1}}~b^{\sigma'_1}$
and $b^{\sigma_2}
~\bar{x}_{\tau+2}^{k_{\tau+2}}~\cdots~\bar{x}_{m}^{k_{m}} =\bm
w'=a^{\sigma'_2}
~\tilde{x}_{\tau'+2}^{k'_{\tau'+2}}~\cdots~\tilde{x}_{m'}^{k'_{m'}}$.
Letting $\bm
w=\bar{x}_1^{k_1}~\cdots~\bar{x}_{\tau-1}^{k_{\tau-1}}~a^{\sigma_1}$
and $\bm w'=a^{\sigma'_2}
~\tilde{x}_{\tau'+2}^{k'_{\tau'+2}}~\cdots~\tilde{x}_{m'}^{k'_{m'}}$,
we can write $\bm x$ and $\bm x'$ in the form
\begin{align}\label{form-x-xp-1}
\bm x
&=\bar{x}_1^{k_1}~\cdots~\bar{x}_{\tau-1}^{k_{\tau-1}}~a^{\sigma_1}
~a~b~a^{\sigma'_2}~\tilde{x}_{\tau'+2}^{k'_{\tau'+2}}~\cdots
~\tilde{x}_{m'}^{k'_{m'}}\nonumber\\
\bm
x'&=\bar{x}_1^{k_1}~\cdots~\bar{x}_{\tau-1}^{k_{\tau-1}}~a^{\sigma_1}
~b~a~a^{\sigma'_2}~\tilde{x}_{\tau'+2}^{k'_{\tau'+2}}~\cdots
~\tilde{x}_{m'}^{k'_{m'}}
\end{align}
Similarly, by letting $\bm w
=\tilde{x}_1^{k'_1}~\cdots~\tilde{x}_{\tau'-1}^{k'_{\tau'-1}}~b^{\sigma'_1}$
and $\bm w'=b^{\sigma_2}
~\bar{x}_{\tau+2}^{k_{\tau+2}}~\cdots~\bar{x}_{m}^{k_{m}}$, we can
write $\bm x$ and $\bm x'$ in the form
\begin{align}\label{form-x-xp-2}
\bm x
&=\tilde{x}_1^{k'_1}~\cdots~\tilde{x}_{\tau'-1}^{k'_{\tau'-1}}
~b^{\sigma'_1}~a~b~b^{\sigma_2}
~\bar{x}_{\tau+2}^{k_{\tau+2}}~\cdots~\bar{x}_{m}^{k_{m}}\nonumber\\
\bm
x'&=\tilde{x}_1^{k'_1}~\cdots~\tilde{x}_{\tau'-1}^{k'_{\tau'-1}}
~b^{\sigma'_1}~b~a~b^{\sigma_2}
~\bar{x}_{\tau+2}^{k_{\tau+2}}~\cdots~\bar{x}_{m}^{k_{m}}
\end{align}
For each $i\in[1,\tau-1]$, from \eqref{form-x-xp-1} and by
Observation 1, we can further obtain
$E^{j_i}=\big(\bigcup_{\ell=1}^{\tau-1}E^{j_i}_{j_\ell}\big)\cup
E^{j_i}_{j_{\tau-1}+1}\cup E^{j_i}_{i_1}\cup
E^{j_i}_{i_2}\cup\big(\bigcup_{\ell'=\tau'+2}^{m'}E^{j_i}_{j'_{\ell'}}\big)$;
for each $i\in[\tau+2,m]$, from \eqref{form-x-xp-2} and by
Observation 1, we can obtain
$E^{j_i}=\big(\bigcup_{\ell'=1}^{\tau'-1}E^{j_i}_{j_{\ell'}}\big)\cup
E^{j_i}_{i_1}\cup E^{j_i}_{i_2}\cup E^{j_i}_{i_2+1}\cup
\big(\bigcup_{\ell=\tau+2}^{m}E^{j_i}_{j_{\ell}}\big)$. By
considering the Hamming distance between the corresponding
subsequences, $E^{j_i}$ can be simplified $($see Claims 1, 2 and 3
later$)$.
\end{rem}

\begin{exam}\label{exm-xxp-form}
Let $\bm x=00010111011100111$ and $\bm x'=00010110111100111$. Then
we have $\bm x=\bm w\!~10\!~\bm w'$ and $\bm x=\bm w\!~01\!~\bm
w'$, where $\bm w=0001011$ and $\bm w'=11100111$. Note that both
$\bm x$ and $\bm x'$ have $8$ runs and $i_1=8\in I_4\cap I'_5$, so
$\tau=4$ and $\tau'=5$. According to Remark \ref{rem-form-run}, we
can write $\bm
w=0001011=\bar{x}_1^3\bar{x}_2^1\bar{x}_3^11^{\sigma_1}
=\tilde{x}_1^3\tilde{x}_2^1\tilde{x}_3^1\tilde{x}_4^20^{\sigma'_1}$,
where $\bar{x}_1\bar{x}_2\bar{x}_3=010$,
$\tilde{x}_1\tilde{x}_2\tilde{x}_3\tilde{x}_4=0101$, $\sigma_1=2$
and $\sigma_1'=0$. Similarly, we can write $\bm
w'=11100111=0^{\sigma_2}\bar{x}_6^3\bar{x}_7^2\bar{x}_8^3
=1^{\sigma'_2}\tilde{x}_7^2\tilde{x}_8^3$, where
$\bar{x}_6\bar{x}_7\bar{x}_8=101$, $\tilde{x}_7\tilde{x}_8=01$,
$\sigma_2=0$ and $\sigma_2'=3$. Note that we alway have
$\sigma_1\sigma_1'=0$ and $\sigma_2\sigma_2'=0$.
\end{exam}

\begin{rem}\label{rem-dst-subseq}
Let $\{j_i: i\in[m]\}$, $\{j'_{i'}: i'\in[m]\}$, $\tau$ and
$\tau'$ be defined as in Remark \ref{rem-form-run}. Note that in
\eqref{form-x-xp-1}, we have $a\neq \bar{x}_{\tau-1}$ and $a\neq
\tilde{x}_{\tau'+2}$ because they belong to different runs of $\bm
x~($Resp. $\bm x')$. Then from \eqref{form-x-xp-1} and by
Observation 2, we can easily find the Hamming distance
$d_{\text{H}}\big(x_{[n]\backslash\{j\}},
x'_{[n]\backslash\{j'}\big)$ for any
$(j,j')\in\{(j_\ell,j_{\ell'}):
\ell,\ell'\in[1,\tau-1]\}\cup\{(j_\ell,j'_{\ell'}):
\ell\in[1,\tau-1], \ell'\in[\tau'+2,m']\}$. The following are some
examples to show this idea and will be used in our subsequent
discussions.
\begin{enumerate}
 \item[1)] For any $\ell<\ell'\leq\tau-1$, we have
 $d_{\text{H}}\big(x_{[n]\backslash\{j_\ell\}},
 x'_{[n]\backslash\{j_{\ell'}}\big)=\ell'-\ell+2$. This is because
 from \eqref{form-x-xp-1}, we can obtain $\text{supp}(\bm x-\bm
 x')\backslash[j_{\ell},j_{\ell'}]=\{i_1,i_2\}$ and
 $\text{supp}\big(x_{[j_{\ell}+1,j_{\ell'}]}
 -x'_{[j_{\ell},j_{\ell'}-1]}\big)=\{j_{\ell},\cdots,j_{\ell'-1}\}$,
 where for convenience, for any $j\leq j''<j'$,
 we say that $j''\in\text{supp}\big(x_{[j+1,j']}
 -x'_{[j,j'-1]}\big)$ if $x'_{j''}\neq x_{j''+1}$.
 Therefore, by Observation 2, we have $d_{\text{H}}\big(x_{[n]\backslash\{j_\ell\}},
 x'_{[n]\backslash\{j_{\ell'}}\big)=\big|\text{supp}(\bm x-\bm
 x')\backslash[j_{\ell},j_{\ell'}]\big|
 +\big|\text{supp}\big(x_{[j_{\ell}+1,j_{\ell'}]}
 -x'_{[j_{\ell},j_{\ell'}-1]}\big)\big|=\ell'-\ell+2$.
 \item[2)] For any $\ell\leq\tau-1$ and $\ell'\geq\tau'+2$, we have
 $d_{\text{H}}\big(x_{[n]\backslash\{j_\ell\}},
 x'_{[n]\backslash\{j'_{\ell'}}\big)=\big((\tau-1)-\ell\big)+2+\big(\ell'-(\tau'+2)\big)$. This is because
 from \eqref{form-x-xp-1}, we can obtain $\text{supp}(\bm x-\bm
 x')\backslash[j_{\ell},j_{\ell'}]=\emptyset$ and
 $\text{supp}\big(x_{[j_{\ell}+1,j_{\ell'}]}
 -x'_{[j_{\ell},j_{\ell'}-1]}\big)=\{j_{\ell},\cdots,j_{\tau-1}\}
 \cup\{j'_{\tau'+1}\}\cup\{j'_{\tau'+2},\cdots,j'_{\ell'-1}\}$.
 Therefore, by Observation 2, we have $d_{\text{H}}\big(x_{[n]\backslash\{j_\ell\}},
 x'_{[n]\backslash\{j_{\ell'}}\big)=\big|\text{supp}(\bm x-\bm
 x')\backslash[j_{\ell},j_{\ell'}]\big|
 +\big|\text{supp}\big(x_{[j_{\ell}+1,j_{\ell'}]}
 -x'_{[j_{\ell},j_{\ell'}-1]}\big)\big|=\big((\tau-1)-\ell+1\big)+1+\big(\ell'-(\tau'+2)\big)
 =\big((\tau-1)-\ell\big)+2+\big(\ell'-(\tau'+2)\big)$.
 \item[3)] In Table 1, we list the Hamming distances
 $d_{\text{H}}\big(x_{[n]\backslash\{j\}},
 x'_{[n]\backslash\{j'}\big)$ for all $j\in\big\{j_\ell: \ell\leq
 \tau-1\big\}$ and $j'\in\big\{j_\ell: \tau-4\leq \ell\leq
 \tau-1\big\}\cup\{j_{\tau-1}+1,i_1,i_2\}\cup\big\{j'_{\ell}:
 \ell\geq\tau'+2\big\}$.
\end{enumerate}
\end{rem}

Similarly, in \eqref{form-x-xp-2}, we have $b\neq
\tilde{x}_{\tau'-1}$ and $b\neq \bar{x}_{\tau+2}$, and so from
\eqref{form-x-xp-2} and by Observation 2, we can easily find the
Hamming distance $d_{\text{H}}\big(x_{[n]\backslash\{j\}},
x'_{[n]\backslash\{j'}\big)$ for any
$(j,j')\in\{(j_\ell,j_{\ell'}):
\ell,\ell'\in[\tau+2,m]\}\cup\{(j_\ell,j'_{\ell'}):
\ell\in[\tau+2,m], \ell'\in[1,\tau'-1]\}$.

\begin{table}[htbp]
\begin{center}
\small
\renewcommand\arraystretch{1.1}
\begin{tabular}{|p{1.1cm}|p{0.8cm}|p{0.8cm}|p{0.8cm}|p{0.8cm}|p{1.1cm}|p{0.8cm}|p{0.8cm}|p{0.8cm}|p{0.8cm}|p{0.8cm}|p{1.2cm}|p{1.6cm}|}
\hline    ~      & $j_{\tau-4}$ & $j_{\tau-3}$ & $j_{\tau-2}$ & $j_{\tau-1}$ & $j_{\tau-1}+1$ & $i_{1}$ & $i_{2}$ & $j'_{\tau'+2}$ & $j'_{\tau'+3}$ & $j'_{\tau'+4}$ & $\geq j'_{\tau'+5}$ \\
\hline $j_{\tau-1}$   & $5$       & $4$       & $3$       & $2$     & $3$       & $2$     & $1$     & $2$        & $3$        & $4$        & $\geq 5$        \\
\hline $j_{\tau-2}$ & $4$       & $3$       & $2$       & $3$     & $4$       & $3$     & $2$     & $3$        & $4$        & $5$        & $\geq 6$        \\
\hline $j_{\tau-3}$ & $3$       & $2$       & $3$       & $4$     & $5$       & $4$     & $3$     & $4$        & $5$        & $6$        & $\geq 7$        \\
\hline $j_{\tau-4}$ & $2$       & $3$       & $4$       & $5$     & $6$       & $5$     & $4$     & $5$        & $6$        & $7$        & $\geq 8$        \\
\hline $\leq j_{\tau-5}$ & $\geq 3$       & $\geq 4$       & $\geq 5$       & $\geq 6$     & $\geq 7$       & $\geq 6$     & $\geq 5$     & $\geq 6$        & $\geq 7$        & $\geq 8$        & $\geq 9$        \\
\hline
\end{tabular}\\
\end{center}
\vspace{0.0mm} Table 1. The Hamming distance
$d_{\text{H}}\big(x_{[n]\backslash\{j\}},
x'_{[n]\backslash\{j'}\big)$ for $j\in\big\{j_\ell: \ell\leq
\tau-1\big\}$ and $j'\in\big\{j_\ell: \tau-4\leq \ell\leq
\tau-1\big\}\cup\{j_{\tau-1}+1,i_1,i_2\}\cup\big\{j'_{\ell}:
\ell\geq\tau'+2\big\}$, where $\{j_i: i\in[m]\}$, $\{j'_{i'}:
i'\in[m]\}$, $\tau$ and $\tau'$ are defined as in Remark
\ref{rem-form-run}. In this table, each row corresponds to a $j$
and each column corresponds to a $j'$. It is assumed that
$\sigma_1>0$ and the Hamming distance
$d_{\text{H}}\big(x_{[n]\backslash\{j\}},
x'_{[n]\backslash\{j'}\big)$ can be obtained from
\eqref{form-x-xp-1} and by Observation 2. The table for the case
that $\sigma_1=0$ can be obtained from this table by removing the
column with $j'=j_{\tau-1}+1$.
\end{table}

The following lemma is the main result of this subsection.

\begin{lem}\label{lem-both-i1-i2}
Suppose $\bm x,\bm x'\in\Sigma_q^n$ such that $\text{supp}(\bm
x-\bm x')=\{i_1,i_2\}$ and $i_1<i_2$. If both
$x_{[n]\backslash\{i_1\}}=x'_{[n]\backslash\{i_2\}}$ and
$x_{[n]\backslash\{i_2\}}=x'_{[n]\backslash\{i_1\}}$, then we have
$|B^{\text{DS}}_{1,2}(\bm x, \bm x')|\leq
(q^2-1)n^2-(3q^2+5q-5)n+O_q(1)$.
\end{lem}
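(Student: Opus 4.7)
The plan is to expand $B^{\text{DS}}_{1,2}(\bm x,\bm x')$ via Observation 1 and Remark \ref{rem-form-run} as a union of intersected substitution balls $E^{j}_{j'}$, absorb redundant cross terms via Lemma \ref{lem-d2-smpl}, and bound the remaining diagonal and boundary terms using the explicit formulas \eqref{xxx-SBN}--\eqref{xxp-case2} for $\xi^{q,n-1}_{d,2}$.

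First I would list the run-end indices $j_1<\cdots<j_m$ of $\bm x$ and $j'_1<\cdots<j'_{m'}$ of $\bm x'$, so that Observation 1 gives $B^{\text{DS}}_{1,2}(\bm x,\bm x')=\bigcup_{i,i'} E^{j_i}_{j'_{i'}}$. I would then apply Lemma \ref{lem-d2-smpl} (and the remark following it) to absorb $E^{j_i}_{j'_{i'}}$ into $E^{j_i}_{j_i}\cup E^{j'_{i'}}_{j'_{i'}}$ whenever $j_i$ and $j'_{i'}$ lie in $\bm x$-runs within two steps of each other and the spanned interval misses $\{i_1,i_2\}$. What remains are (i) diagonal terms $E^{j}_{j}$ with $j\notin\{i_1,i_2\}$, which by Observation 2 satisfy $d_{\text{H}}(x_{[n]\setminus\{j\}},x'_{[n]\setminus\{j\}})=2$ and hence $|E^{j}_{j}|=\xi^{q,n-1}_{2,2}=2(q-1)n+O_q(1)$, and (ii) a bounded number of ``boundary'' cross terms near $\{i_1,i_2\}$, most importantly the two $d=0$ terms $E^{i_1}_{i_2}$ and $E^{i_2}_{i_1}$ guaranteed by the subsection's hypothesis (Remark \ref{rem-both-i1-i2}), each of size $\xi^{q,n-1}_{0,2}=\frac{(q-1)^2}{2}n^2+O_q(n)$ by \eqref{xxx-SBN}.

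The size of the union is then bounded by the union bound: at most $n$ diagonal terms contribute $n\cdot\xi^{q,n-1}_{2,2}=2(q-1)n^2+O_q(n)$; the two $d=0$ balls contribute at most $2\xi^{q,n-1}_{0,2}=(q-1)^2n^2+O_q(n)$; and the remaining $O(n)$ boundary cross terms with $d\ge 1$ each have size $O_q(n)$ (or $O_q(1)$ for $d\ge 3$), contributing $O_q(n)$ in aggregate by Lemma \ref{lem-dtn-jpj-run} and \eqref{xxp-case2} (the $d\ge 5$ case yields empty sets by Lemma \ref{lem-sub-int-size}). Adding these gives $(q-1)^2n^2+2(q-1)n^2+O_q(n)=(q^2-1)n^2+O_q(n)$; the precise linear coefficient $-(3q^2+5q-5)$ emerges once all the $O_q(n)$ corrections coming from the $\xi^{q,n-1}_{d,2}$ formulas are summed.

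The main obstacle I anticipate is the case analysis near $\{i_1,i_2\}$: depending on whether the adjacent $a$- and $b$-runs extend outward (i.e., the four sign patterns of $\sigma_1,\sigma_2,\sigma'_1,\sigma'_2$ in Remark \ref{rem-form-run}), the set of boundary cross terms that Lemma \ref{lem-d2-smpl} cannot absorb changes, and the number of usable diagonal terms differs. Each of these cases needs a parallel but distinct accounting (supported by the Hamming-distance table in Remark \ref{rem-dst-subseq}), and the exact linear coefficient $-(3q^2+5q-5)n$ is recovered only after all of them are aligned. Because the statement asks only for the bound up to $O_q(1)$, any $O_q(1)$ discrepancy between sub-cases is harmless, keeping the bookkeeping tractable.
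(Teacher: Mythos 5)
Your decomposition is the same one the paper uses (Observation~1 plus Remark~\ref{rem-form-run} to reduce to diagonal terms $E^{j_\ell}_{j_\ell}$, the two full balls $E^{i_1}=E^{i_1}_{i_2}$, $E^{i_2}=E^{i_2}_{i_1}$, and a few boundary cross terms, with Lemma~\ref{lem-d2-smpl} absorbing the rest), but the final step --- ``bound the union by the sum of the sizes and the linear coefficient $-(3q^2+5q-5)$ will emerge'' --- does not work. This case is exactly the one in which the bound of Theorem~\ref{main-thm} is attained up to an additive constant, so there is no slack at the linear order, and the plain union bound provably overshoots. Concretely: with $m\leq n$ runs you get at most $(m-2)\,\xi^{q,n-1}_{2,2}+2\,\xi^{q,n-1}_{0,2}\leq(q^2-1)n^2-(2q^2+2q-5)n+O_q(1)$, and $-(2q^2+2q-5)$ exceeds the required $-(3q^2+5q-5)$ by $q^2+3q>0$; adding the two boundary cross terms of size $2(q-1)n+O_q(1)$ each makes it worse still. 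The missing $\Theta_q(n)$ must be recovered by computing, not merely acknowledging, several linear-in-$n$ intersections: $|E^{i_1}\cap E^{i_2}|=\xi^{q,n-1}_{1,2}=q(q-1)n+O_q(1)$ (the paper's Claim~1, giving $|E^{i_1}\cup E^{i_2}|=(q-1)^2n^2-(4q^2-9q+5)n+O_q(1)$ rather than $2\xi^{q,n-1}_{0,2}$); the overlap of the diagonal terms adjacent to $\{i_1,i_2\}$ with $E^{i_1}\cup E^{i_2}$, which is $(q-1)n$ or all of $2(q-1)n$ depending on whether $\sigma_1,\sigma_2$ vanish (Claim~8); the pairwise overlaps of consecutive diagonal terms, which cost $2(q+1)$ each but accumulate to a $\Theta_q(n)$ correction over $m$ runs (Claim~7); and the $(q-1)n$-sized overlap of each boundary cross term with everything else (Claims~9, 10).

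A second point your sketch glosses over: after these subtractions the count still depends on the number of runs $m$ and on the run-extension parameters, and the inequality only closes because of the trade-off $m\leq n-\sigma_1-\sigma_2$ against the indicator terms $1_{\sigma_1>0}(q-1)n$, $1_{\sigma_2>0}(q-1)n$ (each extra symbol in the runs adjacent to the swap reduces $m$ by one, buying back $2(q-1)n$ while costing at most $(q-1)n$). You correctly anticipate a case analysis over the sign patterns of $\sigma_1,\sigma_2$, but without the exact intersection sizes above that case analysis cannot be aligned to the stated coefficient, since an $O_q(n)$ discrepancy --- unlike an $O_q(1)$ one --- is fatal here.
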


Before proving Lemma~\ref{lem-both-i1-i2}, we introduce several
auxiliary claims, whose proofs are given in Appendix~A.

\textbf{Claim 1}: We have
$E^{i_1}=B^{\text{S}}_{2}\big(x_{[n]\backslash\{i_1\}}\big)
=B^{\text{S}}_{2}\big(x'_{[n]\backslash\{i_2\}}\big)$ and
$E^{i_2}=B^{\text{S}}_{2}\big(x_{[n]\backslash\{i_2\}}\big)
=B^{\text{S}}_{2}\big(x'_{[n]\backslash\{i_1\}}\big)$. Moreover,
$|E^{i_1}\cup E^{i_2}|=2\xi^{q,n-1}_{0,2}-\xi^{q,n-1}_{1,2}
=(q-1)^2n^2
-(4q^2-9q+5)n+4q^2-9q+6=(q-1)^2n^2-(4q^2-9q+5)n+O_q(1)$.

\textbf{Claim 2}: Let $\{j_{i}: i\in[m]\}$, $\{j'_{i'}:
i'\in[m']\}$, $\tau$ and $\tau'$ be defined as in Remark
\ref{rem-form-run}. Then $$B^{\text{DS}}_{1,2}(\bm x,\bm
x')=\Big(\bigcup\nolimits_{\ell=1}^{\tau-1}
E^{j_\ell}_{j_\ell}\Big)\cup\Big(\bigcup\nolimits_{\ell=\tau+2}^{m}
E^{j_\ell}_{j_\ell}\Big)\cup(E^{i_{1}}\cup E^{i_{2}})\cup
E^{j_{\tau-1}}_{j'_{\tau'+2}}\cup
E^{j_{\tau+2}}_{j'_{\tau'-1}}\cup A,$$ where
$A\subseteq\Sigma_q^{n-1}$ and $|A|=O_q(1)$.

\textbf{Claim 3}: Let $\{j_i: i\in[m]\}$, $\{j'_{i'}:
i'\in[m']\}$, $\tau$ and $\tau'$ be defined as in Remark
\ref{rem-form-run}. We have
$$\Big|\Big(\bigcup\nolimits_{\ell=1}^{\tau-1}
E^{j_\ell}_{j_\ell}\Big)\cap\Big(\bigcup\nolimits_{\ell=\tau+2}^{m}
E^{j_\ell}_{j_\ell}\Big)\Big|=O_q(1).$$

\textbf{Claim 4}: Let $\{j_i: i\in[m]\}$, $\{j'_{i'}:
i'\in[m']\}$, $\tau$ and $\tau'$ be defined as in Remark
\ref{rem-form-run}.
\begin{itemize}
 \item[1)] If $\tau>1$, then
 $\big|\big(\bigcup_{\ell=1}^{\tau-1}E^{j_\ell}_{j_\ell}\big)\backslash
 (E^{i_1}\cup E^{i_2})\big|=(q-1)(2\tau-4)n+(q^2-8q+4)\tau+1_{\sigma_1>0}(q-1)n+O_q(1)$,
 where $1_{\sigma_1>0}=1$ if $\sigma_1>0$ and $1_{\sigma_1>0}=0$ if $\sigma_1=0$.
 \item[2)] If $\tau<m-1$, then
 $\Big|\left(\bigcup_{\ell=\tau+2}^{m}E^{j_\ell}_{j_\ell}\right)\backslash
 (E^{i_1}\cup E^{i_2})\Big|=(q-1)(2m-2\tau-4)n+(q^2-8q+4)(m-\tau)+1_{\sigma_2>0}(q-1)n+O_q(1)$,
 where $1_{\sigma_2>0}=1$ if $\sigma_2>0$ and $1_{\sigma_2>0}=0$ if $\sigma_2=0$.
\end{itemize}


\textbf{Claim 5}: Let $\{j_i: i\in[m]\}$, $\{j'_{i'}: i'\in[m]\}$,
$\tau$ and $\tau'$ be defined as in Remark \ref{rem-form-run}.
Denote
$\Omega=\Big(\bigcup_{\ell=1}^{\tau-1}E^{j_\ell}_{j_\ell}\Big)\cup
\Big(\bigcup_{\ell=\tau+2}^{m}E^{j_\ell}_{j_\ell}\Big)\cup(E^{i_1}\cup
E^{i_2})$. If $\tau>1$ and $\tau'<m'-1$, we have
$$\Big|E^{j_{\tau-1}}_{j'_{\tau'+2}}\cup\Omega
\Big|=|\Omega|+(q-1)n+O_q(1).$$ If $\tau<m-1$ and $\tau'>1$, we
have
$$\Big|E^{j_{\tau+2}}_{j'_{\tau'-1}}\cup\Omega
\Big|=|\Omega|+(q-1)n+O_q(1).$$ Moreover, if $1<\tau<m-1$ and
$1<\tau'<m'-1$, then
$$\Big|E^{j_{\tau-1}}_{j'_{\tau'+2}}\cup
E^{j_{\tau+2}}_{j'_{\tau'-1}}\cup\Omega\Big|
=|\Omega|+2(q-1)n+O_q(1).$$

Now we can prove Lemma \ref{lem-both-i1-i2}.

\begin{proof}[Proof of Lemma \ref{lem-both-i1-i2}]
Let $\{j_i: i\in[m]\}$, $\{j'_{i'}: i'\in[m]\}$, $\tau$ and
$\tau'$ be defined as in Remark \ref{rem-form-run}.

First, suppose $\tau=m-1$. By 1) of Claim 4, we can obtain
\begin{align}\label{Sum-L-EE}
\Big|\Big(\bigcup_{\ell=1}^{\tau-1}E^{j_\ell}_{j_\ell}\Big)\big\backslash
(E^{i_1}\cup
E^{i_2})\Big|&=(q-1)[2(m-1)-4]n+(q^2-8q+4)(m-1)+1_{\sigma_1>0}(q-1)n+O_q(1)\nonumber\\
&=2(q-1)mn-6(q-1)n+(q^2-8q+4)m+1_{\sigma_1>0}(q-1)n+O_q(1)\nonumber\\
&\leq
2(q-1)(n-\sigma_1-\sigma_2)n-6(q-1)n+(q^2-8q+4)(n-\sigma_1-\sigma_2)+1_{\sigma_1>0}(q-1)n+O_q(1)\nonumber\\
&=2(q-1)n^2+(q^2-14q+10)n-(q-1)(2\sigma_1-1_{\sigma_1>0})n-2(q-1)\sigma_2n+O_q(1)\nonumber\\
&\leq 2(q-1)n^2+(q^2-14q+10)n-2(q-1)\sigma_2n+O_q(1)
\end{align} where the first inequality holds because by
\eqref{eq-seqx-form}, we have $m\leq n-\sigma_1-\sigma_2$. To
compute $\big|B^{\text{DS}}_{1,2}(\bm x,\bm x')\big|$, we need to
consider the following two cases.

Case 1: $\sigma_2=0$. Noticing that $\tau=m-1$, by
\eqref{form-x-xp-1} and \eqref{form-x-xp-2}, we have
$\bigcup_{\ell=\tau+2}^{m}E^{j_\ell}_{j_\ell}=\emptyset$,
$E^{j_{\tau-1}}_{j'_{\tau'+2}}=\emptyset$ and
$E^{j_{\tau+2}}_{j'_{\tau'-1}}=\emptyset$, so by Claim 2, we can
obtain $B^{\text{DS}}_{1,2}(\bm x,\bm
x')=\Big(\bigcup_{\ell=1}^{\tau-1}E^{j_\ell}_{j_\ell}\Big)\cup(E^{i_{1}}\cup
E^{i_{2}})\cup A,$ where $A\subseteq\Sigma_q^{n-1}$ and
$|A|=O_q(1)$. Further by \eqref{Sum-L-EE} and Claim 1, we can
obtain
\begin{align*}
\big|B^{\text{DS}}_{1,2}(\bm x,\bm
x')\big|&=\Big|\Big(\bigcup_{\ell=1}^{\tau-1}E^{j_\ell}_{j_\ell}\Big)\cup
(E^{i_1}\cup E^{i_2})\cup
A\Big|\\&=\Big|\Big(\bigcup_{\ell=1}^{\tau-1}E^{j_\ell}_{j_\ell}\Big)\big\backslash
(E^{i_1}\cup E^{i_2})\Big|+|E^{i_1}\cup E^{i_2}|+O_q(1)\\&\leq
2(q-1)n^2+(q^2-14q+10)n+(q-1)^2n^2-(4q^2-9q+5)n+O_q(1)\\
&=(q^2-1)n^2-(3q^2+5q-5)n+O_q(1)
\end{align*}

Case 2: $\sigma_2>0$. Noticing that $\tau=m-1$, by
\eqref{form-x-xp-1} and \eqref{form-x-xp-2}, we have
$\bigcup_{\ell=\tau+2}^{m}E^{j_\ell}_{j_\ell}=\emptyset$ and
$E^{j_{\tau+2}}_{j'_{\tau'-1}}=\emptyset$, so by Claim 2, we can
obtain $B^{\text{DS}}_{1,2}(\bm x,\bm
x')=\Big(\bigcup_{\ell=1}^{\tau-1}E^{j_\ell}_{j_\ell}\Big)\cup(E^{i_{1}}\cup
E^{i_{2}})\cup E^{j_{\tau-1}}_{j'_{\tau'+2}}\cup A,$ where
$A\subseteq\Sigma_q^{n-1}$ and $|A|=O_q(1)$. Further by Claim 5,
\eqref{Sum-L-EE} and Claim 1, we can obtain
\begin{align*}
\big|B^{\text{DS}}_{1,2}(\bm x,\bm
x')\big|&=\Big|\Big(\bigcup_{\ell=1}^{\tau-1}E^{j_\ell}_{j_\ell}\Big)\cup
(E^{i_1}\cup E^{i_2})\cup E^{j_{\tau-1}}_{j'_{\tau'+2}}\cup
A\Big|\\&=\Big|\Big(\bigcup_{\ell=1}^{\tau-1}E^{j_\ell}_{j_\ell}\Big)\big\backslash
(E^{i_1}\cup E^{i_2})\Big|+|E^{i_1}\cup
E^{i_2}|+(q-1)n+O_q(1)\\&\leq
2(q-1)n^2+(q^2-14q+10)n-2(q-1)\sigma_2n+(q-1)^2n^2-(4q^2-9q+5)n+(q-1)n+O_q(1)\\
&=(q^2-1)n^2-(3q^2+4q-4)n-2(q-1)\sigma_2n+O_q(1)\\
&\leq(q^2-1)n^2-(3q^2+6q-6)n+O_q(1)\\
&\leq(q^2-1)n^2-(3q^2+5q-5)n+O_q(1)
\end{align*}
where the last inequality holds because
$\leq(q^2-1)n^2-(3q^2+6q-6)n<(q^2-1)n^2-(3q^2+5q-5)n$. Thus, for
$\tau=m-1$, we have $\big|B^{\text{DS}}_{1,2}(\bm x,\bm
x')\big|\leq(q^2-1)n^2-(3q^2+5q-5)n+O_q(1)$. Similarly, if
$\tau=1$, we can prove $\big|B^{\text{DS}}_{1,2}(\bm x,\bm
x')\big|\leq(q^2-1)n^2-(3q^2+5q-5)n+O_q(1)$.

Finally, we suppose $1<\tau<m-1$. By Claim 3 and Claim 4, we can
obtain
\begin{align*}
|\Omega|&=\Big|\Big(\bigcup_{\ell=1}^{\tau-1}E^{j_\ell}_{j_\ell}\Big)\big\backslash
(E^{i_1}\cup
E^{i_2})\Big|+\Big|\Big(\bigcup_{\ell=1}^{\tau-1}E^{j_\ell}_{j_\ell}\Big)\big\backslash
(E^{i_1}\cup
E^{i_2})\Big|+|E^{i_1}\cup E^{i_2}|+O_q(1)\nonumber\\
&=(q-1)(2m-8)n+(q^2-8q+4)m+(1_{\sigma_1>0}+1_{\sigma_2>0})(q-1)n\nonumber\\
&~~~~+(q-1)^2n^2-(4q^2-9q+5)n+O_q(1)\nonumber\\
&=(q-1)^2n^2+2(q-1)mn+(q^2-8q+4)m-(4q^2-q-3)n+(1_{\sigma_1>0}+1_{\sigma_2>0})(q-1)n+O_q(1)\nonumber\\
&\leq
(q-1)^2n^2+2(q-1)(n-\sigma_1-\sigma_2)n+(q^2-8q+4)(n-\sigma_1-\sigma_2)-(4q^2-q-3)n\nonumber\\
&~~~~+(1_{\sigma_1>0}+1_{\sigma_2>0})(q-1)n+O_q(1)\nonumber\\
&=(q^2-1)n^2-(3q^2+7q-7)n-(q-1)\big(2\sigma_1+2\sigma_2-1_{\sigma_1>0}-1_{\sigma_2>0}\big)n+O_q(1)\nonumber\\
&\leq (q^2-1)n^2-(3q^2+7q-7)n+O_q(1)
\end{align*}
where
$\Omega=\Big(\bigcup_{\ell=1}^{\tau-1}E^{j_\ell}_{j_\ell}\Big)\cup
\Big(\bigcup_{\ell=\tau+2}^{m}E^{j_\ell}_{j_\ell}\Big)\cup(E^{i_1}\cup
E^{i_2})$. Further, by Claim 2 and Claim 5, we can obtain
\begin{align*}
\big|B^{\text{DS}}_{1,2}(\bm x,\bm
x')\big|&=\Big|E^{j_{\tau-1}}_{j'_{\tau'+2}}\cup
E^{j_{\tau+2}}_{j'_{\tau'-1}}\cup\Omega\cup
A\Big|\\&=|\Omega|+2(q-1)n+O_q(1)\\
&\leq(q^2-1)n^2-(3q^2+7q-7)n+2(q-1)n+O_q(1)\\
&=(q^2-1)n^2-(3q^2+5q-5)n+O_q(1)
\end{align*}

Thus, we always have $\big|B^{\text{DS}}_{1,2}(\bm x,\bm
x')\big|\leq(q^2-1)n^2-(3q^2+5q-5)n+O_q(1)$, which completes the
proof.
\end{proof}

According to the proof of Lemma \ref{lem-both-i1-i2}, when
$1<\tau<m-1$, $\sigma_1=\sigma_2=0$ and $m=n$, we can obtain
$\big|B^{\text{DS}}_{1,2}(\bm x,\bm
x')\big|=(q^2-1)n^2-(3q^2+5q-5)n+O_q(1)$. This implies that the
bound in Lemma \ref{lem-both-i1-i2} is tight up to an additive
constant. For example, let $\bm x=101010A_{n-6}(10)$ and $\bm
x'=100110A_{n-6}(10)$, where $n\geq 6$ and $A_{n-6}(10)$ is the
length-$n-6$ sequence that starts with $1$ and consists of
alternating symbols $1$ and $0$. Then
$\big|B^{\text{DS}}_{1,2}(\bm x,\bm
x')\big|=(q^2-1)n^2-(3q^2+5q-5)n+O_q(1)$.

\subsection{For the case that either $x_{[n]\backslash\{i_1\}}
\neq x'_{[n]\backslash\{i_2\}}$ or $x_{[n]\backslash\{i_2\}} \neq
x'_{[n]\backslash\{i_1\}}$ but not both}

Without loss of generality, we assume that
$x_{[n]\backslash\{i_1\}}=x'_{[n]\backslash\{i_2\}}$ and
$x_{[n]\backslash\{i_2\}} \neq x'_{[n]\backslash\{i_1\}}$. From
the assumption
$x_{[n]\backslash\{i_1\}}=x'_{[n]\backslash\{i_2\}}$, we can
obtain $x'_{j'}=x_{j'+1}$ for all $j'\in[i_1,i_2-1]$. Moreover,
since $\text{supp}(\bm x-\bm x')=\{i_1,i_2\}$, we have
$x_{j'}=x'_{j'}$ for each $j'\in[i_1+1,i_2-1]$. Hence, $\bm x$ and
$\bm x'$ can be written as the form
\begin{align*}
\bm x &=\!~\bm w~a~b^{\sigma_0}~b~\bm w'\\
\bm x'&=\!~\bm w~b~b^{\sigma_0}~c~\bm w'
\end{align*}
where $\bm w=x_{[1,i_1-1]}=x'_{[1,i_1-1]}, a=x_{i_1},
b=x'_{i_1}=x_{i_2}, c=x'_{i_2}$, $\bm
w'=x_{[i_2+1,n]}=x'_{[i_2+1,n]}$ and $\sigma_0=i_2-i_1-1\geq 0$.
Since $x_{[n]\backslash\{i_2\}} \neq x'_{[n]\backslash\{i_1\}}$,
we have $a\neq c$ if $\sigma_0=0$.

Let $\{j_1,j_2,\cdots,j_m\}$, $\{j'_1,j'_2,\cdots,j'_{m'}\}$,
$\tau$ and $\tau'$ be defined as in Remark \ref{rem-form-run}.
Then we can write
\begin{align}\label{eq-seqx-form-nb}\bm
x=\bar{x}_1^{k_1}~\cdots~\bar{x}_{\tau-1}^{k_{\tau-1}}~a^{\sigma_1}
~a~b^{\sigma_0}~b~b^{\sigma_2}
~\bar{x}_{\tau+2}^{k_{\tau+2}}~\cdots~\bar{x}_{m}^{k_{m}}\end{align}
and \begin{align}\label{eq-seqxp-form-nb}\bm
x'=\tilde{x}_1^{k'_1}~\cdots~\tilde{x}_{\tau'-1}^{k'_{\tau'-1}}~b^{\sigma'_1}
~b~b^{\sigma_0}~c~c^{\sigma'_2}
~\tilde{x}_{\tau'+2}^{k'_{\tau'+2}}~\cdots~\tilde{x}_{m'}^{k'_{m'}}\end{align}
where $\sigma_1,\sigma_2,\sigma_1',\sigma_2'\geq 0$ such that
$a^{\sigma_1}a=\bar{x}_{\tau}^{k_{\tau}}$,
$b^{\sigma_0}\!~b\!~b^{\sigma_2}=\bar{x}_{\tau+1}^{k_{\tau+1}}$,
$b^{\sigma'_1}b\!~b^{\sigma_0}~=\tilde{x}_{\tau'}^{k'_{\tau'}}$
and $a\!~a^{\sigma'_2}=\tilde{x}_{\tau'+1}^{k'_{\tau'+1}}$. Then
by Observation 1, we have $B^{\text{DS}}_{1,2}(\bm x,\bm
x')=\big(\bigcup_{i=1}^{\tau-1}E^{j_i}\big)\cup E^{i_1}\cup
E^{i_2}\cup\big(\bigcup_{i=\tau+2}^{m}E^{j_i}\big)$. Moreover, for
each $i\in[1,\tau-1]$, we have
$E^{j_i}=\big(\bigcup_{\ell=1}^{\tau-1}E^{j_i}_{j_\ell}\big)\cup
E^{j_i}_{j_{\tau-1}+1}\cup E^{j_i}_{i_1}\cup
E^{j_i}_{i_2}\cup\big(\bigcup_{\ell'=\tau'+2}^{m'}E^{j_i}_{j'_{\ell'}}\big)$;
for each $i\in[\tau+2,m]$, we have
$E^{j_i}=\big(\bigcup_{\ell'=1}^{\tau'-1}E^{j_i}_{j_{\ell'}}\big)\cup
E^{j_i}_{i_1}\cup E^{j_i}_{i_2}\cup E^{j_i}_{i_2+1}\cup
\big(\bigcup_{\ell=\tau+2}^{m}E^{j_i}_{j_{\ell}}\big)$. As in
Claim 1, we have
$E^{i_1}=B^{\text{S}}_{2}\big(x_{[n]\backslash\{i_1\}}\big)
=B^{\text{S}}_{2}\big(x'_{[n]\backslash\{i_2\}}\big)$ and
$$|E^{i_1}|=\xi^{q,n-1}_{0,2}=\frac{(q-1)^2}{2}n^2
-\frac{(3q-5)(q-1)}{2}n+(q^2-3q+3)$$ Similar to Claim 6 and Claim
6$'~($see Appendix A$)$, we can obtain
\begin{itemize}
 \item[1)] $E^{i_2}\subseteq\bigcup_{j'\in J}E^{i_2}_{j'}$ and
 $|E^{i_2}|\leq 2\xi^{q,n-1}_{1,2}+2\xi^{q,n-1}_{2,2}+O_q(1)$, where
 $J=\{j'_{\tau'-3}, j'_{\tau'-2}, j'_{\tau'-1},
 i_1, i_2, j_{\tau+2}, j_{\tau+3}, j_{\tau+4}\}$.
 \item[2)] $E^{j_i}\subseteq
 \bigcup\nolimits_{\ell=\max\{1,i-2\}}^{\min\{m,i+2\}}
 E^{j_\ell}_{j_\ell}, ~~\forall\!~ i\in[1, \tau-5]\cup[\tau+5,m].$
 \item[3)] $\bigcup_{i=\tau-4}^{\tau-1}E^{j_{i}}\subseteq
 \left(\bigcup_{\ell=\tau-6}^{\tau-1}E^{j_\ell}_{j_\ell}\right)
 \cup E^{j_{\tau-1}}_{i_1}\cup E^{j_{\tau-1}}_{j_{\tau'+2}}\cup A$,
 where
 $\big|E^{j_{\tau-1}}_{i_1}\big|=\big|E^{j_{\tau-1}}_{\tau'+2}\big|
 =\xi^{q,n-1}_{2,2}$ and $|A|=O_q(1)$.
 \item[4)] $\bigcup_{i=\tau+2}^{\tau+4}E^{j_{i}}\subseteq
 \left(\bigcup_{\ell=\tau+2}^{\tau+6}E^{j_\ell}_{j_\ell}\right)
 \cup E^{j_{\tau+2}}_{i_1}\cup A'$, where
 $\big|E^{j_{\tau+2}}_{i_1}\big|
 \leq\xi^{q,n-1}_{2,2}$ and $|A'|=O_q(1)$.
\end{itemize}
Thus, we can obtain
\begin{align}\label{eq1-pf-main-thm}
\left|B^{\text{DS}}_{1,2}(\bm x,\bm
x')\right|&=\left|\left(\bigcup_{i=1}^{\tau-1}E^{j_i}\right)\cup
E^{i_1}\cup E^{i_2}\cup\left(\bigcup_{i=\tau+2}^{m}E^{j_i}\right)\right|\nonumber\\
&=\left|\left(\bigcup_{\ell=1}^{\tau-1}E^{j_\ell}_{j_\ell}\right)\cup
E^{i_1}\cup E^{i_2}\cup E^{j_{\tau-1}}_{i_1}\cup
E^{j_{\tau-1}}_{j_{\tau'+2}}\cup
E^{j_{\tau+2}}_{i_1}\cup\left(\bigcup_{\ell=\tau+2}^{m}E^{j_\ell}_{j_\ell}\right)\cup A\cup A'\right|\nonumber\\
&\leq\sum_{\ell=1}^{\tau-1}\left|E^{j_\ell}_{j_\ell}\right|
+\left|E^{i_1}\right|+\left|E^{i_2}\right|+\left|E^{j_{\tau-1}}_{i_1}\right|
+\left|E^{j_{\tau-1}}_{j_{\tau'+2}}\right|
+\left|E^{j_{\tau+2}}_{i_1}\right|
+\sum_{\ell=\tau+2}^{m}\left|E^{j_\ell}_{j_\ell}\right|+|A|+|A'|\nonumber\\
&\leq\xi^{q,n-1}_{0,2}+2\xi^{q,n-1}_{1,2}+(m+3)\xi^{q,n-1}_{2,2}+O_q(1)\nonumber\\
&=\frac{(q-1)^2}{2}n^2-\frac{(3q-5)(q-1)}{2}n+(q^2-3q+3)+2[q(q-1)n-2q^2+3q]\nonumber\\
&~~~~+(m+3)[2(q-1)n+q^2-6q+6]+O_q(1)\nonumber\\
&\leq \frac{(q-1)^2}{2}n^2-\frac{(3q-5)(q-1)}{2}n+(q^2-3q+3)+2[q(q-1)n-2q^2+3q]\nonumber\\
&~~~~+(n+3)[2(q-1)n+q^2-6q+6]+O_q(1)\nonumber\\
&=\frac{(q+3)(q-1)}{2}n^2+\frac{(q+17)(q-1)}{2}n+O_q(1)\nonumber\\
&\leq(q^2-1)n^2-(3q^2+5q-5)n+O_q(1).
\end{align}
For the last inequality, noticing that
$\frac{(q+3)(q-1)}{2}<q^2-1$, so for $n$ sufficiently large, we
can obtain $\frac{(q+3)(q-1)}{2}n^2+\frac{(q+17)(q-1)}{2}n+O_q(1)
\leq(q^2-1)n^2-(3q^2+5q-5)n+O_q(1)$.


\subsection{For the case that both $x_{[n]\backslash\{i_1\}}
\neq x'_{[n]\backslash\{i_2\}}$ and $x_{[n]\backslash\{i_2\}}\neq
x'_{[n]\backslash\{i_1\}}$}

Similar to Remark \ref{rem-form-run}, let $1\leq
j_1<j_2<\cdots<j_m\leq n$ be such that $x_{[j_{i-1}+1,j_i]}$,
$i=1,2,\cdots, m$, are all runs of $\bm x$, and $1\leq
j'_1<j'_2<\cdots<j'_{m'}\leq n$ be such that
$x'_{[j'_{i'-1}+1,j'_{i'}]}$, $i'=1,2,\cdots, m'$, are all runs of
$\bm x'$, where we set $j_0=j'_0=0$. By Observation 1, we have
$B^{\text{DS}}_{1,2}(\bm x,\bm x')=\bigcup_{i=1}^m
E^{j_i}=\bigcup_{i=1}^m\bigcup_{i'=1}^{m'}E^{j_i}_{j'_{i'}}$.
Therefore, \begin{align*}
\left|B^{\text{DS}}_{1,2}(\bm x,\bm
x')\right|\leq\sum_{i=1}^m\sum_{i'=1}^{m'}\left|E^{j_i}_{j'_{i'}}\right|.
\end{align*}
For each $d'\in\{0,1,2,3,4\}$, we will use Lemma
\ref{lem-dtn-jpj-run} to estimate the number of pairs $(j_{i},
j'_{i'})$ such that $d_{\text{H}}\big(x_{[n]\backslash\{j_{i}\}},
x'_{[n]\backslash\{j'_{i'}\}}\big)=d'$. As
$x_{[n]\backslash\{i_1\}}\neq x'_{[n]\backslash\{i_2\}}$ and
$x_{[n]\backslash\{i_2\}}\neq x'_{[n]\backslash\{i_1\}}$, we can
obtain $\Delta_{[i_1,i_{2}]}>0$ and $\Delta'_{[i_1,i_{2}]}>0$,
where $\Delta_{[i_1,i_{2}]}$ and $\Delta'_{[i_1,i_{2}]}$ are
defined in Lemma \ref{lem-dtn-jpj-run}. We first consider
$j_{\ell}\leq j'_{\ell'}$, for which we have the following three
cases.
\begin{enumerate}
\item[1.] $\big|\{i_1,i_2\}\backslash[j_{i}, j'_{i'}]\big|=0$.
Then it must be $j_{i}\leq i_1<i_2\leq j'_{i'}$ and
$S_{[i_1,i_{2}]}=0$. As $x_{[n]\backslash\{i_1\}}\neq
x'_{[n]\backslash\{i_2\}})$, we have $\Delta_{[i_1,i_{2}]}>0$.
Hence, according to 4) of Lemma \ref{lem-dtn-jpj-run} $($with
$\lambda=1$ and $\lambda'=2)$, there is no pair
$(j_{\ell},j'_{\ell'})$ such that
$d_{\text{H}}(x_{[n]\backslash\{j_\ell\}},
x'_{[n]\backslash\{j'_{\ell'}\}}\big)=0$. Moreover, for each
$d'\in\{1,2,3,4\}$, there are at most $d'$ pairs $(j_{i},
j'_{i'})$ such that $d_{\text{H}}\big(x_{[n]\backslash\{j_{i}\}},
x'_{[n]\backslash\{j'_{i'}\}}\big)=d'$.

\item[2.] $\big|\{i_1,i_2\}\backslash[j_{i}, j'_{i'}]\big|=1$.
Then it must be $j_{i}\leq i_1\leq j'_{i'}<i_2~($then
$\lambda=\lambda'=1)$ or $i_1<j_{i}\leq i_2\leq j'_{i'}~($then
$\lambda=\lambda'=2)$. For each case, by definition, we have
$S_{[i_\lambda,i_{\lambda'}]}=1$ and
$\Delta_{[i_\lambda,i_{\lambda'}]}=0$. According to 4) of Lemma
\ref{lem-dtn-jpj-run}, there is no pair $(j_{\ell},j'_{\ell'})$
such that $d_{\text{H}}(x_{[n]\backslash\{j_\ell\}},
x'_{[n]\backslash\{j'_{\ell'}\}}\big)=0$. Moreover, for each case
and each $d'\in\{1,2,3,4\}$, there are at most $d'$ pairs $(j_{i},
j'_{i'})$ such that $d_{\text{H}}\big(x_{[n]\backslash\{j_{i}\}},
x'_{[n]\backslash\{j'_{i'}\}}\big)=d'$.

\item[3.] $\big|\{i_1,i_2\}\backslash[j_{i}, j'_{i'}]\big|=2$.
Then we have $\{i_1,i_2\}\cap[j_{i}, j'_{i'}]=\emptyset$. For
$d'\in\{0,1\}$, by 1) of Lemma \ref{lem-dtn-jpj-run}, there is no
pair $(j_{\ell},j'_{\ell'})$ such that
$d_{\text{H}}(x_{[n]\backslash\{j_\ell\}},
x'_{[n]\backslash\{j'_{\ell'}\}}\big)=d'$. By 2) of Lemma
\ref{lem-dtn-jpj-run}, there are at most $n-2$ pairs $(j_{i},
j'_{i'})$ such that $d_{\text{H}}\big(x_{[n]\backslash\{j_{i}\}},
x'_{[n]\backslash\{j'_{i'}\}}\big)=2$. Moreover, for each
$d'\in\{3,4\}$, by 3) of Lemma \ref{lem-dtn-jpj-run}, there are at
most $n-2$ pairs $(j_{i}, j'_{i'})$ such that
$d_{\text{H}}\big(x_{[n]\backslash\{j_{i}\}},
x'_{[n]\backslash\{j'_{i'}\}}\big)=d'$.
\end{enumerate}
Similarly, for $j_{\ell}>j'_{\ell'}$, we have the following three
cases.
\begin{enumerate}
\item[1.] $\big|\{i_1,i_2\}\backslash[j'_{i'}, j_{i}]\big|=0$.
Then it must be $j'_{i'}\leq i_1<i_2\leq j_{i}$ and
$S_{[i_1,i_{2}]}=0$. As $x_{[n]\backslash\{i_2\}}\neq
x'_{[n]\backslash\{i_1\}})$, we have $\Delta'_{[i_1,i_{2}]}>0$.
Hence, according to 5) of Lemma \ref{lem-dtn-jpj-run} $($with
$\lambda=1$ and $\lambda'=2)$, there is no pair
$(j_{\ell},j'_{\ell'})$ such that
$d_{\text{H}}(x_{[n]\backslash\{j_\ell\}},
x'_{[n]\backslash\{j'_{\ell'}\}}\big)=0$. Moreover, for each
$d'\in\{1,2,3,4\}$, there are at most $d'$ pairs $(j_{i},
j'_{i'})$ such that $d_{\text{H}}\big(x_{[n]\backslash\{j_{i}\}},
x'_{[n]\backslash\{j'_{i'}\}}\big)=d'$.

\item[2.] $\big|\{i_1,i_2\}\backslash[j'_{i'}, j_{i}]\big|=1$.
Then it must be $j'_{i'}\leq i_1\leq j_{i}<i_2$ or
$i_1<j'_{i'}\leq i_2\leq j_{i}$. For each case, by 5) of Lemma
\ref{lem-dtn-jpj-run}, there is no pair $(j_{\ell},j'_{\ell'})$
such that $d_{\text{H}}(x_{[n]\backslash\{j_\ell\}},
x'_{[n]\backslash\{j'_{\ell'}\}}\big)=0$. Moreover, for each case
and each $d'\in\{1,2,3,4\}$, there are at most $d'$ pairs $(j_{i},
j'_{i'})$ such that $d_{\text{H}}\big(x_{[n]\backslash\{j_{i}\}},
x'_{[n]\backslash\{j'_{i'}\}}\big)=d'$.

\item[3.] $\big|\{i_1,i_2\}\backslash[j'_{i'}, j_{i}]\big|=2$.
Then we have $\{i_1,i_2\}\cap[j'_{i'}, j_{i}]=\emptyset$. For
$d'\in\{0,1\}$, by 1) of Lemma \ref{lem-dtn-jpj-run}, there is no
pair $(j_{\ell},j'_{\ell'})$ such that
$d_{\text{H}}(x_{[n]\backslash\{j_\ell\}},
x'_{[n]\backslash\{j'_{\ell'}\}}\big)=d'$. By 2) of Lemma
\ref{lem-dtn-jpj-run}, there is no pair $(j_{i}, j'_{i'})$ such
that $d_{\text{H}}\big(x_{[n]\backslash\{j_{i}\}},
x'_{[n]\backslash\{j'_{i'}\}}\big)=2~($noticing that
$j_{\ell}>j'_{\ell'})$. Moreover, for each $d'\in\{3,4\}$, by 3)
of Lemma \ref{lem-dtn-jpj-run}, there are at most $n-2$ pairs
$(j_{i}, j'_{i'})$ such that
$d_{\text{H}}\big(x_{[n]\backslash\{j_{i}\}},
x'_{[n]\backslash\{j'_{i'}\}}\big)=d'$.
\end{enumerate}
For reference, we list in Table 2 the number of pairs
$(j_{\ell},j'_{\ell'})$ such that
$d_{\text{H}}(x_{[n]\backslash\{j_\ell\}},
x'_{[n]\backslash\{j'_{\ell'}\}}\big)=d'$ for each
$d'\in\{0,1,2,3,4\}$ and each case. By the above discussions, and
by Lemma \ref{lem-sub-int-size} and \eqref{xxp-case2}, we can
obtain
\begin{align}\label{eq2-pf-main-thm}
\left|B^{\text{DS}}_{1,2}(\bm x,\bm
x')\right|&\leq\sum_{i=1}^m\sum_{i'=1}^{m'}\left|E^{j_i}_{j'_{i'}}\right|\nonumber\\
&\leq 6\xi^{q,n-1}_{1,2}+(n+10)\xi^{q,n-1}_{2,2}+(2n+14)\xi^{q,n-1}_{3,2}+(2n+20)\xi^{q,n-1}_{4,2}\nonumber\\
&=6[q(q-1)n-2q^2+3q]+(n+10)[2(q-1)n+q^2-6q+6]+(2n+14)(6q-6)+(2n+20)(6)\nonumber\\
&=2(q-1)n^2+(7q^2+20q-14)n+O_q(1)\nonumber\\
&\leq (q^2-1)n^2-(3q^2+5q-5)n+O_q(1)
\end{align}
where the last inequality holds for sufficiently large $n$.

\begin{table}[htbp]
\begin{center}
\small
\renewcommand\arraystretch{1.4}
\begin{tabular}{|p{4.6cm}|p{1.3cm}|p{1.3cm}|p{1.3cm}|p{1.3cm}|p{1.3cm}|p{1.3cm}|}
\hline  ~ & $d'=0$ & $d'=1$ & $d'=2$ & $d'=3$ & $d'=4$   \\
\hline $j_{i}\leq j'_{i'}$ and $\big|\{i_1,i_2\}\backslash[j'_{i'}, j_{i}]\big|=0$ & $~~~~0$ & $1$ & $2$ & $3$ & $4$   \\
\hline $j_{i}\leq j'_{i'}$ and $\big|\{i_1,i_2\}\backslash[j'_{i'}, j_{i}]\big|=1$ & $~~~~0$ & $2$ & $4$ & $6$ & $8$   \\
\hline $j_{i}\leq j'_{i'}$ and $\big|\{i_1,i_2\}\backslash[j'_{i'}, j_{i}]\big|=2$ & $~~~~0$ & $0$ & $n-2$ & $n-2$ & $n-2$   \\
\hline $j_{i}>j'_{i'}$ and $\big|\{i_1,i_2\}\backslash[j'_{i'}, j_{i}]\big|=0$     & $~~~~0$ & $1$ & $2$ & $3$ & $4$   \\
\hline $j_{i}>j'_{i'}$ and $\big|\{i_1,i_2\}\backslash[j'_{i'}, j_{i}]\big|=1$     & $~~~~0$ & $2$ & $4$ & $6$ & $8$   \\
\hline $j_{i}>j'_{i'}$ and $\big|\{i_1,i_2\}\backslash[j'_{i'}, j_{i}]\big|=2$     & $~~~~0$ & $0$ & $0$ & $n-2$ & $n-2$   \\
\hline
\end{tabular}\\
\end{center}
\vspace{0.0mm} Table 2. The upper bound of the number of pairs
$(j_{\ell},j'_{\ell'})$ such that
$d_{\text{H}}(x_{[n]\backslash\{j_\ell\}},
x'_{[n]\backslash\{j'_{\ell'}\}}\big)=d'$ for each
$d'\in\{0,1,2,3,4\}$ and each case when $d=2$,
$x_{[n]\backslash\{i_1\}} \neq x'_{[n]\backslash\{i_2\}}$ and
$x_{[n]\backslash\{i_2\}}\neq x'_{[n]\backslash\{i_1\}}$. The
total number of pairs $(j_{\ell},j'_{\ell'})$ such that
$d_{\text{H}}(x_{[n]\backslash\{j_\ell\}},
x'_{[n]\backslash\{j'_{\ell'}\}}\big)=d'$ is the sum of the
corresponding column. For example, the total number of pairs
$(j_{\ell},j'_{\ell'})$ such that
$d_{\text{H}}(x_{[n]\backslash\{j_\ell\}},
x'_{[n]\backslash\{j'_{\ell'}\}}\big)=d'=2$ is
$2+4+(n-2)+2+4=n+10$.
\end{table}

\section{Intersection of error balls of sequences with
Hamming distance $d\geq 3$}

Throughout this section, we suppose $\bm x,\bm x'\in\Sigma_q^n$
with $d=d_{\text{H}}(\bm x, \bm x')\geq 3$. Let $1\leq
j_1<j_2<\cdots<j_m\leq n$ be such that $x_{[j_{i-1}+1,j_i]}$,
$i=1,2,\cdots, m$, are all runs of $\bm x$, and $1\leq
j'_1<j'_2<\cdots<j'_{m'}\leq n$ be such that
$x'_{[j'_{i'-1}+1,j'_{i'}]}$, $i'=1,2,\cdots, m'$, are all runs of
$\bm x'$, where we set $j_0=j'_0=0$. We will prove
$\left|B^{\text{DS}}_{1,2}(\bm x,\bm x')\right|\leq
(q^2-1)n^2-(3q^2+5q-5)n+O_q(1)$. The basic idea of the proof is
the same as that in Section III. C. First, for each
$d'\in\{0,1,2,3,4\}$, we can use Lemma \ref{lem-dtn-jpj-run} to
estimate the number of pairs $(j_{i}, j'_{i'})$ such that
$d_{\text{H}}\big(x_{[n]\backslash\{j_{i}\}},
x'_{[n]\backslash\{j'_{i'}\}}\big)=d'$. Then by Observation 1, and
by Lemma \ref{lem-sub-int-size} and \eqref{xxp-case2}, we can find
$\left|B^{\text{DS}}_{1,2}(\bm x,\bm
x')\right|=\left|\bigcup_{i=1}^m\bigcup_{i'=1}^{m'}E^{j_i}_{j'_{i'}}\right|
\leq\sum_{i=1}^m\sum_{i'=1}^{m'}\left|E^{j_i}_{j'_{i'}}\right|\leq
(q^2-1)n^2-(3q^2+5q-5)n+O_q(1)$.

We divide our proof into three cases: $d=3$, $d=4$ and $d\geq 5$.

\subsection{Intersection of error balls of sequences with Hamming
distance $d=3$}

Suppose $\text{supp}(\bm x-\bm x')=\{i_1,i_2,i_3\}$ and
$i_1<i_2<i_3$. For each $d'\in\{0,1,2,3,4\}$, we will use Lemma
\ref{lem-dtn-jpj-run} to estimate the number of pairs $(j_{i},
j'_{i'})$ such that $d_{\text{H}}\big(x_{[n]\backslash\{j_{i}\}},
x'_{[n]\backslash\{j'_{i'}\}}\big)=d'$. First, consider $j_{i}\leq
j'_{i'}$. We have the following four cases.
\begin{enumerate}
\item[1.] $\big|\{i_1,i_2,i_3\}\backslash[j_{i}, j'_{i'}]\big|=0$.
Then it must be $j_{i}\leq i_1<i_3\leq j'_{i'}$ and
$S_{[i_1,i_{3}]}=0$. Clearly, by definition, we have
$\Delta_{[i_1,i_{3}]}\geq 0$. By 4) of Lemma \ref{lem-dtn-jpj-run}
$($with $\lambda=1$ and $\lambda'=3)$, for each
$d'\in\{0,1,2,3,4\}$, there are at most $d'+1$ pairs $(j_{i},
j'_{i'})$ such that $d_{\text{H}}\big(x_{[n]\backslash\{j_{i}\}},
x'_{[n]\backslash\{j'_{i'}\}}\big)=d'$.

\item[2.] $\big|\{i_1,i_2,i_3\}\backslash[j_{i}, j'_{i'}]\big|=1$.
Then it must be $j_{i}\leq i_1<i_2\leq j'_{i'}<i_3~($then
$\lambda=1$ and $\lambda'=2)$ or $i_1<j_{i}\leq i_2<i_3\leq
j'_{i'}~($then $\lambda=2$ and $\lambda'=3)$. For each case, we
have $S_{[i_\lambda,i_{\lambda'}]}=1$ and
$\Delta_{[i_\lambda,i_{\lambda'}]}\geq 0$. By 4) of Lemma
\ref{lem-dtn-jpj-run}, there is no pair $(j_{\ell},j'_{\ell'})$
such that $d_{\text{H}}(x_{[n]\backslash\{j_\ell\}},
x'_{[n]\backslash\{j'_{\ell'}\}}\big)=0$. Moreover, for each case
and each $d'\in\{1,2,3,4\}$, there are at most $d'$ pairs $(j_{i},
j'_{i'})$ such that $d_{\text{H}}\big(x_{[n]\backslash\{j_{i}\}},
x'_{[n]\backslash\{j'_{i'}\}}\big)=d'$.

\item[3.] $\big|\{i_1,i_2,i_3\}\backslash[j_{i}, j'_{i'}]\big|=2$.
Then it must be one of the following cases: $j_{i}\leq i_1\leq
j'_{i'}<i_2~($then $\lambda=\lambda'=1)$; $i_1<j_{i}\leq i_2\leq
j'_{i'}<i_3~($then $\lambda=\lambda'=2)$; or $i_2<j_{i}\leq
i_3\leq j'_{i'}~($then $\lambda=\lambda'=3)$. For each case, we
have $S_{[i_\lambda,i_{\lambda'}]}=2$ and
$\Delta_{[i_\lambda,i_{\lambda'}]}\geq 0$. By 4) of Lemma
\ref{lem-dtn-jpj-run}, for each case and each $d'\in\{0,1\}$,
there is no pair $(j_{\ell},j'_{\ell'})$ such that
$d_{\text{H}}(x_{[n]\backslash\{j_\ell\}},
x'_{[n]\backslash\{j'_{\ell'}\}}\big)=d'$. Moreover, for each case
and each $d'\in\{2,3,4\}$, there are at most $d'-1$ pairs $(j_{i},
j'_{i'})$ such that $d_{\text{H}}\big(x_{[n]\backslash\{j_{i}\}},
x'_{[n]\backslash\{j'_{i'}\}}\big)=d'$.

\item[4.] $\big|\{i_1,i_2,i_3\}\backslash[j_{i}, j'_{i'}]\big|=3$.
Then we have $\{i_1,i_2,i_3\}\cap[j_{i}, j'_{i'}]=\emptyset$. For
$d'\in\{0,1,2\}$, by 1) of Lemma \ref{lem-dtn-jpj-run}, there is
no pair $(j_{\ell},j'_{\ell'})$ such that
$d_{\text{H}}(x_{[n]\backslash\{j_\ell\}},
x'_{[n]\backslash\{j'_{\ell'}\}}\big)=d'$. By 2) of Lemma
\ref{lem-dtn-jpj-run}, there are at most $n-3$ pairs $(j_{i},
j'_{i'})$ such that $d_{\text{H}}\big(x_{[n]\backslash\{j_{i}\}},
x'_{[n]\backslash\{j'_{i'}\}}\big)=3$. Moreover, by 3) of Lemma
\ref{lem-dtn-jpj-run}, there are at most $n-3$ pairs $(j_{i},
j'_{i'})$ such that $d_{\text{H}}\big(x_{[n]\backslash\{j_{i}\}},
x'_{[n]\backslash\{j'_{i'}\}}\big)=4$.
\end{enumerate}
Similarly, for $j_{i}>j'_{i'}$, we have the following four cases.
\begin{enumerate}
\item[1.] $\big|\{i_1,i_2,i_3\}\backslash[j'_{i'}, j_{i}]\big|=0$.
Then $j'_{i'}\leq i_1<i_3\leq j_{i}$ and we have
$S_{[i_1,i_{3}]}=0$ and $\Delta'_{[i_1,i_{3}]}\geq 0$. By 5) of
Lemma \ref{lem-dtn-jpj-run} $($with $\lambda=1$ and $\lambda'=3)$,
for each $d'\in\{0,1,2,3,4\}$, there are at most $d'+1$ pairs
$(j_{i}, j'_{i'})$ such that
$d_{\text{H}}\big(x_{[n]\backslash\{j_{i}\}},
x'_{[n]\backslash\{j'_{i'}\}}\big)=d'$.

\item[2.] $\big|\{i_1,i_2,i_3\}\backslash[j'_{i'}, j_{i}]\big|=1$.
Then $j'_{i'}\leq i_1<i_2\leq j_{i}<i_3~($then $\lambda=1$ and
$\lambda'=2)$ or $i_1<j'_{i'}\leq i_2<i_3\leq j_{i}~($then
$\lambda=2$ and $\lambda'=3)$. For each case, we have
$S_{[i_\lambda,i_{\lambda'}]}=1$ and
$\Delta'_{[i_\lambda,i_{\lambda'}]}\geq 0$. By 5) of Lemma
\ref{lem-dtn-jpj-run}, there is no pair $(j_{\ell},j'_{\ell'})$
such that $d_{\text{H}}(x_{[n]\backslash\{j_\ell\}},
x'_{[n]\backslash\{j'_{\ell'}\}}\big)=0$. Moreover, for each case
and each $d'\in\{1,2,3,4\}$, there are at most $d'$ pairs $(j_{i},
j'_{i'})$ such that $d_{\text{H}}\big(x_{[n]\backslash\{j_{i}\}},
x'_{[n]\backslash\{j'_{i'}\}}\big)=d'$.

\item[3.] $\big|\{i_1,i_2,i_3\}\backslash[j'_{i'}, j_{i}]\big|=2$.
Then it must be one of the following cases: $j'_{i'}\leq i_1\leq
j_{i}<i_2~($then $\lambda=\lambda'=1)$; $i_1<j'_{i'}\leq i_2\leq
j_{i}<i_3~($then $\lambda=\lambda'=2)$; or $i_2<j'_{i'}\leq
i_3\leq j_{i}~($then $\lambda=\lambda'=3)$. For each case, we have
$S_{[i_\lambda,i_{\lambda'}]}=2$ and
$\Delta_{[i_\lambda,i_{\lambda'}]}\geq 0$. By 5) of Lemma
\ref{lem-dtn-jpj-run}, for each case and each $d'\in\{0,1\}$,
there is no pair $(j_{\ell},j'_{\ell'})$ such that
$d_{\text{H}}(x_{[n]\backslash\{j_\ell\}},
x'_{[n]\backslash\{j'_{\ell'}\}}\big)=d'$. Moreover, for each case
and each $d'\in\{2,3,4\}$, there are at most $d'-1$ pairs $(j_{i},
j'_{i'})$ such that $d_{\text{H}}\big(x_{[n]\backslash\{j_{i}\}},
x'_{[n]\backslash\{j'_{i'}\}}\big)=d'$.

\item[4.] $\big|\{i_1,i_2,i_3\}\backslash[j'_{i'}, j_{i}]\big|=3$.
Then we have $\{i_1,i_2,i_3\}\cap[j'_{i'}, j_{i}]=\emptyset$. For
$d'\in\{0,1,2\}$, by 1) of Lemma \ref{lem-dtn-jpj-run}, there is
no pair $(j_{\ell},j'_{\ell'})$ such that
$d_{\text{H}}(x_{[n]\backslash\{j_\ell\}},
x'_{[n]\backslash\{j'_{\ell'}\}}\big)=d'$. By 2) of Lemma
\ref{lem-dtn-jpj-run}, there is no pair $(j_{i}, j'_{i'})$ such
that $d_{\text{H}}\big(x_{[n]\backslash\{j_{i}\}},
x'_{[n]\backslash\{j'_{i'}\}}\big)=3~($noticing that
$j_i>j'_{i'})$. Moreover, by 3) of Lemma \ref{lem-dtn-jpj-run},
there are at most $n-3$ pairs $(j_{i}, j'_{i'})$ such that
$d_{\text{H}}\big(x_{[n]\backslash\{j_{i}\}},
x'_{[n]\backslash\{j'_{i'}\}}\big)=4$.
\end{enumerate}
For reference, we list in Table 3 the number of pairs
$(j_{\ell},j'_{\ell'})$ such that
$d_{\text{H}}(x_{[n]\backslash\{j_\ell\}},
x'_{[n]\backslash\{j'_{\ell'}\}}\big)=d'$ for each
$d'\in\{0,1,2,3,4\}$ and each case. By the above discussions, and
by Lemma \ref{lem-sub-int-size} and \eqref{xxp-case2}, we can
obtain
\begin{align}\label{eq3-pf-main-thm}
\left|B^{\text{DS}}_{1,2}(\bm x,\bm
x')\right|&\leq\sum_{i=1}^m\sum_{i'=1}^{m'}\left|E^{j_i}_{j'_{i'}}\right|\nonumber\\
&\leq
2\xi^{q,n-1}_{0,2}+8\xi^{q,n-1}_{1,2}+20\xi^{q,n-1}_{2,2}+(n+29)\xi^{q,n-1}_{3,2}+(2n+38)\xi^{q,n-1}_{4,2}\nonumber\\
&=2\left[\frac{(q-1)^2}{2}n^2-\frac{(3q-5)(q-1)}{2}n+(q^2-3q+3)\right]+8[q(q-1)n-2q^2+3q]\nonumber\\
&~~~+20[2(q-1)n+q^2-6q+6]+(n+29)(6q-6)+(2n+38)(6)\nonumber\\
&=(q-1)^2n^2+(5q^2+46q-39)n+O_q(1)\nonumber\\
&\leq (q^2-1)n^2-(3q^2+5q-5)n+O_q(1).
\end{align}

\begin{table}[htbp]
\begin{center}
\small
\renewcommand\arraystretch{1.4}
\begin{tabular}{|p{5.2cm}|p{1.3cm}|p{1.3cm}|p{1.3cm}|p{1.3cm}|p{1.3cm}|p{1.3cm}|}
\hline  ~ & $d'=0$ & $d'=1$ & $d'=2$ & $d'=3$ & $d'=4$   \\
\hline $j_{i}\leq j'_{i'}$ and $\big|\{i_1,i_2,i_3\}\backslash[j_{i}, j'_{i'}]\big|=0$ & $~~~~1$ & $2$ & $3$ & $4$ & $5$   \\
\hline $j_{i}\leq j'_{i'}$ and $\big|\{i_1,i_2,i_3\}\backslash[j_{i}, j'_{i'}]\big|=1$ & $~~~~0$ & $2$ & $4$ & $6$ & $8$   \\
\hline $j_{i}\leq j'_{i'}$ and $\big|\{i_1,i_2,i_3\}\backslash[j_{i}, j'_{i'}]\big|=2$ & $~~~~0$ & $0$ & $3$ & $6$ & $9$   \\
\hline $j_{i}\leq j'_{i'}$ and $\big|\{i_1,i_2,i_3\}\backslash[j_{i}, j'_{i'}]\big|=3$ & $~~~~0$ & $0$ & $0$ & $n-3$ & $n-3$   \\
\hline $j_{i}>j'_{i'}$ and $\big|\{i_1,i_2,i_3\}\backslash[j'_{i'}, j_{i}]\big|=0$     & $~~~~1$ & $2$ & $3$ & $4$ & $5$   \\
\hline $j_{i}>j'_{i'}$ and $\big|\{i_1,i_2,i_3\}\backslash[j'_{i'}, j_{i}]\big|=1$     & $~~~~0$ & $2$ & $4$ & $6$ & $8$   \\
\hline $j_{i}>j'_{i'}$ and $\big|\{i_1,i_2,i_3\}\backslash[j'_{i'}, j_{i}]\big|=2$     & $~~~~0$ & $0$ & $3$ & $6$ & $9$   \\
\hline $j_{i}>j'_{i'}$ and $\big|\{i_1,i_2,i_3\}\backslash[j'_{i'}, j_{i}]\big|=3$     & $~~~~0$ & $0$ & $0$ & $0$ & $n-3$   \\
\hline
\end{tabular}\\
\end{center}
\vspace{0.0mm} Table 3. The upper bound of the number of pairs
$(j_{\ell},j'_{\ell'})$ such that
$d_{\text{H}}(x_{[n]\backslash\{j_\ell\}},
x'_{[n]\backslash\{j'_{\ell'}\}}\big)=d'$ for each
$d'\in\{0,1,2,3,4\}$ and each case when $d=3$.
\end{table}

\subsection{Intersection of error balls of sequences with Hamming
distance $d=4$}

Suppose $\text{supp}(\bm x-\bm x')=\{i_1,i_2,i_3,i_4\}$ and
$i_1<i_2<i_3<i_4$. For each $d'\in\{0,1,2,3,4\}$, we will use
Lemma \ref{lem-dtn-jpj-run} to estimate the number of pairs
$(j_{i}, j'_{i'})$ such that
$d_{\text{H}}\big(x_{[n]\backslash\{j_{i}\}},
x'_{[n]\backslash\{j'_{i'}\}}\big)=d'$. For $j_{i}\leq j'_{i'}$,
we have the following five cases.
\begin{enumerate}
 \item[1.] $|\{i_1,i_2,i_3,i_4\}\backslash[j_{i}, j'_{i'}]|=0$.
 Then it must be $j_{i}\leq i_1<i_4\leq j'_{i'}$.
 For each $d'\in\{0,1,2,3,4\}$, by 4) of
 Lemma \ref{lem-dtn-jpj-run}, there is at most $d'+1$
 pairs $(j_{i}, j'_{i'})$ such that
 $d_{\text{H}}\big(x_{[n]\backslash\{j_{i}\}},
 x'_{[n]\backslash\{j'_{i'}\}}\big)=d'$.
 \item[2.] $|\{i_1,i_2,i_3,i_4\}\backslash[j_{i}, j'_{i'}]|=1$.
 Then it must be $j_{i}\leq i_1<i_{3}\leq j'_{i'}<i_4$ or
 $i_1<j_{i}\leq i_2<i_{4}\leq j'_{i'}$. By 4) of Lemma
 \ref{lem-dtn-jpj-run}, for each case, there is no
 pair $(j_{i}, j'_{i'})$ such that
 $d_{\text{H}}\big(x_{[n]\backslash\{j_{i}\}},
 x'_{[n]\backslash\{j'_{i'}\}}\big)=0$. Moreover, for each case and each $d'\in\{1,2,3,4\}$,
 there is at most $d'$ pairs $(j_{i}, j'_{i'})$ such that
 $d_{\text{H}}\big(x_{[n]\backslash\{j_{i}\}},
 x'_{[n]\backslash\{j'_{i'}\}}\big)=d'$.
 \item[3.] $|\{i_1,i_2,i_3,i_4\}\backslash[j_{i}, j'_{i'}]|=2$.
 Then it must be one of the following cases: $j_{i}\leq i_1<i_{2}\leq
 j'_{i'}<i_{3}$;
 $i_1<j_{i}\leq i_2<i_{3}\leq j'_{i'}<i_4$; or
 $i_2<j_{i}\leq i_3<i_{4}\leq j'_{i'}$. By 4) of Lemma
 \ref{lem-dtn-jpj-run}, for each case and each $d'\in\{0,1\}$, there is no
 pair $(j_{i}, j'_{i'})$ such that
 $d_{\text{H}}\big(x_{[n]\backslash\{j_{i}\}},
 x'_{[n]\backslash\{j'_{i'}\}}\big)=d'$. Moreover, for each case and each $d'\in\{2,3,4\}$,
 there is at most $d'-1$ pairs $(j_{i}, j'_{i'})$ such that
 $d_{\text{H}}\big(x_{[n]\backslash\{j_{i}\}},
 x'_{[n]\backslash\{j'_{i'}\}}\big)=d'$.
 \item[4.] $|\{i_1,i_2,i_3,i_4\}\backslash[j_{i}, j'_{i'}]|=3$.
 Then it must be one of the following cases: $j_{i}\leq i_1\leq
 j'_{i'}<i_{2}$;
 $i_1<j_{i}\leq i_2\leq j'_{i'}<i_{3}$;
 $i_2<j_{i}\leq i_3\leq j'_{i'}<i_{4}$; or
 $i_3<j_{i}\leq i_4\leq j'_{i'}$. By 4) of Lemma
 \ref{lem-dtn-jpj-run}, for each case and each $d'\in\{0,1,2\}$, there is no
 pair $(j_{i}, j'_{i'})$ such that
 $d_{\text{H}}\big(x_{[n]\backslash\{j_{i}\}},
 x'_{[n]\backslash\{j'_{i'}\}}\big)=d'$. Moreover, for each case and each $d'\in\{3,4\}$,
 there is at most $d'-2$ pairs $(j_{i}, j'_{i'})$ such that
 $d_{\text{H}}\big(x_{[n]\backslash\{j_{i}\}},
 x'_{[n]\backslash\{j'_{i'}\}}\big)=d'$.
 \item[5.] $|\{i_1,i_2,i_3,i_4\}\backslash[j_{i}, j'_{i'}]|=4$.
 Then it must be one of the following cases: $j_{i}\leq
 j'_{i'}<i_{1}$; $i_1<j_{i}\leq j'_{i'}<i_{2}$;
 $i_2<j_{i}\leq j'_{i'}<i_{3}$; $i_3<j_{i}\leq j'_{i'}<i_{4}$; or
 $i_4<j_{i}\leq j'_{i'}$. By 4) of Lemma
 \ref{lem-dtn-jpj-run}, for each case and each $d'\in\{0,1,2,3\}$, there is no
 pair $(j_{i}, j'_{i'})$ such that
 $d_{\text{H}}\big(x_{[n]\backslash\{j_{i}\}},
 x'_{[n]\backslash\{j'_{i'}\}}\big)=d'$. Moreover,
 there is at most one pair $(j_{i}, j'_{i'})$ such that
 $d_{\text{H}}\big(x_{[n]\backslash\{j_{i}\}},
 x'_{[n]\backslash\{j'_{i'}\}}\big)=4$.
\end{enumerate}
For $j_{i}>j'_{i'}$ and $d'\in\{0,1,2,3\}$, by 5) of Lemma
\ref{lem-dtn-jpj-run}, we can obtain the same number of pairs
$(j_{i}, j'_{i'})$ such that
$d_{\text{H}}\big(x_{[n]\backslash\{j_{i}\}},
x'_{[n]\backslash\{j'_{i'}\}}\big)=d'$. Moreover, by 1)$-$3) of
Lemma \ref{lem-dtn-jpj-run}, there is no pair $(j_{i}, j'_{i'})$
such that $j_{i}>j'_{i'}$ and
$d_{\text{H}}\big(x_{[n]\backslash\{j_{i}\}},
x'_{[n]\backslash\{j'_{i'}\}}\big)=4$. For reference, we list in
Table 4 the number of pairs $(j_{\ell},j'_{\ell'})$ such that
$d_{\text{H}}(x_{[n]\backslash\{j_\ell\}},
x'_{[n]\backslash\{j'_{\ell'}\}}\big)=d'$ for each
$d'\in\{0,1,2,3,4\}$ and each case. Thus, by Lemma
\ref{lem-sub-int-size} and \eqref{xxp-case2}, we can obtain
\begin{align}\label{eq4-pf-main-thm}
\left|B^{\text{DS}}_{1,2}(\bm x,\bm
x')\right|&\leq\sum_{i=1}^m\sum_{i'=1}^{m'}\left|E^{j_i}_{j'_{i'}}\right|\nonumber\\
&\leq
2\xi^{q,n-1}_{0,2}+8\xi^{q,n-1}_{1,2}+20\xi^{q,n-1}_{2,2}+40\xi^{q,n-1}_{3,2}+(n+56)\xi^{q,n-1}_{4,2}\nonumber\\
&=2\left[\frac{(q-1)^2}{2}n^2-\frac{(3q-5)(q-1)}{2}n+(q^2-3q+3)\right]+8[q(q-1)n-2q^2+3q]\nonumber\\
&~~~+20[2(q-1)n+q^2-6q+6]+40(6q-6)+(n+56)(6)\nonumber\\
&=(q-1)^2n^2+(5q^2+40q-39)n+O_q(1)\nonumber\\
&\leq (q^2-1)n^2-(3q^2+5q-5)n+O_q(1).
\end{align}

\begin{table}[htbp]
\begin{center}
\small
\renewcommand\arraystretch{1.4}
\begin{tabular}{|p{5.6cm}|p{1.3cm}|p{1.3cm}|p{1.3cm}|p{1.3cm}|p{1.3cm}|p{1.3cm}|}
\hline  ~ & $d'=0$ & $d'=1$ & $d'=2$ & $d'=3$ & $d'=4$   \\
\hline $j_{i}\leq j'_{i'}$ and $\big|\{i_1,i_2,i_3,i_4\}\backslash[j_{i}, j'_{i'}]\big|=0$ & $~~~~1$ & $2$ & $3$ & $4$ & $5$   \\
\hline $j_{i}\leq j'_{i'}$ and $\big|\{i_1,i_2,i_3,i_4\}\backslash[j_{i}, j'_{i'}]\big|=1$ & $~~~~0$ & $2$ & $4$ & $6$ & $8$   \\
\hline $j_{i}\leq j'_{i'}$ and $\big|\{i_1,i_2,i_3,i_4\}\backslash[j_{i}, j'_{i'}]\big|=2$ & $~~~~0$ & $0$ & $3$ & $6$ & $9$   \\
\hline $j_{i}\leq j'_{i'}$ and $\big|\{i_1,i_2,i_3,i_4\}\backslash[j_{i}, j'_{i'}]\big|=3$ & $~~~~0$ & $0$ & $0$ & $4$ & $8$   \\
\hline $j_{i}\leq j'_{i'}$ and $\big|\{i_1,i_2,i_3,i_4\}\backslash[j_{i}, j'_{i'}]\big|=4$ & $~~~~0$ & $0$ & $0$ & $0$ & $n-4$   \\
\hline $j_{i}>j'_{i'}$ and $\big|\{i_1,i_2,i_3,i_4\}\backslash[j'_{i'}, j_{i}]\big|=0$     & $~~~~1$ & $2$ & $3$ & $4$ & $5$   \\
\hline $j_{i}>j'_{i'}$ and $\big|\{i_1,i_2,i_3,i_4\}\backslash[j'_{i'}, j_{i}]\big|=1$     & $~~~~0$ & $2$ & $4$ & $6$ & $8$   \\
\hline $j_{i}>j'_{i'}$ and $\big|\{i_1,i_2,i_3,i_4\}\backslash[j'_{i'}, j_{i}]\big|=2$     & $~~~~0$ & $0$ & $3$ & $6$ & $9$   \\
\hline $j_{i}>j'_{i'}$ and $\big|\{i_1,i_2,i_3,i_4\}\backslash[j'_{i'}, j_{i}]\big|=3$     & $~~~~0$ & $0$ & $0$ & $4$ & $8$   \\
\hline $j_{i}>j'_{i'}$ and $\big|\{i_1,i_2,i_3,i_4\}\backslash[j'_{i'}, j_{i}]\big|=4$     & $~~~~0$ & $0$ & $0$ & $0$ & $0$   \\
\hline
\end{tabular}\\
\end{center}
\vspace{0.0mm} Table 4. The upper bound of the number of pairs
$(j_{\ell},j'_{\ell'})$ such that
$d_{\text{H}}(x_{[n]\backslash\{j_\ell\}},
x'_{[n]\backslash\{j'_{\ell'}\}}\big)=d'$ for each
$d'\in\{0,1,2,3,4\}$ and each case when $d=4$.
\end{table}

\subsection{Intersection of error balls of sequences with Hamming
distance $d\geq 5$}

Suppose $\text{supp}(\bm x-\bm x')=\{i_1,i_2,\cdots, i_d\}$ and
$i_1<i_2<\cdots<i_d$. For each $d'\in\{0,1,2,3,4\}$, we will use
Lemma \ref{lem-dtn-jpj-run} to estimate the number of pairs
$(j_{i}, j'_{i'})$ such that
$d_{\text{H}}\big(x_{[n]\backslash\{j_{i}\}},
x'_{[n]\backslash\{j'_{i'}\}}\big)=d'$. For $j_{i}\leq j'_{i'}$,
we have the following five cases.
\begin{enumerate}
 \item[1.] $|\{i_1,i_2,\cdots,i_d\}\backslash[j_{i}, j'_{i'}]|=0$.
 Then it must be $j_{i}\leq i_1<i_d\leq j'_{i'}$.
 For each $d'\in\{0,1,2,3,4\}$,
 by 4) of Lemma \ref{lem-dtn-jpj-run},
 there is at most $d'+1$ pair $(j_{i}, j'_{i'})$ such that
 $d_{\text{H}}\big(x_{[n]\backslash\{j_{i}\}},
 x'_{[n]\backslash\{j'_{i'}\}}\big)=d'$.
 \item[2.] $|\{i_1,i_2,\cdots,i_d\}\backslash[j_{i}, j'_{i'}]|=1$.
 Then it must be $j_{i}\leq i_1<i_{d-1}\leq j'_{i'}<i_d$ or
 $i_1<j_{i}\leq i_2<i_{d}\leq j'_{i'}$.
 By 4) of Lemma \ref{lem-dtn-jpj-run}, for each case, there is no
 pair $(j_{i}, j'_{i'})$ such that
 $d_{\text{H}}\big(x_{[n]\backslash\{j_{i}\}},
 x'_{[n]\backslash\{j'_{i'}\}}\big)=0$. Moreover, for each case
 and each $d'\in\{1,2,3,4\}$, there is at most $d'$
 pairs $(j_{i}, j'_{i'})$ such that
 $d_{\text{H}}\big(x_{[n]\backslash\{j_{i}\}},
 x'_{[n]\backslash\{j'_{i'}\}}\big)=d'$.
 \item[3.] $|\{i_1,i_2,\cdots,i_d\}\backslash[j_{i}, j'_{i'}]|=2$.
 Then it must be one of the following cases: $j_{i}\leq i_1<i_{d-2}\leq
 j'_{i'}<i_{d-1}$;
 $i_1<j_{i}\leq i_2<i_{d-1}\leq j'_{i'}<i_d$; or
 $i_2<j_{i}\leq i_3<i_{d}\leq j'_{i'}$. By 4) of Lemma
 \ref{lem-dtn-jpj-run}, for each case and each $d'\in\{0,1\}$, there is no
 pair $(j_{i}, j'_{i'})$ such that
 $d_{\text{H}}\big(x_{[n]\backslash\{j_{i}\}},
 x'_{[n]\backslash\{j'_{i'}\}}\big)=d'$. Moreover, for each case
 and each $d'\in\{2,3,4\}$, there is at most $d'-1$
 pairs $(j_{i}, j'_{i'})$ such that
 $d_{\text{H}}\big(x_{[n]\backslash\{j_{i}\}},
 x'_{[n]\backslash\{j'_{i'}\}}\big)=d'$.
 \item[4.] $|\{i_1,i_2,\cdots,i_d\}\backslash[j_{i}, j'_{i'}]|=3$.
 Then it must be one of the following cases: $j_{i}\leq i_1<i_{d-3}\leq
 j'_{i'}<i_{d-2}$;
 $i_1<j_{i}\leq i_2<i_{d-2}\leq j'_{i'}<i_{d-1}$;
 $i_2<j_{i}\leq i_3<i_{d-1}\leq j'_{i'}<i_{d}$; or
 $i_3<j_{i}\leq i_4<i_{d}\leq j'_{i'}$. By 4) of Lemma
 \ref{lem-dtn-jpj-run}, for each case and each $d'\in\{0,1,2\}$, there is no
 pair $(j_{i}, j'_{i'})$ such that
 $d_{\text{H}}\big(x_{[n]\backslash\{j_{i}\}},
 x'_{[n]\backslash\{j'_{i'}\}}\big)=d'$. Moreover, for each case
 and each $d'\in\{3,4\}$, there is at most $d'-2$
 pairs $(j_{i}, j'_{i'})$ such that $d_{\text{H}}\big(x_{[n]\backslash\{j_{i}\}},
 x'_{[n]\backslash\{j'_{i'}\}}\big)=d'$.
 \item[5.] $|\{i_1,i_2,\cdots,i_d\}\backslash[j_{i}, j'_{i'}]|=4$.
 Then it must be one of the following cases: $j_{i}\leq i_1\leq i_{d-4}\leq
 j'_{i'}<i_{d-3}$;
 $i_1<j_{i}\leq i_2\leq i_{d-3}\leq j'_{i'}<i_{d-2}$;
 $i_2<j_{i}\leq i_3\leq i_{d-2}\leq j'_{i'}<i_{d-1}$;
 $i_3<j_{i}\leq i_4\leq i_{d-1}\leq j'_{i'}<i_{d}$; or
 $i_4<j_{i}\leq i_5\leq i_{d}\leq j'_{i'}$. By 4) of Lemma
 \ref{lem-dtn-jpj-run}, for each case and each $d'\in\{0,1,2,3\}$, there is no
 pair $(j_{i}, j'_{i'})$ such that
 $d_{\text{H}}\big(x_{[n]\backslash\{j_{i}\}},
 x'_{[n]\backslash\{j'_{i'}\}}\big)=d'$. Moreover, for each case,
 there is at most one pair $(j_{i}, j'_{i'})$ such that
 $d_{\text{H}}\big(x_{[n]\backslash\{j_{i}\}},
 x'_{[n]\backslash\{j'_{i'}\}}\big)=4$.
\end{enumerate}
For $j_{i}>j'_{i'}$, by 5) of Lemma \ref{lem-dtn-jpj-run}, we can
obtain the same number of pairs $(j_{i}, j'_{i'})$ such that
$d_{\text{H}}\big(x_{[n]\backslash\{j_{i}\}},
 x'_{[n]\backslash\{j'_{i'}\}}\big)=d'$ for each $d'\in\{0,1,2,3,4\}$.
For reference, we list in Table 5 the number of pairs
$(j_{\ell},j'_{\ell'})$ such that
$d_{\text{H}}(x_{[n]\backslash\{j_\ell\}},
x'_{[n]\backslash\{j'_{\ell'}\}}\big)=d'$ for each
$d'\in\{0,1,2,3,4\}$ and each case. Thus, by Lemma
\ref{lem-sub-int-size} and \eqref{xxp-case2}, we can obtain
\begin{align}\label{eq5-pf-main-thm}
\left|B^{\text{DS}}_{1,2}(\bm x,\bm
x')\right|&\leq\sum_{i=1}^m\sum_{i'=1}^{m'}\left|E^{j_i}_{j'_{i'}}\right|\nonumber\\
&\leq
2\xi^{q,n-1}_{0,2}+8\xi^{q,n-1}_{1,2}+20\xi^{q,n-1}_{2,2}+40\xi^{q,n-1}_{3,2}+70\xi^{q,n-1}_{4,2}\nonumber\\
&=2\left[\frac{(q-1)^2}{2}n^2-\frac{(3q-5)(q-1)}{2}n+(q^2-3q+3)\right]+8[q(q-1)n-2q^2+3q]\nonumber\\
&~~~+20[2(q-1)n+q^2-6q+6]+40(6q-6)+70(6)\nonumber\\
&=(q-1)^2n^2+(5q^2+40q-45)n+O_q(1)\nonumber\\
&\leq (q^2-1)n^2-(3q^2+5q-5)n+O_q(1).
\end{align}

In our proof, we assume that $n$ is large enough such that the
last inequalities in
\eqref{eq3-pf-main-thm}$-$\eqref{eq5-pf-main-thm} hold.

\begin{table}[htbp]
\begin{center}
\small
\renewcommand\arraystretch{1.4}
\begin{tabular}{|p{5.7cm}|p{1.3cm}|p{1.3cm}|p{1.3cm}|p{1.3cm}|p{1.3cm}|p{1.3cm}|}
\hline  ~ & $d'=0$ & $d'=1$ & $d'=2$ & $d'=3$ & $d'=4$   \\
\hline $j_{i}\leq j'_{i'}$ and $\big|\{i_1,i_2,\cdots,i_d\}\backslash[j_{i}, j'_{i'}]\big|=0$ & $~~~~1$ & $2$ & $3$ & $4$ & $5$   \\
\hline $j_{i}\leq j'_{i'}$ and $\big|\{i_1,i_2,\cdots,i_d\}\backslash[j_{i}, j'_{i'}]\big|=1$ & $~~~~0$ & $2$ & $4$ & $6$ & $8$   \\
\hline $j_{i}\leq j'_{i'}$ and $\big|\{i_1,i_2,\cdots,i_d\}\backslash[j_{i}, j'_{i'}]\big|=2$ & $~~~~0$ & $0$ & $3$ & $6$ & $9$   \\
\hline $j_{i}\leq j'_{i'}$ and $\big|\{i_1,i_2,\cdots,i_d\}\backslash[j_{i}, j'_{i'}]\big|=3$ & $~~~~0$ & $0$ & $0$ & $4$ & $8$   \\
\hline $j_{i}\leq j'_{i'}$ and $\big|\{i_1,i_2,\cdots,i_d\}\backslash[j_{i}, j'_{i'}]\big|=4$ & $~~~~0$ & $0$ & $0$ & $0$ & $5$   \\
\hline $j_{i}>j'_{i'}$ and $\big|\{i_1,i_2,\cdots,i_d\}\backslash[j'_{i'}, j_{i}]\big|=0$     & $~~~~1$ & $2$ & $3$ & $4$ & $5$   \\
\hline $j_{i}>j'_{i'}$ and $\big|\{i_1,i_2,\cdots,i_d\}\backslash[j'_{i'}, j_{i}]\big|=1$     & $~~~~0$ & $2$ & $4$ & $6$ & $8$   \\
\hline $j_{i}>j'_{i'}$ and $\big|\{i_1,i_2,\cdots,i_d\}\backslash[j'_{i'}, j_{i}]\big|=2$     & $~~~~0$ & $0$ & $3$ & $6$ & $9$   \\
\hline $j_{i}>j'_{i'}$ and $\big|\{i_1,i_2,\cdots,i_d\}\backslash[j'_{i'}, j_{i}]\big|=3$     & $~~~~0$ & $0$ & $0$ & $4$ & $8$   \\
\hline $j_{i}>j'_{i'}$ and $\big|\{i_1,i_2,\cdots,i_d\}\backslash[j'_{i'}, j_{i}]\big|=4$     & $~~~~0$ & $0$ & $0$ & $0$ & $5$   \\
\hline
\end{tabular}\\
\end{center}
\vspace{0.0mm} Table 5. The upper bound of the number of pairs
$(j_{\ell},j'_{\ell'})$ such that
$d_{\text{H}}(x_{[n]\backslash\{j_\ell\}},
x'_{[n]\backslash\{j'_{\ell'}\}}\big)=d'$ for each
$d'\in\{0,1,2,3,4\}$ and each case when $d=4$.
\end{table}




\appendices

\section{Proof of Claims 1$-$5}

We now prove Claims 1$-$5.

\subsection{Proof of Claim 1}

Since $x_{[n]\backslash\{i_1\}}=x'_{[n]\backslash\{i_2\}}$, we
have
$$E^{i_1}_{i_2}=B^{\text{S}}_{2}\big(x_{[n]\backslash\{i_1\}},
x'_{[n]\backslash\{i_2\}}\big)
=B^{\text{S}}_{2}\big(x_{[n]\backslash\{i_1\}}\big)
=B^{\text{S}}_{2}\big(x'_{[n]\backslash\{i_2\}}\big).$$ On the
other hand, for any $j'\in[n]$,
$E^{i_1}_{j'}=B^{\text{S}}_{2}\big(x_{[n]\backslash\{i_1\}},
x'_{[n]\backslash\{j'\}}\big)\subseteq
B^{\text{S}}_{2}\big(x_{[n]\backslash\{i_1\}}\big)$. So, we have
$$E^{i_1}=\bigcup_{j'\in[n]}E^{i_1}_{j'}=E^{i_1}_{i_2}
=B^{\text{S}}_{2}\big(x_{[n]\backslash\{i_1\}}\big).$$ Similarly,
since $x_{[n]\backslash\{i_2\}} =x'_{[n]\backslash\{i_1\}}$, we
can obtain $E^{i_2}=\bigcup_{j'\in[n]}E^{i_2}_{j'}=E^{i_2}_{i_1}
=B^{\text{S}}_{2}\big(x_{[n]\backslash\{i_2\}}\big)
=B^{\text{S}}_{2}\big(x'_{[n]\backslash\{i_1\}}\big)$. Moreover,
since $i_2=i_1+1$, it is easy to see that
$d_{\text{H}}\big(x_{[n]\backslash\{i_1\}},
x_{[n]\backslash\{i_2\}}\big)=1$, so by Lemma
\ref{lem-sub-int-size}, $|E^{i_1}\cap
E^{i_2}|=\big|B^{\text{S}}_{2}\big(x_{[n]\backslash\{i_1\}},
x_{[n]\backslash\{i_2\}}\big)\big|=\xi^{q,n-1}_{1,2}$. Thus, we
can obtain $$|E^{i_1}\cup
E^{i_2}|=|E^{i_1}|+|E^{i_2}|-|E^{i_1}\cap
E^{i_2}|=2\xi^{q,n-1}_{0,2}-\xi^{q,n-1}_{1,2} =(q-1)^2n^2
-(4q-5)(q-1)n+4q^2-9q+6,$$ where the last equality comes from
\eqref{xxx-SBN} and \eqref{xxp-case2}.

\subsection{Proof of Claim 2}

To prove Claim 2, it suffices to prove the following two claims.

\textbf{Claim 6}: Let $\{j_{i}: i\in[m]\}$, $\{j'_{i'}:
i'\in[m']\}$, $\tau$ and $\tau'$ be defined as in Remark
\ref{rem-form-run}. Then
\begin{enumerate}
 \item[1)] For each $i\in[1,\tau-5]$, we have $E^{j_i}\subseteq
 \bigcup\nolimits_{\ell=\max\{1,i-2\}}^{i+2}
 E^{j_\ell}_{j_\ell}.$
 \item[2)] $E^{j_{\tau-4}}\subseteq\Big(\bigcup\nolimits_{\ell=\tau-6}^{\tau-2}
 E^{j_\ell}_{j_\ell}\Big)\cup E^{i_{1}}$.
 \item[3)] For $i\in\{\tau-3,\tau-2\}$,
 $E^{j_i}\subseteq\Big(\bigcup\nolimits_{\ell=i-2}^{\tau-1}
 E^{j_\ell}_{j_\ell}\Big)\cup(E^{i_{1}}\cup E^{i_{2}})\cup A_i$,
 where $A_i\subseteq\Sigma_q^{n-1}$ and $|A_i|=O_q(1)$.
 \item[4)] $E^{j_{\tau-1}}
 \subseteq\Big(\bigcup\nolimits_{\ell=\tau-3}^{\tau-1}
 E^{j_\ell}_{j_\ell}\Big)\cup(E^{i_{1}}\cup E^{i_{2}})\cup
 E^{j_{\tau-1}}_{j'_{\tau'+2}}\cup A_{\tau-1}$, where
 $A_{\tau-1}\subseteq\Sigma_q^{n-1}$ and $|A_{\tau-1}|=O_q(1)$.
\end{enumerate}

\textbf{Claim 6$'$}: Let $\{j_{i}: i\in[m]\}$, $\{j'_{i'}:
i'\in[m']\}$, $\tau$ and $\tau'$ be defined as in Remark
\ref{rem-form-run}. Then
\begin{enumerate}
 \item[1)] For each $i\in\cup[\tau+6,m]$, we have $E^{j_i}\subseteq
 \bigcup\nolimits_{\ell=i-2}^{\min\{m,i+2\}}
 E^{j_\ell}_{j_\ell}.$
 \item[2)] $E^{j_{\tau+5}}\subseteq\Big(\bigcup\nolimits_{\ell=\tau+3}^{\tau+7}
 E^{j_\ell}_{j_\ell}\Big)\cup E^{i_{2}}$.
 \item[3)] For $i\in\{\tau+3,\tau+4\}$,
 $E^{j_i}\subseteq\Big(\bigcup\nolimits_{\ell=\tau+2}^{i+2}
 E^{j_\ell}_{j_\ell}\Big)\cup(E^{i_{1}}\cup E^{i_{2}})\cup A_i$,
 where $A_i\subseteq\Sigma_q^{n-1}$ and $|A_i|=O_q(1)$.
 \item[4)] $E^{j_{\tau+2}}
 \subseteq\Big(\bigcup\nolimits_{\ell=\tau+2}^{\tau+4}
 E^{j_\ell}_{j_\ell}\Big)\cup(E^{i_{1}}\cup E^{i_{2}})\cup
 E^{j_{\tau+2}}_{j'_{\tau'-1}}\cup A_{\tau+2}$, where
 $A_{\tau+2}\subseteq\Sigma_q^{n-1}$ and $|A_{\tau+2}|=O_q(1)$.
\end{enumerate}

We only prove Claim 6 because the proof of Claim 6$'$ is similar.

\begin{proof}[Proof of Claim 6]
Note that for each $i\in[1,\tau-1]$, by Remark \ref{rem-form-run},
we have
$E^{j_i}=\big(\bigcup\nolimits_{\ell=1}^{\tau-1}E^{j_i}_{j_\ell}\big)
\cup E^{j_i}_{j_{\tau-1}+1}\cup E^{j_i}_{i_1}\cup
E^{j_i}_{i_2}\cup\big(\bigcup\nolimits_{\ell=\tau'+2}^{m'}E^{j_i}_{j'_\ell}\big)$.
The key is to simplify this expression.

1) Let $i\in[1,\tau-5]$. For each $j'\in\{j_\ell:
\ell\in[1,i-3]\cup[i+3,\tau-1]\}\cup\{j_{\tau-1}+1,i_1,i_2\}
\cup\{j'_{\ell}:\ell\in[\tau'+2,m']\}$, from \eqref{form-x-xp-1}
and by Remark \ref{rem-dst-subseq}, we can find that
$d_{\text{H}}\big(x_{[n]\backslash\{j_{i}\}},
x'_{[n]\backslash\{j'\}}\big)>4$, and by Lemma
\ref{lem-sub-int-size}, we can obtain
$E^{j_i}_{j'}=B^{\text{S}}_{2}\big(x_{[n]\backslash\{j_i\}},
x'_{[n]\backslash\{j'\}}\big)=\emptyset$. Therefore, we have
$E^{j_i}=\bigcup\nolimits_{\ell=\max\{1,i-2\}}^{i+2}
E^{j_i}_{j_\ell}$. Moreover, by Lemma \ref{lem-d2-smpl}, we have
$\bigcup\nolimits_{\ell=\max\{1,i-2\}}^{i+2}
E^{j_i}_{j_\ell}\subseteq\bigcup\nolimits_{\ell=\max\{1,i-2\}}^{i+2}
E^{j_\ell}_{j_\ell}$. Thus, we can obtain
$E^{j_i}\subseteq\bigcup\nolimits_{\ell=\max\{1,i-2\}}^{i+2}
E^{j_\ell}_{j_\ell}$.

2) By 1) of Remark \ref{rem-dst-subseq} and by Table 1, we can
obtain $d_{\text{H}}\big(x_{[n]\backslash\{j_{\tau-4}\}},
x'_{[n]\backslash\{j'\}}\big)>4$ for any $j'\in\{j_{\ell}:
\ell<\tau-6\}\cup\{j_{\tau-1},j_{\tau-1}+1,i_1\}\cup\{j'_{\ell}:
\ell\geq\tau'+2\}$, so by Lemma \ref{lem-sub-int-size},
$E^{j_{\tau-4}}_{j'}
=B^{\text{S}}_{2}\big(x_{[n]\backslash\{j_{\tau-4}\}},
x'_{[n]\backslash\{j'\}}\big)=\emptyset$. Therefore, we have
$E^{j_{\tau-4}}=\big(\bigcup\nolimits_{\ell=\tau-6}^{\tau-2}
E^{j_{\tau-4}}_{j_\ell}\big)\cup E^{j_{\tau-4}}_{i_2}.$ By Lemma
\ref{lem-d2-smpl}, $\bigcup\nolimits_{\ell=\tau-6}^{\tau-2}
E^{j_{\tau-4}}_{j_\ell}\subseteq\bigcup\nolimits_{\ell=\tau-6}^{\tau-2}
E^{j_\ell}_{j_\ell}$. Moreover, by Claim 1, we have
$E^{j_{\tau-4}}_{i_2}=B^{\text{S}}_{2}\big(x_{[n]\backslash\{j_{\tau-4}\}},
x'_{[n]\backslash\{i_{2}\}}\big)\subseteq
B^{\text{S}}_{2}\big(x'_{[n]\backslash\{i_{2}\}}\big)=E^{i_1}$.
Thus, we can obtain
$E^{j_{\tau-4}}\subseteq\big(\bigcup\nolimits_{\ell=\tau-6}^{\tau-2}
E^{j_\ell}_{j_\ell}\big)\cup E^{i_{1}}$.

3) Consider $i=\tau-2$. By 1) of Remark \ref{rem-dst-subseq} and
from Table 1, we can find
$d_{\text{H}}\big(x_{[n]\backslash\{j_{\tau-4}\}},
x'_{[n]\backslash\{j'\}}\big)>4$ for any $j'\in\{j_{\ell}:
\ell<\tau-4\}\cup\{j'_{\ell}: \ell\geq\tau'+4\}$, so by Lemma
\ref{lem-sub-int-size}, $E^{j_{\tau-4}}_{j'}
=B^{\text{S}}_{2}\big(x_{[n]\backslash\{j_{\tau-2}\}},
x'_{[n]\backslash\{j'\}}\big)=\emptyset$. Therefore, we can obtain
$E^{j_{\tau-2}}=\big(\bigcup\nolimits_{\ell=\tau-4}^{\tau-1}
E^{j_{\tau-2}}_{j_\ell}\big)\cup E^{j_{\tau-2}}_{j_{\tau-1}+1}\cup
E^{j_{\tau-2}}_{i_1}\cup E^{j_{\tau-2}}_{i_2}\cup
E^{j_{\tau-2}}_{j'_{\tau'+2}}\cup E^{j_{\tau-2}}_{j'_{\tau'+3}}.$
We note that:
\begin{itemize} \item By Lemma \ref{lem-d2-smpl},
$\bigcup\nolimits_{\ell=\tau-4}^{\tau-1}
E^{j_{\tau-2}}_{j_\ell}\subseteq\bigcup\nolimits_{\ell=\tau-4}^{\tau-1}
E^{j_{\ell}}_{j_\ell}$. \item By Claim 1,
$E^{j_{\tau-2}}_{i_1}=B^{\text{S}}_{2}\big(x_{[n]\backslash\{j_{\tau-2}\}},
x'_{[n]\backslash\{i_{1}\}}\big)\subseteq
B^{\text{S}}_{2}\big(x'_{[n]\backslash\{i_{1}\}}\big)=E^{i_2}$.
Similarly, $E^{j_{\tau-2}}_{i_2}\subseteq
B^{\text{S}}_{2}\big(x'_{[n]\backslash\{i_{2}\}}\big)=E^{i_1}$.
\item For $j'\in\{j_{\tau-1}+1, j'_{\tau'+2}, j'_{\tau'+3}\}$,
from Table 1, we have
$d_{\text{H}}\big(x_{[n]\backslash\{j_{\tau-2}\}},
x'_{[n]\backslash\{j'\}}\big)>2$, so by Lemma
\ref{lem-sub-int-size}, $\big|E^{j_{\tau-2}}_{j'}\big|=O_q(1)$.
\end{itemize}
Thus, we can obtain
$E^{j_{\tau-2}}\subseteq\left(\bigcup\nolimits_{\ell=\tau-4}^{\tau-1}
E^{j_\ell}_{j_\ell}\right)\cup(E^{i_{1}}\cup E^{i_{2}})\cup
A_{\tau-2}$, where $A_{\tau-2}=E^{j_{\tau-2}}_{j_{\tau-1}+1}\cup
E^{j_{\tau-2}}_{j'_{\tau'+2}}\cup E^{j_{\tau-2}}_{j'_{\tau'+3}}$
and $|A_{\tau-2}|=O_q(1)$. For $i=\tau-3$, similarly by 1) of
Remark \ref{rem-dst-subseq} and Table 1, we can obtain
$E^{j_{\tau-3}}=\big(\bigcup\nolimits_{\ell=\tau-5}^{\tau-1}
E^{j_{\tau-3}}_{j_\ell}\big)\cup E^{j_{\tau-3}}_{i_1}\cup
E^{j_{\tau-3}}_{i_2}\cup E^{j_{\tau-3}}_{j'_{\tau'+2}}.$ Further,
by Lemmas \ref{lem-sub-int-size} and \ref{lem-d2-smpl}, we can
obtain
$E^{j_{\tau-3}}\subseteq\left(\bigcup\nolimits_{\ell=\tau-5}^{\tau-1}
E^{j_\ell}_{j_\ell}\right)\cup(E^{i_{1}}\cup E^{i_{2}})\cup
A_{\tau-3}$, where $A_{\tau-3}=E^{j_{\tau-3}}_{j'_{\tau'+2}}$ and
$|A_{\tau-3}|=O_q(1)$.

4) The proof is similar to 2). From 1) of Remark
\ref{rem-dst-subseq} and Table 1, we can obtain
$d_{\text{H}}\big(x_{[n]\backslash\{j_{i}\}},
x'_{[n]\backslash\{j_\ell\}}\big)>4$ for any
$j'\in\{j_\ell:\ell<\tau-3\}\cup\{j'_{\ell}: \ell>\tau'+4\}$, so
by Lemma \ref{lem-sub-int-size}, we have
$E^{j_{\tau-1}}=\Big(\bigcup\nolimits_{\ell=\tau-3}^{\tau-1}
E^{j_{\tau-1}}_{j_\ell}\Big)\cup E^{j_{\tau-1}}_{j_{\tau-1}+1}\cup
E^{j_{\tau-1}}_{i_1}\cup E^{j_{\tau-1}}_{i_2}\cup
E^{j_{\tau-1}}_{j'_{\tau'+2}}\cup
E^{j_{\tau-1}}_{j'_{\tau'+3}}\cup E^{j_{\tau-1}}_{j'_{\tau'+4}}$.
From Table 1, $A_{\tau-1}\triangleq
E^{j_{\tau-1}}_{j_{\tau-1}+1}\cup
E^{j_{\tau-1}}_{j'_{\tau'+3}}\cup E^{j_{\tau-1}}_{j'_{\tau'+4}}$
has size $O_q(1)$. Moreover, $E^{j_{\tau-1}}_{i_1}\cup
E^{j_{\tau-1}}_{i_2}\subseteq E^{i_1}\cup E^{i_2}$ and
$\bigcup\nolimits_{\ell=\tau-3}^{\tau-1}
E^{j_{\tau-1}}_{j_\ell}\subseteq\bigcup\nolimits_{\ell=\tau-3}^{\tau-1}
E^{j_\ell}_{j_\ell}$. Thus, $E^{j_{\tau-1}}
\subseteq\Big(\bigcup\nolimits_{\ell=\tau-3}^{\tau-1}
E^{j_\ell}_{j_\ell}\Big)\cup(E^{i_{1}}\cup E^{i_{2}})\cup
E^{j_{\tau-1}}_{j'_{\tau'+2}}\cup A_{\tau-1}$ and
$|A_{\tau-1}|=O_q(1)$.
\end{proof}

By Claim 6, we can obtain $\bigcup_{i=1}^{\tau-1}E^{j_i}\subseteq
\Big(\bigcup\nolimits_{\ell=1}^{\tau-1}
E^{j_\ell}_{j_\ell}\Big)\cup(E^{i_{1}}\cup E^{i_{2}})\cup
E^{j_{\tau-1}}_{j'_{\tau'+2}}\cup A_{L}$ where
$A_{L}=A_{\tau-3}\cup A_{\tau-2}\cup A_{\tau-1}$ and
$|A_{L}|=O_q(1)$. By Claim 6$'$, we can obtain
$\bigcup_{i=\tau+2}^mE^{j_i}\subseteq
\Big(\bigcup\nolimits_{\ell=\tau+2}^{m}
E^{j_\ell}_{j_\ell}\Big)\cup(E^{i_{1}}\cup E^{i_{2}})\cup
E^{j_{\tau+2}}_{j'_{\tau'-1}}\cup A_{R}$, where
$A_{R}\subseteq\Sigma_q^{n-1}$ and $|A_{R}|=O_q(1)$. Thus, by
Remark \ref{rem-form-run}, we have
\begin{align*}
B^{\text{DS}}_{1,2}(\bm x,\bm
x')&=\bigg(\bigcup_{i=1}^{\tau-1}E^{j_i}\bigg) \cup E^{i_1}\cup
E^{i_2}\cup\bigg(\bigcup_{i=\tau+2}^{m}E^{j_i}\bigg)\\
&=\Big(\bigcup\nolimits_{\ell=1}^{\tau-1}
E^{j_\ell}_{j_\ell}\Big)\cup\Big(\bigcup\nolimits_{\ell=\tau+2}^{m}
E^{j_\ell}_{j_\ell}\Big)\cup(E^{i_{1}}\cup E^{i_{2}})\cup
E^{j_{\tau-1}}_{j'_{\tau'+2}}\cup
E^{j_{\tau+2}}_{j'_{\tau'-1}}\cup A\end{align*} where $A=A_L\cup
A_R$ and $|A|=O_q(1)$. This proves Claim 2.

\subsection{Proof of Claim 3}

First, for each
$(\ell,\ell')\in\big([1,\tau-1]\times[\tau+2,m]\big)
\big\backslash\{(\tau-2,\tau+2),(\tau-1,\tau+2),(\tau-1,\tau+3)\}$,
by \eqref{eq-seqx-form} and Observation 2, we can find
$d_{\text{H}}\big(x_{[n]\backslash\{j_{\ell}\}},
x_{[n]\backslash\{j_{\ell'}\}}\big)\geq 5$, so we have
$$\big|E^{j_{\ell}}_{j_{\ell}}\cap E^{j_{\ell'}}_{j_{\ell'}}\big|
=\big|B^{\text{S}}_{2}\big(x_{[n]\backslash\{j_{\ell}\}},
x'_{[n]\backslash\{j_{\ell}\}}\big)\cap
B^{\text{S}}_{2}\big(x_{[n]\backslash\{j_{\ell'}\}},
x_{[n]\backslash\{j_{\ell'}\}}\big)\big|\leq
\big|B^{\text{S}}_{2}\big(x_{[n]\backslash\{j_{\ell}\}},
x_{[n]\backslash\{j_{\ell'}\}}\big)\big|=0,$$ where the last
equality comes from Lemma \ref{lem-sub-int-size}. Therefore, we
can obtain
$$\left(\bigcup_{\ell\in[1,\tau-1]}E^{j_\ell}_{j_\ell}\right)
\cap\left(\bigcup_{\ell\in[\tau+2,m]}E^{j_\ell}_{j_\ell}\right)
=\big(E^{j_{\tau-2}}_{j_{\tau-2}}\cap
E^{j_{\tau+2}}_{j_{\tau+2}}\big)\cup
\big(E^{j_{\tau-1}}_{j_{\tau-1}}\cap
E^{j_{\tau+2}}_{j_{\tau+2}}\big)\cup
\big(E^{j_{\tau-1}}_{j_{\tau-1}}\cap
E^{j_{\tau+3}}_{j_{\tau+3}}\big).$$ Moreover, for each
$(\ell,\ell')\in\{(\tau-2,\tau+2)$, $(\tau-1,\tau+2)$,
$(\tau-1,\tau+3)\}$, by \eqref{eq-seqx-form} and Observation 2, we
can find $d_{\text{H}}\big(x_{[n]\backslash\{j_{\ell}\}},
x_{[n]\backslash\{j_{\ell'}\}}\big)\in\{3,4\}$, so we have
$\big|E^{j_{\ell}}_{j_{\ell}}\cap
E^{j_{\ell'}}_{j_{\ell'}}\big|=\big|B^{\text{S}}_{2}\big(x_{[n]\backslash\{j_{\ell}\}},
x'_{[n]\backslash\{j_{\ell}\}}\big)\cap
B^{\text{S}}_{2}\big(x_{[n]\backslash\{j_{\ell'}\}},
x_{[n]\backslash\{j_{\ell'}\}}\big)\big|\leq
\big|B^{\text{S}}_{2}\big(x_{[n]\backslash\{j_{\ell}\}},
x_{[n]\backslash\{j_{\ell'}\}}\big)\big|=O_q(1)$, where the last
equality comes from Lemma \ref{lem-sub-int-size} and
\eqref{xxp-case2}. Thus, we have
$$\left|\left(\bigcup_{\ell\in[1,\tau-1]}E^{j_\ell}_{j_\ell})
\cap(\bigcup_{\ell\in[\tau+2,m]}E^{j_\ell}_{j_\ell}\right)\right|
=\Big|\big(E^{j_{\tau-2}}_{j_{\tau-2}}\cap
E^{j_{\tau+2}}_{j_{\tau+2}}\big)\cup
\big(E^{j_{\tau-1}}_{j_{\tau-1}}\cap
E^{j_{\tau+2}}_{j_{\tau+2}}\big)\cup
\big(E^{j_{\tau-1}}_{j_{\tau-1}}\cap
E^{j_{\tau+3}}_{j_{\tau+3}}\big)\Big|=O_q(1),$$ which completes
the proof.

\subsection{Proof of Claim 4}

Before prove Claim 4, we first prove two claims.

\textbf{Claim 7}: Let $\{j_i: i\in[m]\}$, $\{j'_{i'}: i'\in[m]\}$,
$\tau$ and $\tau'$ be defined as in Remark \ref{rem-form-run}. We
have
\begin{itemize}
 \item[1)] $E^{j_\ell}_{j_{\ell}}\cap\big(\bigcup_{\ell'=\ell+3}^{\tau-1}
 E^{j_{\ell'}}_{j_{\ell'}}\big)=\emptyset$ for each
 $\ell\in[\tau-4]$.
 \item[2)] $\big|E^{j_\ell}_{j_{\ell}}\backslash\big(E^{j_{\ell+1}}_{j_{\ell+1}}
 \cup E^{j_{\ell+2}}_{j_{\ell+2}}\big) \big|=2(q-1)n+q^2-8q+4$ for
 each $\ell\in[\tau-3]$.
 \item[3)] $\big|E^{j_\ell}_{j_{\ell}}\backslash
 E^{j_{\ell+1}}_{j_{\ell+1}} \big|=2(q-1)n+q^2-8q+6$ for each
 $\ell\in[\tau-2]$.
\end{itemize}
\begin{proof}[Proof of Claim 7]
By Remark \ref{rem-both-i1-i2}, we can write
\begin{align}\label{eq-seqxxp-form}\bm
x&=x_{[1,i_1-1]}~a~b~x_{[i_2+1,n]}\nonumber\\
\bm x'&=x_{[1,i_1-1]}~b~a~x_{[i_2+1,n]}
\end{align}
where $a=x_{i_1}$ and $b=x_{i_2}$. For each $\ell\in[1,\tau-1]$,
we can obtain
\begin{align*}x_{[n]\backslash\{j_{\ell}\}}
&=[x_{[1,j_{\ell}-1]}\!~x_{[j_{\ell}+1,i_1-1]}]\!~a\!~b\!~
[x_{[i_2+1,n]}]\\
x'_{[n]\backslash\{j_{\ell}\}}&=[x_{[1,j_{\ell}-1]}\!~x_{[j_{\ell}+1,i_1-1]}]\!~b\!~a\!~
[x_{[i_2+1,n]}]
\end{align*} so by Lemma \ref{lem-sub-int-size} and
\eqref{xxp-case2},
$\big|E^{j_\ell}_{j_{\ell}}\big|=\xi^{q,n-1}_{2,2}=2(q-1)n+q^2-6q+6$.
Moreover, for each $\bm v_\ell\in E^{j_{\ell}}_{j_{\ell}}$, by 2)
of Lemma \ref{lem-Cap-SE}, $\bm v_\ell$ is of the form $\bm
v_\ell=[x_{[1,j_{\ell}-1]}\!~x_{[j_{\ell}+1,i_1-1]}]\!~v\!~v'\!~
[x_{[i_2+1,n]}]$ and satisfies the following conditions (i) or
(ii):
\begin{enumerate} \item[(i)]
$[x_{[1,j_{\ell}-1]}\!~x_{[j_{\ell}+1,i_1-1]}]\!~[x_{[i_2+1,n]}]=
x_{[1,j_{\ell}-1]}\!~x_{[j_{\ell}+1,i_1-1]}\!~x_{[i_2+1,n]}$ and
$vv'\in\Sigma_q^2$; \item[(ii)]
$[x_{[1,j_{\ell}-1]}\!~x_{[j_{\ell}+1,i_1-1]}]\!~[x_{[i_2+1,n]}]$
is obtained from
$x_{[1,j_{\ell}-1]}\!~x_{[j_{\ell}+1,i_1-1]}\!~x_{[i_2+1,n]}$ by
exactly one substitution and $vv'\in\{aa,bb\}$.\end{enumerate}

1) For $\ell,\ell'\in[1,m]$ such that $\ell'\geq\ell+3$, by the
definition of $\{j_i: i\in[m]\}~($see Remark \ref{rem-form-run}$)$
and by Remark \ref{rem-spc-Obv2}, we can find that the Hamming
distance between $x_{[1,j_{\ell}-1]}\!~x_{[j_{\ell}+1,i_1-1]}$ and
$x_{[1,j_{\ell'}-1]}\!~x_{[j_{\ell'}+1,i_1-1]}$ is $\ell'-\ell\geq
3$, so we always have $\bm v_\ell\neq \bm v_{\ell'}$, which
deduces that
$E^{j_\ell}_{j_{\ell}}\cap\big(\bigcup_{\ell'=\ell+3}^{m}
E^{j_{\ell'}}_{j_{\ell'}}\big)=\emptyset$.

2) We consider
$E^{j_\ell}_{j_{\ell}}\backslash\big(E^{j_{\ell+1}}_{j_{\ell+1}}
\cup E^{j_{\ell+2}}_{j_{\ell+2}}\big)$. According to
\eqref{form-x-xp-1} and 2) Of Lemma \ref{lem-Cap-SE}, we can write
each $\bm v_\ell\in E^{j_{\ell}}_{j_{\ell}}$, $\bm v_{\ell+1}\in
E^{j_{\ell+1}}_{j_{\ell+1}}$ and $\bm v_{\ell+2}\in
E^{j_{\ell+2}}_{j_{\ell+2}}$ as the following form that satisfy
the conditions similar to the above (i) or (ii).
\begin{align}
\bm v_{\ell} &=[x_{[1,j_{\ell}-1]}~~\bar{x}_{\ell+1}~
\bar{x}_{\ell+1}^{k_{\ell+1}-1}~~\bar{x}_{\ell+2}~
\!~x_{[j_{\ell+2}+1,i_1-1]}]~~v~\!~~v'~\!
~~[x_{[i_2+1,n]}] \label{eq2-jljl}\\
\bm v_{\ell+1}&=[x_{[1,j_{\ell}-1]}~~\!~~\bar{x}_{\ell}~
\!~~\bar{x}_{\ell+1}^{k_{\ell+1}-1}\!~~\bar{x}_{\ell+2}~
\!~x_{[j_{\ell+2}+1,i_1-1]}]~\!~~u~~u'~~\!~
[x_{[i_2+1,n]}]\label{eq4-jljl}\\
\bm v_{\ell+2}&=[x_{[1,j_{\ell}-1]}~~\!~~\bar{x}_{\ell}~
\!~~\bar{x}_{\ell+1}^{k_{\ell+1}-1}\!~~\bar{x}_{\ell+1}~
\!~x_{[j_{\ell+2}+1,i_1-1]}]~\!~\!~w~\!~w'~~\!~ [x_{[i_2+1,n]}].
\label{eq6-jljl}\end{align} From
\eqref{eq2-jljl}$-$\eqref{eq6-jljl}, we can see that $\bm
v_\ell\in E^{j_{\ell+1}}_{j_{\ell+1}}\cup
E^{j_{\ell+2}}_{j_{\ell+2}}$ if and only if it satisfies the
conditions: (a) $\bm v_\ell=x_{[1,j_{\ell}-1]}\!~a'\!~
\bar{x}_{\ell+1}^{k_{\ell+1}-1}\!~\bar{x}_{\ell+2}\!~
x_{[j_{\ell+2}+1,i_1-1]}\!~v\!~v'\!~x_{[i_2+1,n]}$ such that
$vv'\in\{aa,bb\}$ and $a'\in\Sigma_q$; or (b) $\bm
v_\ell=x_{[1,j_{\ell}-1]}\!~\bar{x}_{\ell+1}\!~
\bar{x}_{\ell+1}^{k_{\ell+1}-1}\!~\bar{x}_{\ell+1}\!~
x_{[j_{\ell+2}+1,i_1-1]}\!~v\!~v'\!~x_{[i_2+1,n]}$ and
$vv'\in\{aa,bb\}$. Clearly, the number of $\bm v_\ell$ satisfying
condition (a) or (b) is $2(q+1)$, so
$\big|E^{j_\ell}_{j_{\ell}}\backslash\big(E^{j_{\ell+1}}_{j_{\ell+1}}
\cup E^{j_{\ell+2}}_{j_{\ell+2}}\big)\big|
=\big|E^{j_\ell}_{j_{\ell}}\big|-2(q+1)=2(q-1)n+q^2-6q+6-2(q+1)
=2(q-1)n+q^2-8q+4$.

3) We consider $E^{j_\ell}_{j_{\ell}}\backslash
E^{j_{\ell+1}}_{j_{\ell+1}}$. From \eqref{eq2-jljl} and
\eqref{eq4-jljl}, we can see that $\bm v_\ell\in
E^{j_{\ell+1}}_{j_{\ell+1}}$ if and only if it satisfies the above
condition (a) and the number of such $\bm v_\ell$ is $2q$, so
$\big|E^{j_\ell}_{j_{\ell}}\backslash
E^{j_{\ell+1}}_{j_{\ell+1}}\big|
=\big|E^{j_\ell}_{j_{\ell}}\big|-2q=2(q-1)n+q^2-8q+6$.
\end{proof}

For $\ell\in[\tau+2,m]$, we have the following Claim 7$'$, which
can be proved by the same discussions as that of Claim 7.

\textbf{Claim 7$'$}: Let $\{j_i: i\in[m]\}$, $\{j'_{i'}:
i'\in[m]\}$, $\tau$ and $\tau'$ be defined as in Remark
\ref{rem-form-run}. Then we have
\begin{itemize}
 \item[1)] $E^{j_\ell}_{j_{\ell}}\cap\big(\bigcup_{\ell'=\tau+2}^{\ell-3}
 E^{j_{\ell'}}_{j_{\ell'}}\big)=\emptyset$ for each $\ell\in[\tau+5,m]$.
 \item[2)] $\big|E^{j_\ell}_{j_{\ell}}\backslash\big(E^{j_{\ell-1}}_{j_{\ell-1}}
 \cup E^{j_{\ell-2}}_{j_{\ell-2}}\big) \big|=2(q-1)n+q^2-8q+4$ for
 each $\ell\in[\tau+4,m]$.
 \item[3)] $\big|E^{j_\ell}_{j_{\ell}}\backslash
 E^{j_{\ell-1}}_{j_{\ell-1}}\big|=2(q-1)n+q^2-8q+6$ for each
 $\ell\in[\tau+3,m]$.
\end{itemize}

\textbf{Claim 8}: Let $\{j_i: i\in[m]\}$, $\{j'_{i'}: i'\in[m]\}$,
$\tau$ and $\tau'$ be defined as in Remark \ref{rem-form-run}. If
$\sigma_1>0$, we have
\begin{enumerate}
 \item[1.1)] $\big|E^{j_{\tau-1}}_{j_{\tau-1}}
 \cap(E^{i_{1}}\cup E^{i_{2}})\big|=(q-1)n+O_q(1)$.
 \item[1.2)] For $\ell\in\{\tau-2,\tau-3\}$,
 $\big|E^{j_\ell}_{j_\ell}\cap(E^{i_{1}}\cup
 E^{i_{2}})\big|=O_q(1)$.
 \item[1.3)] For $\ell\in[1,\tau-4]$,
 $\big|E^{j_\ell}_{j_\ell}\cap(E^{i_{1}}\cup E^{i_{2}})\big|=0$.
\end{enumerate}
If $\sigma_1=0$, we have
\begin{enumerate}
 \item[2.1)] $E^{j_{\tau-1}}_{j_{\tau-1}}
 \subseteq(E^{i_{1}}\cup E^{i_{2}})$.
 \item[2.2)] For $\ell\in\{\tau-2,\tau-3,\tau-4\}$,
 $\big|E^{j_\ell}_{j_\ell}\cap(E^{i_{1}}\cup
 E^{i_{2}})\big|=O_q(1)$.
 \item[2.3)] For $\ell\in[1,\tau-5]$,
 $\big|E^{j_\ell}_{j_\ell}\cap(E^{i_{1}}\cup E^{i_{2}})\big|=0$.
\end{enumerate}
\begin{proof}[Proof of Claim 8]
First, suppose $\sigma_1>0$. We prove 1.1)$-$1.3).

1.1) Since $\sigma_1>0$, from \eqref{form-x-xp-1}, we can obtain
the following \eqref{eq1-jm-i1i2-sgn0}, \eqref{eq2-jm-i1i2-sgn0},
\eqref{eq6-jm-i1i2-sgn0} and \eqref{eq7-jm-i1i2-sgn0}. Moreover,
for any $\bm v\in
E^{j_{\tau-1}}_{j_{\tau-1}}=B^{\text{S}}_{2}\big(x_{[n]\backslash\{j_{\tau-1}\}},
x'_{[n]\backslash\{j_{\tau-1}\}}\big)$, by 2) of Lemma
\ref{lem-Cap-SE}, we can obtain \eqref{eq4-jm-i1i2-sgn0} such
that: (i) $[\cdots~\bar{x}_{\tau-1}^{k_{\tau-1}-1}
\!~a\!~a^{\sigma_1-1}][a^{\sigma'_2}\!~
~\tilde{x}_{\tau'+2}^{k'_{\tau'+2}}~\cdots]=\cdots~\bar{x}_{\tau-1}^{k_{\tau-1}-1}
\!~a\!~a^{\sigma_1-1}\!~a^{\sigma'_2}\!~\tilde{x}_{\tau'+2}^{k'_{\tau'+2}}
~\cdots$ and $vv'\in\Sigma_q^2$; or (ii)
$[\cdots~\bar{x}_{\tau-1}^{k_{\tau-1}-1}
\!~a\!~a^{\sigma_1-1}][a^{\sigma'_2}\!~
~\tilde{x}_{\tau'+2}^{k'_{\tau'+2}}~\cdots]$ is obtained from
$\cdots~\bar{x}_{\tau-1}^{k_{\tau-1}-1}
\!~a\!~a^{\sigma_1-1}\!~a^{\sigma'_2}\!~\tilde{x}_{\tau'+2}^{k'_{\tau'+2}}
~\cdots$ by exactly one substitution and $vv'\in\{aa, bb\}$.
\begin{align}
x_{[n]\backslash\{j_{\tau-1}\}}
&=\!~\cdots~\bar{x}_{\tau-1}^{k_{\tau-1}-1}~\!~\!~~~a~\!~~\!~~a^{\sigma_1-1}~~
~a~~b~~~a^{\sigma'_2}\!~
~\tilde{x}_{\tau'+2}^{k'_{\tau'+2}}~\cdots\label{eq1-jm-i1i2-sgn0}\\
x'_{[n]\backslash\{j_{\tau-1}\}}
&=\!~\cdots~\bar{x}_{\tau-1}^{k_{\tau-1}-1}~\!~\!~~~a~\!~~\!~~a^{\sigma_1-1}~~~b~
~\!~a\!~~~a^{\sigma'_2}\!~
~\tilde{x}_{\tau'+2}^{k'_{\tau'+2}}~\cdots\label{eq2-jm-i1i2-sgn0}\\
\bm v~~ &=[\cdots~\bar{x}_{\tau-1}^{k_{\tau-1}-1}
\!~~\!~~~a\!~~~\!~~a^{\sigma_1-1}]~\!~~v~\!~\!~v'\!~~[a^{\sigma'_2}\!~
~\tilde{x}_{\tau'+2}^{k'_{\tau'+2}}~\cdots]\label{eq4-jm-i1i2-sgn0}\\
x_{[n]\backslash\{i_1\}}
&=\!~\cdots~\bar{x}_{\tau-1}^{k_{\tau-1}-1}~\!~\bar{x}_{\tau-1}~~a^{\sigma_1-1}
~~\!~~a~\!\!~~b~\!~~a^{\sigma'_2}\!~
~\tilde{x}_{\tau'+2}^{k'_{\tau'+2}}~\cdots\label{eq6-jm-i1i2-sgn0}\\
x_{[n]\backslash\{i_2\}}
&=\!~\cdots~\bar{x}_{\tau-1}^{k_{\tau-1}-1}\!~~\bar{x}_{\tau-1}~~a^{\sigma_1-1}
~~~\!~a~~a~\!~~a^{\sigma'_2}\!~
~\tilde{x}_{\tau'+2}^{k'_{\tau'+2}}~\cdots\label{eq7-jm-i1i2-sgn0}\\
\nonumber\end{align} Recall that by Claim 1,
$E^{i_1}=B^{\text{S}}_{2}\big(x_{[n]\backslash\{i_1\}}\big)$ and
$E^{i_2}=B^{\text{S}}_{2}\big(x_{[n]\backslash\{i_2\}}\big)$. We
can see that: a) if $\bm v$ satisfies (i), then $\bm v\in
E^{i_1}\cup E^{i_2}$ if and only if $v=a$ or $v'\in\{a,b\}$, so
the number of such $\bm v$ is $3q-2$; b) if $\bm v$ satisfies (ii)
and $vv'=aa$, then $d_{\text{H}}\big(\bm v,
x_{[n]\backslash\{i_{2}\}}\big)=1$, so $\bm v\in
E^{i_2}=B^{\text{S}}_{2}\big(x_{[n]\backslash\{i_2\}}\big)$ and
the number of such $\bm v$ is $(n-3)(q-1)$; c) if $\bm v$
satisfies (iii) and $vv'=bb$, then $\bm v\in E^{i_1}\cup E^{i_2}$
if and only if $\bm
v=\cdots~\bar{x}_{\tau-1}^{k_{\tau-1}-1}\!~a'\!~a^{\sigma_1-1}
\!~b~b~a^{\sigma'_2}\!~\tilde{x}_{\tau'+2}^{k'_{\tau'+2}}~\cdots$
for some $a'\in\Sigma_q\backslash\{a\}$, so the number of such
$\bm v$ is $q-1$. Hence,
$\big|E^{j_{\tau-1}}_{j_{\tau-1}}\cap(E^{i_{1}}\cup
E^{i_{2}})\big|=(3q-2)+(n-3)(q-1)+(q-1)=(q-1)n+O_q(1)$.

1.2) Since $\sigma_1>0$, from \eqref{form-x-xp-1} and by
Observation 2, we can find that:
$d_{\text{H}}(x_{[n]\backslash\{j_{\tau-2}\}},
x'_{[n]\backslash\{i_1\}})=3$ and
$d_{\text{H}}(x_{[n]\backslash\{j_{\tau-2}\}},
x'_{[n]\backslash\{i_2\}})=3$, so
$\big|E^{j_{\tau-2}}_{j_{\tau-2}}\cap(E^{i_{1}}\cup
E^{i_{2}})\big|\leq \big|E^{j_{\tau-2}}_{j_{\tau-2}}\cap
E^{i_{1}}\big|+\big|E^{j_{\tau-2}}_{j_{\tau-2}}\cap
E^{i_{2}}\big|=O_q(1)$, where the last equality comes from Lemma
\ref{lem-sub-int-size} and \eqref{xxp-case2}. Similarly, we can
find $d_{\text{H}}(x_{[n]\backslash\{j_{\tau-3}\}},
x'_{[n]\backslash\{i_1\}})=4$ and
$d_{\text{H}}(x_{[n]\backslash\{j_{\tau-3}\}},
x'_{[n]\backslash\{i_2\}})=4$, so
$\big|E^{j_{\tau-3}}_{j_{\tau-3}}\cap(E^{i_{1}}\cup
E^{i_{2}})\big|\leq \big|E^{j_{\tau-3}}_{j_{\tau-3}}\cap
E^{i_{1}}\big|+\big|E^{j_{\tau-3}}_{j_{\tau-3}}\cap
E^{i_{2}}\big|=O_q(1)$.

1.3) For $\ell\in[1,\tau-4]$, we find
$d_{\text{H}}(x_{[n]\backslash\{j_\ell\}},
x'_{[n]\backslash\{i_1\}})\geq 5$ and
$d_{\text{H}}(x_{[n]\backslash\{j_\ell\}},
x'_{[n]\backslash\{i_2\}})\geq 5$, so
$\big|E^{j_\ell}_{j_\ell}\cap(E^{i_{1}}\cup E^{i_{2}})\big|\leq
\big|E^{j_{\ell}}_{j_{\ell}}\cap
E^{i_{1}}\big|+\big|E^{j_{\ell}}_{j_{\ell}}\cap E^{i_{2}}\big|=0$.

Next, suppose $\sigma_1=0$. We prove 2.1)$-$2.3). Since
$\sigma_1=0$, then \eqref{eq-seqx-form} becomes
\begin{align}\label{eq-seqx-form-sg0}\bm
x=\bar{x}_1^{k_1}~\cdots~\bar{x}_{\tau-1}^{k_{\tau-1}}~a~b~b^{\sigma_2}
~\bar{x}_{\tau+2}^{k_{\tau+2}}~\cdots~\bar{x}_{m}^{k_{m}}\end{align}
and \eqref{form-x-xp-1} becomes
\begin{align}\label{form-x-xp-1-sg0}
\bm x &=\bar{x}_1^{k_1}~\cdots~\bar{x}_{\tau-1}^{k_{\tau-1}}
~a~b~a^{\sigma'_2}~\tilde{x}_{\tau'+2}^{k'_{\tau'+2}}~\cdots
~\tilde{x}_{m'}^{k'_{m'}}\nonumber\\
\bm x'&=\bar{x}_1^{k_1}~\cdots~\bar{x}_{\tau-1}^{k_{\tau-1}}
~b~a~a^{\sigma'_2}~\tilde{x}_{\tau'+2}^{k'_{\tau'+2}}~\cdots
~\tilde{x}_{m'}^{k'_{m'}}
\end{align}
Note that $\bar{x}_{\tau-1}\neq a$ because
$\bar{x}_{\tau-1}^{k_{\tau-1}}$ and $a$ are two runs of $\bm x$.
However, it is not necessary that $\bar{x}_{\tau-1}\neq b$.

2.1) From \eqref{form-x-xp-1-sg0}, we can obtain the following
\eqref{eq1-jm-i1i2-sg0}, \eqref{eq2-jm-i1i2-sg0},
\eqref{eq6-jm-i1i2-sg0} and \eqref{eq7-jm-i1i2-sg0}. For any $\bm
v\in
E^{j_{\tau-1}}_{j_{\tau-1}}=B^{\text{S}}_{2}\big(x_{[n]\backslash\{j_{\tau-1}\}},
x'_{[n]\backslash\{j_{\tau-1}\}}\big)$, by 2) of Lemma
\ref{lem-Cap-SE}, $\bm v$ can be written as the form
\eqref{eq4-jm-i1i2-sg0} such that: (i)
$[\cdots~\bar{x}_{\tau-1}^{k_{\tau-1}-1}][a^{\sigma'_2}\!~
~\tilde{x}_{\tau'+2}^{k'_{\tau'+2}}~\cdots]=
\cdots~\bar{x}_{\tau-1}^{k_{\tau-1}-1}~a^{\sigma'_2}\!~\tilde{x}_{\tau'+2}^{k'_{\tau'+2}}
~\cdots$ and $vv'\in\Sigma_q^2$; or (ii)
$[\cdots~\bar{x}_{\tau-1}^{k_{\tau-1}-1}][a^{\sigma'_2}\!~
~\tilde{x}_{\tau'+2}^{k'_{\tau'+2}}~\cdots]$ is obtained from
$\cdots~\bar{x}_{\tau-1}^{k_{\tau-1}-1}~a^{\sigma'_2}\!~\tilde{x}_{\tau'+2}^{k'_{\tau'+2}}
~\cdots$ by exactly one substitution and $vv'\in\{aa,bb\}$.
\begin{align}
x_{[n]\backslash\{j_{\tau-1}\}}
&=\!~\cdots~\bar{x}_{\tau-1}^{k_{\tau-1}-1}\!~~~\!~~~a~\!~~~~\!~b~~~
a^{\sigma'_2}\!~~
\tilde{x}_{\tau'+2}^{k'_{\tau'+2}}~\cdots\label{eq1-jm-i1i2-sg0}\\
x'_{[n]\backslash\{j_{\tau-1}\}}
&=\!~\cdots~\bar{x}_{\tau-1}^{k_{\tau-1}-1}\!~~~\!~~~b~~~
\!~~\!~a~~~a^{\sigma'_2}~\!~
\tilde{x}_{\tau'+2}^{k'_{\tau'+2}}~\cdots\label{eq2-jm-i1i2-sg0}\\
\bm v~~
&=[\cdots~\bar{x}_{\tau-1}^{k_{\tau-1}-1}]\!~~~~~v~\!~~~~v'\!~~\!~
[a^{\sigma'_2}\!~~
\tilde{x}_{\tau'+2}^{k'_{\tau'+2}}~\cdots]\label{eq4-jm-i1i2-sg0}\\
x_{[n]\backslash\{i_1\}}
&=\!~\cdots~\bar{x}_{\tau-1}^{k_{\tau-1}-1}\!~~~\bar{x}_{\tau-1}~~~b~~~
a^{\sigma'_2}\!~~
\tilde{x}_{\tau'+2}^{k'_{\tau'+2}}~\cdots\label{eq6-jm-i1i2-sg0}\\
x_{[n]\backslash\{i_2\}}
&=\!~\cdots~\bar{x}_{\tau-1}^{k_{\tau-1}-1}\!~~~\bar{x}_{\tau-1}~~~a~~~
a^{\sigma'_2}\!~~
\tilde{x}_{\tau'+2}^{k'_{\tau'+2}}\!~\cdots\label{eq7-jm-i1i2-sg0}\\
\nonumber
\end{align}
It is easy to see that: a) if $\bm v$ satisfies (i) or $\bm v$
satisfies (ii) and $vv'=aa$, then $d_{\text{H}}\big(\bm v,
x_{[n]\backslash\{i_{2}\}}\big)\leq 2$, so $\bm v\in
E^{i_2}=B^{\text{S}}_{2}\big(x_{[n]\backslash\{i_2\}}\big)$; b) if
$\bm v$ satisfies (ii) and $vv'=bb$, then $d_{\text{H}}\big(\bm v,
x_{[n]\backslash\{i_{1}\}}\big)\leq 2$, so $\bm v\in
E^{i_1}=B^{\text{S}}_{2}\big(x_{[n]\backslash\{i_1\}}\big)$.
Hence, we have $E^{j_{\tau-1}}_{j_{\tau-1}}\subseteq
(E^{i_{1}}\cup E^{i_{2}})$.

2.2) From \eqref{eq-seqx-form-sg0} and by Observation 2, we can
find that: $d_{\text{H}}(x_{[n]\backslash\{j_{\tau-3}\}},
x_{[n]\backslash\{i_1\}})=3$ and
$d_{\text{H}}(x_{[n]\backslash\{j_{\tau-3}\}},
x_{[n]\backslash\{i_2\}})=4$, so
$\big|E^{j_{\tau-3}}_{j_{\tau-3}}\cap(E^{i_{1}}\cup
E^{i_{2}})\big|\leq \big|E^{j_{\tau-3}}_{j_{\tau-3}}\cap
E^{i_{1}}\big|+\big|E^{j_{\tau-3}}_{j_{\tau-3}}\cap
E^{i_{2}}\big|=O_q(1)$, where the last equality comes from Lemma
\ref{lem-sub-int-size} and \eqref{xxp-case2}. Similarly, we can
find $d_{\text{H}}(x_{[n]\backslash\{j_{\tau-4}\}},
x_{[n]\backslash\{i_1\}})=4$ and
$d_{\text{H}}(x_{[n]\backslash\{j_{\tau-4}\}},
x_{[n]\backslash\{i_2\}})=5$, so
$\big|E^{j_{\tau-4}}_{j_{\tau-4}}\cap(E^{i_{1}}\cup
E^{i_{2}})\big|\leq \big|E^{j_{\tau-4}}_{j_{\tau-4}}\cap
E^{i_{1}}\big|+\big|E^{j_{\tau-4}}_{j_{\tau-4}}\cap
E^{i_{2}}\big|=O_q(1)$. Finally, we prove that
$\big|E^{j_{\tau-2}}_{j_{\tau-2}}\cap(E^{i_{1}}\cup
E^{i_{2}})\big|=O_q(1)$. The proof is similar to 2.1). From
\eqref{form-x-xp-1-sg0}, we can obtain the following
\eqref{eq1-jtau-i1i2-sg0}$-$\eqref{eq7-jtau-i1i2-sg0}, where $\bm
v\in
E^{j_{\tau-2}}_{j_{\tau-2}}=B^{\text{S}}_{2}\big(x_{[n]\backslash\{j_{\tau-2}\}},
x'_{[n]\backslash\{j_{\tau-2}\}}\big)$, satisfying: (i)
$[\cdots~\bar{x}_{\tau-2}^{k_{\tau-2}-1}~\bar{x}_{\tau-1}~\bar{x}_{\tau-1}^{k_{\tau-1}-1}]
[a^{\sigma'_2}~ \tilde{x}_{\tau'+2}^{k'_{\tau'+2}}~\cdots]=
\cdots~\bar{x}_{\tau-2}^{k_{\tau-2}-1}~\bar{x}_{\tau-1}~\bar{x}_{\tau-1}^{k_{\tau-1}-1}~
a^{\sigma'_2}~ \tilde{x}_{\tau'+2}^{k'_{\tau'+2}}~\cdots$ and
$vv'\in\Sigma_q^2$; or (ii)
$[\cdots~\bar{x}_{\tau-2}^{k_{\tau-2}-1}~\bar{x}_{\tau-1}~\bar{x}_{\tau-1}^{k_{\tau-1}-1}]
[a^{\sigma'_2}~ \tilde{x}_{\tau'+2}^{k'_{\tau'+2}}~\cdots]$ is
obtained from
$\cdots~\bar{x}_{\tau-2}^{k_{\tau-2}-1}~\bar{x}_{\tau-1}~\bar{x}_{\tau-1}^{k_{\tau-1}-1}~
a^{\sigma'_2}~ \tilde{x}_{\tau'+2}^{k'_{\tau'+2}}~\cdots$ by
exactly one substitution and $vv'\in\{aa,bb\}$.
\begin{align}
x_{[n]\backslash\{j_{\tau-2}\}}
&=\!~\cdots~\bar{x}_{\tau-2}^{k_{\tau-2}-1}~\bar{x}_{\tau-1}~\bar{x}_{\tau-1}^{k_{\tau-1}-1}\!~~~\!~~~a~\!~~~~\!~b~~~
a^{\sigma'_2}\!~~
\tilde{x}_{\tau'+2}^{k'_{\tau'+2}}~\cdots\label{eq1-jtau-i1i2-sg0}\\
x'_{[n]\backslash\{j_{\tau-2}\}}
&=\!~\cdots~\bar{x}_{\tau-2}^{k_{\tau-2}-1}~\bar{x}_{\tau-1}~\bar{x}_{\tau-1}^{k_{\tau-1}-1}\!~~~\!~~~b~~~
\!~~\!~a~~~a^{\sigma'_2}~\!~
\tilde{x}_{\tau'+2}^{k'_{\tau'+2}}~\cdots\label{eq2-jtau-i1i2-sg0}\\
\bm v~~
&=[\cdots~\bar{x}_{\tau-2}^{k_{\tau-2}-1}~\bar{x}_{\tau-1}~\bar{x}_{\tau-1}^{k_{\tau-1}-1}]\!~~~~~v~\!~~~~v'~~
[a^{\sigma'_2}\!~~
\tilde{x}_{\tau'+2}^{k'_{\tau'+2}}~\cdots]\label{eq4-jtau-i1i2-sg0}\\
x_{[n]\backslash\{i_1\}}
&=\!~\cdots~\bar{x}_{\tau-2}^{k_{\tau-2}-1}~\bar{x}_{\tau-2}~\bar{x}_{\tau-1}^{k_{\tau-1}-1}\!~~~\bar{x}_{\tau-1}~~~b~~~
a^{\sigma'_2}\!~~
\tilde{x}_{\tau'+2}^{k'_{\tau'+2}}~\cdots\label{eq6-jtau-i1i2-sg0}\\
x_{[n]\backslash\{i_2\}}
&=\!~\cdots~\bar{x}_{\tau-2}^{k_{\tau-2}-1}~\bar{x}_{\tau-2}~\bar{x}_{\tau-1}^{k_{\tau-1}-1}\!~~~\bar{x}_{\tau-1}~~~a~~~
a^{\sigma'_2}\!~~
\tilde{x}_{\tau'+2}^{k'_{\tau'+2}}\!~\cdots\label{eq7-jtau-i1i2-sg0}\\
\nonumber
\end{align}
Recall that by Claim 1,
$E^{i_1}=B^{\text{S}}_{2}\big(x_{[n]\backslash\{i_1\}}\big)$ and
$E^{i_2}=B^{\text{S}}_{2}\big(x_{[n]\backslash\{i_2\}}\big)$. We
can easily find that: a) if $\bm v$ satisfies (i), then $\bm v\in
E^{i_1}\cup E^{i_2}$ if and only if $v=\bar{x}_{\tau-1}$ or
$v'\in\{a,b\}$ and the number of such $\bm v$ is $3q-2$; b) if
$\bm v$ satisfies (ii) and $vv'=aa$, then $\bm v\in E^{i_1}\cup
E^{i_2}$ if and only if $\bm
v=\cdots~\bar{x}_{\tau-2}^{k_{\tau-2}-1}~a'~\bar{x}_{\tau-1}^{k_{\tau-1}-1}~a~a~
a^{\sigma'_2}~\tilde{x}_{\tau'+2}^{k'_{\tau'+2}}~\cdots$ for some
$a'\in\Sigma_q\backslash\{\bar{x}_{\tau-1}\}$ and the number of
such $\bm v$ is $q-1$; c) if $\bm v$ satisfies (ii) and $vv'=bb$,
then $\bm v\in E^{i_1}\cup E^{i_2}$ if and only if $\bm
v=\cdots~\bar{x}_{\tau-2}^{k_{\tau-2}-1}~a'~\bar{x}_{\tau-1}^{k_{\tau-1}-1}~b~b~
a^{\sigma'_2}~ \tilde{x}_{\tau'+2}^{k'_{\tau'+2}}~\cdots$ for some
$a'\in\Sigma_q\backslash\{a\}$, so the number of such $\bm v$ is
$q-1$. Hence, $\big|E^{j_{\tau-2}}_{j_{\tau-2}}\cap(E^{i_{1}}\cup
E^{i_{2}})\big|=(3q-2)+2(q-1)=O_q(1)$.

2.3) For $\ell\in[1,\tau-5]$, we find
$d_{\text{H}}(x_{[n]\backslash\{j_\ell\}},
x_{[n]\backslash\{i_1\}})\geq 5$ and
$d_{\text{H}}(x_{[n]\backslash\{j_\ell\}},
x_{[n]\backslash\{i_2\}})\geq 5$, so
$\big|E^{j_\ell}_{j_\ell}\cap(E^{i_{1}}\cup E^{i_{2}})\big|\leq
\big|E^{j_{\ell}}_{j_{\ell}}\cap
E^{i_{1}}\big|+\big|E^{j_{\ell}}_{j_{\ell}}\cap E^{i_{2}}\big|=0$.
\end{proof}

For $\ell\in[\tau+2,m]$, we have the following Claim 8$'$, which
can be proved by the same discussions as that of Claim 8.

\textbf{Claim 8$'$}: Let $\{j_i: i\in[m]\}$, $\{j'_{i'}:
i'\in[m]\}$, $\tau$ and $\tau'$ be defined as in Remark
\ref{rem-form-run}. If $\sigma_2>0$, we have
\begin{enumerate}
 \item[1.1)] $\big|E^{j_{\tau+2}}_{j_{\tau+2}}
 \cap(E^{i_{1}}\cup E^{i_{2}})\big|=(q-1)n+O_q(1)$.
 \item[1.2)] For $\ell\in\{\tau+3,\tau+4\}$,
 $\big|E^{j_\ell}_{j_\ell}\cap(E^{i_{1}}\cup
 E^{i_{2}})\big|=O_q(1)$.
 \item[1.3)] For $\ell\in[\tau+5,m]$,
 $\big|E^{j_\ell}_{j_\ell}\cap(E^{i_{1}}\cup E^{i_{2}})\big|=0$.
\end{enumerate}
If $\sigma_2=0$, we have
\begin{enumerate}
 \item[2.1)] $E^{j_{\tau+2}}_{j_{\tau+2}}
 \subseteq(E^{i_{1}}\cup E^{i_{2}})$.
 \item[2.2)] For $\ell\in\{\tau+3,\tau+4,\tau+5\}$,
 $\big|E^{j_\ell}_{j_\ell}\cap(E^{i_{1}}\cup
 E^{i_{2}})\big|=O_q(1)$.
 \item[2.3)] For $\ell\in[\tau+6,m]$,
 $\big|E^{j_\ell}_{j_\ell}\cap(E^{i_{1}}\cup E^{i_{2}})\big|=0$.
\end{enumerate}

Now, we prove Claim 4. For each $\ell\in[\tau-1]$, since
$|\text{supp}(\bm x-\bm x')|=2$, by Observation 2, we have
$d_{\text{H}}(x_{[n]\backslash\{j_{\ell}\}},
x_{[n]\backslash\{j_{\ell}\}})=2$, so by Lemma
\ref{lem-sub-int-size} and \eqref{xxp-case2},
$\big|E^{j_\ell}_{j_\ell}\big|=2(q-1)n+q^2-6q+6$. Therefore, by
Claim 7, we can obtain
\begin{align}\label{eq-Lcup-size}
\left|\bigcup_{\ell\in[1,\tau-1]}E^{j_\ell}_{j_\ell}\right|
&=\sum_{\ell=1}^{\tau-1}\left|E^{j_\ell}_{j_{\ell}}\Big\backslash
\left(\bigcup_{\ell'=\ell+1}^{\tau-1}
E^{j_{\ell'}}_{j_{\ell'}}\right)\right|\nonumber\\
&=\sum_{\ell=1}^{\tau-3}\bigg|E^{j_\ell}_{j_{\ell}}\Big\backslash
\left(E^{j_{\ell+1}}_{j_{\ell+1}}\cup
E^{j_{\ell+2}}_{j_{\ell+2}}\right)\bigg|
+\big|E^{j_{\tau-2}}_{j_{\tau-2}}\big\backslash
E^{j_{\tau-1}}_{j_{\tau-1}}\big|+\big|E^{j_{\tau-1}}_{j_{\tau-1}}\big|\nonumber\\
&=(\tau-3)\big(2(q-1)n+q^2-8q+4\big)+\big(2(q-1)n+q^2-8q+6\big)
+\big(2(q-1)n+q^2-6q+6\big)\nonumber\\
&=2(q-1)(\tau-1)n+(q^2-8q+4)\tau+O_q(1).
\end{align}
Similarly, by Claim 7$'$, we can obtain
\begin{align}\label{eq-Rcup-size}
\left|\bigcup_{\ell\in[\tau+2,m]}E^{j_\ell}_{j_\ell}\right|
&=\sum_{\ell=\tau+2}^{m}\left|E^{j_\ell}_{j_{\ell}}\Big\backslash
\left(\bigcup_{\ell'=\tau+2}^{\ell-1}
E^{j_{\ell'}}_{j_{\ell'}}\right)\right|\nonumber\\
&=\sum_{\ell=\tau+4}^{m}\bigg|E^{j_\ell}_{j_{\ell}}\Big\backslash
\left(E^{j_{\ell-1}}_{j_{\ell-1}}\cup
E^{j_{\ell-2}}_{j_{\ell-2}}\right)\bigg|
+\big|E^{j_{\tau+3}}_{j_{\tau+3}}\big\backslash
E^{j_{\tau+2}}_{j_{\tau+2}}\big|+\big|E^{j_{\tau+2}}_{j_{\tau+2}}\big|\nonumber\\
&=(m-\tau-3)\big(2(q-1)n+q^2-8q+4\big)+\big(2(q-1)n+q^2-8q+6\big)
+\big(2(q-1)n+q^2-6q+6\big)\nonumber\\
&=2(q-1)(m-\tau-1)n+(q^2-8q+4)(m-\tau)+O_q(1).
\end{align}
Thus, we have the following arguments.
\begin{itemize}
 \item[a)] If $\tau>1$ and $\sigma_1>0$, then by 1.1)$-$1.3) of Claim 8, we can obtain
 $\big|\big(\bigcup_{\ell=1}^{\tau-1}E^{j_\ell}_{j_\ell}\big)\cap
 (E^{i_1}\cup E^{i_2})\big|=(q-1)n+O_q(1)$, so
 $\big|\big(\bigcup_{\ell=1}^{\tau-1}E^{j_\ell}_{j_\ell}\big)\backslash
 (E^{i_1}\cup E^{i_2})\big|=\big|\big(\bigcup_{\ell=1}^{\tau-1}
 E^{j_\ell}_{j_\ell}\big)\big|-(q-1)n+O_q(1)=(q-1)(2\tau-3)n+(q^2-8q+4)\tau+O_q(1)$.
 \item[b)] If $\tau>1$ and $\sigma_1=0$, then by 2.1)$-$2.3) of Claim 8,
 $\big|\big(\bigcup_{\ell=1}^{\tau-1}E^{j_\ell}_{j_\ell}\big)\cap
 (E^{i_1}\cup E^{i_2})\big|=\big|E^{j_{\tau-1}}_{j_{\tau-1}}\big|+O_q(1)=2(q-1)n+O_q(1)$, so
 $\big|\big(\bigcup_{\ell=1}^{\tau-1}E^{j_\ell}_{j_\ell}\big)\backslash
 (E^{i_1}\cup E^{i_2})\big|=\big|\big(\bigcup_{\ell=1}^{\tau-1}
 E^{j_\ell}_{j_\ell}\big)\big|-2(q-1)n+O_q(1)=(q-1)(2\tau-4)n+(q^2-8q+4)\tau+O_q(1)$.
 \item[c)] If $\tau<m-1$ and $\sigma_2>0$, then by 1.1)$-$1.3) of Claim
 8$'$, we can obtain
 $\big|\big(\bigcup_{\ell=\tau+2}^{m}E^{j_\ell}_{j_\ell}\big)\cap
 (E^{i_1}\cup E^{i_2})\big|=(q-1)n+O_q(1)$, so
 $\Big|\left(\bigcup_{\ell=\tau+2}^{m}E^{j_\ell}_{j_\ell}\right)\backslash
 (E^{i_1}\cup E^{i_2})\Big|=(q-1)(2m-2\tau-3)n+(q^2-8q+4)(m-\tau)+O_q(1)$.
 \item[d)] If $\tau<m-1$ and $\sigma_2=0$, then by 2.1)$-$2.3) of Claim
 8$'$, $\big|\big(\bigcup_{\ell=\tau+2}^{m}E^{j_\ell}_{j_\ell}\big)\cap
 (E^{i_1}\cup E^{i_2})\big|=\big|E^{j_{\tau+2}}_{j_{\tau+2}}\big|+O_q(1)=2(q-1)n+O_q(1)$, so
 $\Big|\left(\bigcup_{\ell=\tau+2}^{m}E^{j_\ell}_{j_\ell}\right)\backslash
 (E^{i_1}\cup E^{i_2})\Big|=(q-1)(2m-2\tau-4)n+(q^2-8q+4)(m-\tau)+O_q(1)$.
\end{itemize}
If $\tau>1$, then by the items a) and b), we can obtain
$\big|\big(\bigcup_{\ell=1}^{\tau-1}E^{j_\ell}_{j_\ell}\big)\backslash
(E^{i_1}\cup
E^{i_2})\big|=(q-1)(2\tau-4)n+(q^2-8q+4)\tau+1_{\sigma_1>0}(q-1)n+O_q(1)$,
where $1_{\sigma_1>0}=1$ if $\sigma_1>0$ and $1_{\sigma_1>0}=0$ if
$\sigma_1=0$. If $\tau<m-1$, then by the items c) and d), we can
obtain
$\Big|\left(\bigcup_{\ell=\tau+2}^{m}E^{j_\ell}_{j_\ell}\right)\backslash
(E^{i_1}\cup
E^{i_2})\Big|=(q-1)(2m-2\tau-4)n+(q^2-8q+4)(m-\tau)+1_{\sigma_2>0}(q-1)n+O_q(1)$,
where $1_{\sigma_2>0}=1$ if $\sigma_2>0$ and $1_{\sigma_2>0}=0$ if
$\sigma_2=0$.

We require $\tau>1$ in a) and b) because
$\bigcup_{\ell=1}^{\tau-1}E^{j_\ell}_{j_\ell}=\emptyset$ when
$\tau=1$. Similarly, we require $\tau<m-1$ in c) and d) because
$\bigcup_{\ell=\tau+2}^{m}E^{j_\ell}_{j_\ell}=\emptyset$ when
$\tau=m-1$.

\subsection{Proof of Claim 5}

Before proving Claim 5, we need to prove the following claims.

\textbf{Claim 9}: Let $\{j_i: i\in[m]\}$, $\{j'_{i'}: i'\in[m]\}$,
$\tau$ and $\tau'$ be defined as in Remark \ref{rem-form-run}. If
$\sigma_1>0$, we have
\begin{enumerate}
 \item[1.1)] For $\ell\in\{\tau-1,\tau-2\}$, we have
 $\big|E^{j_\ell}_{j_\ell}\cap E^{j_{\tau-1}}_{j'_{\tau'+2}}\big|=O_q(1)$ and
 $\big|E^{j_\ell}_{j_\ell}\cap
 E^{j_{\tau+2}}_{j'_{\tau'-1}}\big|=O_q(1)$.
 \item[1.2)] For $\ell\in[1,\tau-3]$, we have
 $\big|E^{j_\ell}_{j_\ell}\cap E^{j_{\tau-1}}_{j'_{\tau'+2}}\big|=0$ and
 $\big|E^{j_\ell}_{j_\ell}\cap E^{j_{\tau+2}}_{j'_{\tau'-1}}\big|=0$.
\end{enumerate}
If $\sigma_1=0$, we have
\begin{enumerate}
 \item[2.1)] For $\ell\in\{\tau-2,\tau-3\}$, we have
 $\big|E^{j_\ell}_{j_\ell}\cap E^{j_{\tau-1}}_{j'_{\tau'+2}}\big|=O_q(1)$ and
 $\big|E^{j_\ell}_{j_\ell}\cap E^{j_{\tau+2}}_{j'_{\tau'-1}}\big|=O_q(1)$.
 \item[2.2)] For $\ell\in[1,\tau-4]$, we have
 $\big|E^{j_\ell}_{j_\ell}\cap E^{j_{\tau-1}}_{j'_{\tau'+2}}\big|=0$ and
 $\big|E^{j_\ell}_{j_\ell}\cap E^{j_{\tau+2}}_{j'_{\tau'-1}}\big|=0$.
\end{enumerate}
\begin{proof}[Proof of Claim 9]
First, we suppose $\sigma_1>0$ and prove 1.1) and 1.2).

1.1) Since $\sigma_1>0$, by \eqref{form-x-xp-1} and Remark
\ref{rem-spc-Obv2}, we can find
$d_{\text{H}}\big(x'_{[n]\backslash\{j_{\tau-1}\}},
x'_{[n]\backslash\{j'_{\tau'+2}\}}\big)=4$ and
$d_{\text{H}}\big(x'_{[n]\backslash\{j_{\tau-2}\}},
x'_{[n]\backslash\{j'_{\tau'+2}\}}\big)=5$, so for
$\ell\in\{\tau-1,\tau-2\}$, we have
$\Big|E^{j_{\ell}}_{j_{\ell}}\cap
E^{j_{\tau-1}}_{j'_{\tau'+2}}\Big|
=\big|B^{\text{S}}_{2}\big(x_{[n]\backslash\{j_{\ell}\}},
x'_{[n]\backslash\{j_{\ell}\}}\big)\cap
B^{\text{S}}_{2}\big(x_{[n]\backslash\{j_{\tau-1}\}},
x'_{[n]\backslash\{j'_{\tau'+2}\}}\big)\big|\leq
\big|B^{\text{S}}_{2}\big(x'_{[n]\backslash\{j_{\ell}\}},
x'_{[n]\backslash\{j'_{\tau'+2}\}}\big)\big|=O_q(1)$, where the
last equality comes from Lemma \ref{lem-sub-int-size} and
\eqref{xxp-case2}. Similarly, by \eqref{eq-seqx-form} and Remark
\ref{rem-spc-Obv2}, we can find
$d_{\text{H}}\big(x_{[n]\backslash\{j_{\tau-1}\}},
x_{[n]\backslash\{j_{\tau+2}\}}\big)=3$ and
$d_{\text{H}}\big(x_{[n]\backslash\{j_{\tau-2}\}},
x_{[n]\backslash\{j_{\tau+2}\}}\big)=4$, so for
$\ell\in\{\tau-1,\tau-2\}$, we have
$\big|E^{j_{\ell}}_{j_{\ell}}\cap
E^{j_{\tau+2}}_{j'_{\tau'-1}}\big|
=\big|B^{\text{S}}_{2}\big(x_{[n]\backslash\{j_{\ell}\}},
x'_{[n]\backslash\{j_{\ell}\}}\big)\cap
B^{\text{S}}_{2}\big(x_{[n]\backslash\{j_{\tau+2}\}},
x'_{[n]\backslash\{j'_{\tau'-1}\}}\big)\big|\leq
\big|B^{\text{S}}_{2}\big(x_{[n]\backslash\{j_{\ell}\}},
x_{[n]\backslash\{j_{\tau+2}\}}\big)\big|=O_q(1)$.

1.2) The proof is similar to 1). For $\ell\in[1,\tau-3]$, we can
find $d_{\text{H}}\big(x'_{[n]\backslash\{j_{\ell}\}},
x'_{[n]\backslash\{j'_{\tau'+2}\}}\big)\geq 5~($by
\eqref{form-x-xp-1} and Remark \ref{rem-spc-Obv2}$)$ and
$d_{\text{H}}\big(x_{[n]\backslash\{j_{\ell}\}},
x_{[n]\backslash\{j_{\tau+2}\}}\big)\geq 5~($by
\eqref{eq-seqx-form} and Remark \ref{rem-spc-Obv2}$)$, so by Lemma
\ref{lem-sub-int-size}, $\Big|E^{j_{\ell}}_{j_{\ell}}\cap
E^{j_{\ell}}_{j'_{\tau'+2}}\Big|\leq
\big|B^{\text{S}}_{2}\big(x'_{[n]\backslash\{j_{\ell}\}},
x'_{[n]\backslash\{j'_{\tau'+2}\}}\big)\big|=0$ and
$\big|E^{j_{\ell}}_{j_{\ell}}\cap
E^{j_{\tau+2}}_{j'_{\tau'-1}}\big|\leq
\big|B^{\text{S}}_{2}\big(x_{[n]\backslash\{j_{\ell}\}},
x_{[n]\backslash\{j_{\tau+2}\}}\big)\big|=0$.

Next, we suppose $\sigma_1=0$ and prove 2.1) and 2.2). Since
$\sigma_1=0$, then \eqref{eq-seqx-form} becomes
\begin{align}\label{eq-seqx-form-sg0}\bm
x=\bar{x}_1^{k_1}~\cdots~\bar{x}_{\tau-1}^{k_{\tau-1}}~a~b~b^{\sigma_2}
~\bar{x}_{\tau+2}^{k_{\tau+2}}~\cdots~\bar{x}_{m}^{k_{m}}\end{align}
and \eqref{form-x-xp-1} becomes
\begin{align}\label{form-x-xp-1-sg0}
\bm x &=\bar{x}_1^{k_1}~\cdots~\bar{x}_{\tau-1}^{k_{\tau-1}}
~a~b~a^{\sigma'_2}~\tilde{x}_{\tau'+2}^{k'_{\tau'+2}}~\cdots
~\tilde{x}_{m'}^{k'_{m'}}\nonumber\\
\bm x'&=\bar{x}_1^{k_1}~\cdots~\bar{x}_{\tau-1}^{k_{\tau-1}}
~b~a~a^{\sigma'_2}~\tilde{x}_{\tau'+2}^{k'_{\tau'+2}}~\cdots
~\tilde{x}_{m'}^{k'_{m'}}
\end{align}
Note that $\bar{x}_{\tau-1}\neq a$ because
$\bar{x}_{\tau-1}^{k_{\tau-1}}$ and $a$ are two runs of $\bm x$.
However, it is not necessary that $\bar{x}_{\tau-1}\neq b$. We
prove the four statements separately.

2.1) By \eqref{form-x-xp-1-sg0} and Remark \ref{rem-spc-Obv2}, we
can find $d_{\text{H}}\big(x'_{[n]\backslash\{j_{\tau-1}\}},
x'_{[n]\backslash\{j'_{\tau'+2}\}}\big)\in\{3,4\}$ and
$d_{\text{H}}\big(x'_{[n]\backslash\{j_{\tau-2}\}},
x'_{[n]\backslash\{j'_{\tau'+2}\}}\big)\in\{4,5\}$,\footnote{In
fact, if $\bar{x}_{\tau-1}=b$, then
$d_{\text{H}}\big(x'_{[n]\backslash\{j_{\tau-1}\}},
x'_{[n]\backslash\{j'_{\tau'+2}\}}\big)=3$ and
$d_{\text{H}}\big(x'_{[n]\backslash\{j_{\tau-2}\}},
x'_{[n]\backslash\{j'_{\tau'+2}\}}\big)=4$; if
$\bar{x}_{\tau-1}\neq b$, then
$d_{\text{H}}\big(x'_{[n]\backslash\{j_{\tau-1}\}},
x'_{[n]\backslash\{j'_{\tau'+2}\}}\big)=4$ and
$d_{\text{H}}\big(x'_{[n]\backslash\{j_{\tau-2}\}},
x'_{[n]\backslash\{j'_{\tau'+2}\}}\big)=5$.} so for
$\ell\in\{\tau-2,\tau-3\}$, we have
$\Big|E^{j_{\ell}}_{j_{\ell}}\cap E^{j_{\ell}}_{j'_{\tau'+2}}\Big|
=\big|B^{\text{S}}_{2}\big(x_{[n]\backslash\{j_{\ell}\}},
x'_{[n]\backslash\{j_{\ell}\}}\big)\cap
B^{\text{S}}_{2}\big(x_{[n]\backslash\{j_{\ell}\}},
x'_{[n]\backslash\{j'_{\tau'+2}\}}\big)\big|\leq
\big|B^{\text{S}}_{2}\big(x'_{[n]\backslash\{j_{\ell}\}},
x'_{[n]\backslash\{j'_{\tau'+2}\}}\big)\big|=O_q(1)$, where the
last equality comes from Lemma \ref{lem-sub-int-size} and
\eqref{xxp-case2}. Similarly, by \eqref{eq-seqx-form-sg0} and
Remark \ref{rem-spc-Obv2}, we can find
$d_{\text{H}}\big(x_{[n]\backslash\{j_{\tau-1}\}},
x_{[n]\backslash\{j_{\tau+2}\}}\big)=4$ and
$d_{\text{H}}\big(x_{[n]\backslash\{j_{\tau-2}\}},
x_{[n]\backslash\{j_{\tau+2}\}}\big)=5$, so for
$\ell\in\{\tau-2,\tau-3\}$, we have
$\big|E^{j_{\ell}}_{j_{\ell}}\cap
E^{j_{\tau+2}}_{j'_{\tau'-1}}\big|
=\big|B^{\text{S}}_{2}\big(x_{[n]\backslash\{j_{\ell}\}},
x'_{[n]\backslash\{j_{\ell}\}}\big)\cap
B^{\text{S}}_{2}\big(x_{[n]\backslash\{j_{\tau+2}\}},
x'_{[n]\backslash\{j'_{\tau'-1}\}}\big)\big|\leq
\big|B^{\text{S}}_{2}\big(x_{[n]\backslash\{j_{\ell}\}},
x_{[n]\backslash\{j_{\tau+2}\}}\big)\big|=O_q(1)$.

2.2) The proof is similar to 1). For $\ell\in[1,\tau-4]$, we can
find $d_{\text{H}}\big(x'_{[n]\backslash\{j_{\ell}\}},
x'_{[n]\backslash\{j'_{\tau'+2}\}}\big)\geq 5~($by
\eqref{form-x-xp-1-sg0} and Remark \ref{rem-spc-Obv2}$)$ and
$d_{\text{H}}\big(x_{[n]\backslash\{j_{\ell}\}},
x_{[n]\backslash\{j_{\tau+2}\}}\big)\geq 5~($by
\eqref{eq-seqx-form-sg0} and Remark \ref{rem-spc-Obv2}$)$, so by
Lemma \ref{lem-sub-int-size}, $\Big|E^{j_{\ell}}_{j_{\ell}}\cap
E^{j_{\ell}}_{j'_{\tau'+2}}\Big|\leq
\big|B^{\text{S}}_{2}\big(x'_{[n]\backslash\{j_{\ell}\}},
x'_{[n]\backslash\{j'_{\tau'+2}\}}\big)\big|=0$ and
$\big|E^{j_{\ell}}_{j_{\ell}}\cap
E^{j_{\tau+2}}_{j'_{\tau'-1}}\big|\leq
\big|B^{\text{S}}_{2}\big(x_{[n]\backslash\{j_{\ell}\}},
x_{[n]\backslash\{j_{\tau+2}\}}\big)\big|=0$.
\end{proof}

For $\ell\in[\tau+2,m]$, we have the following Claim 9$'$, which
can be proved by the same discussions as that of Claim 9.

\textbf{Claim 9$'$}: Let $\{j_i: i\in[m]\}$, $\{j'_{i'}:
i'\in[m]\}$, $\tau$ and $\tau'$ be defined as in Remark
\ref{rem-form-run}. If $\sigma_2>0$, we have
\begin{enumerate}
 \item[1.1)] For $\ell\in\{\tau+2,\tau+3\}$, we have
 $\big|E^{j_\ell}_{j_\ell}\cap E^{j_{\tau-1}}_{j'_{\tau'+2}}\big|=O_q(1)$ and
 $\big|E^{j_\ell}_{j_\ell}\cap E^{j_{\tau+2}}_{j'_{\tau'-1}}\big|=O_q(1)$.
 \item[1.2)] For $\ell\in[\tau+4,m]$, we have
 $\big|E^{j_\ell}_{j_\ell}\cap E^{j_{\tau-1}}_{j'_{\tau'+2}}\big|=0$ and
 $\big|E^{j_\ell}_{j_\ell}\cap E^{j_{\tau+2}}_{j'_{\tau'-1}}\big|=0$.
\end{enumerate}
If $\sigma_2=0$, we have
\begin{enumerate}
 \item[2.1)] For $\ell\in\{\tau+3,\tau+4\}$, we have
 $\big|E^{j_\ell}_{j_\ell}\cap E^{j_{\tau-1}}_{j'_{\tau'+2}}\big|=O_q(1)$ and
 $\big|E^{j_\ell}_{j_\ell}\cap E^{j_{\tau+2}}_{j'_{\tau'-1}}\big|=O_q(1)$.
 \item[2.2)] For $\ell\in[\tau+5,m]$, we have
 $\big|E^{j_\ell}_{j_\ell}\cap E^{j_{\tau-1}}_{j'_{\tau'+2}}\big|=0$ and
 $\big|E^{j_\ell}_{j_\ell}\cap E^{j_{\tau+2}}_{j'_{\tau'-1}}\big|=0$.
\end{enumerate}

\textbf{Claim 10}: Let $\{j_i: i\in[m]\}$, $\{j'_{i'}:
i'\in[m]\}$, $\tau$ and $\tau'$ be defined as in Remark
\ref{rem-form-run}. The following statements hold.
\begin{itemize}
 \item[1)] $\Big|E^{j_{\tau-1}}_{j'_{\tau'+2}}\cap(E^{i_1}\cup
 E^{i_2})\Big|=(q-1)n+O_q(1)$ and
 $\Big|E^{j_{\tau+2}}_{j'_{\tau'-1}}\cap(E^{i_1}\cup
 E^{i_2})\Big|=(q-1)n+O_q(1)$.\vspace{3pt}
 \item[2)] $\Big|E^{j_{\tau-1}}_{j'_{\tau'+2}}\cap E^{j_{\tau+2}}_{j'_{\tau'-1}}\Big|
 =O_q(1)$.
\end{itemize}
\begin{proof}[Proof of Claim 10]
We prove the two statements separately.

1) We only prove the result for
$\Big|E^{j_{\tau-1}}_{j'_{\tau'+2}}\cap(E^{i_1}\cup
E^{i_2})\Big|$, because the proof for
$\Big|E^{j_{\tau+2}}_{j'_{\tau'-1}}\cap(E^{i_1}\cup E^{i_2})\Big|$
is similar. By \eqref{form-x-xp-1}, we can obtain the following
\eqref{eq1-jmjm1-i1i2-sgn0}, \eqref{eq2-jmjm1-i1i2-sgn0},
\eqref{eq6-jmjm1-i1i2-sgn0} and \eqref{eq7-jmjm1-i1i2-sgn0}.
Suppose $\bm v\in
E^{j_{\tau-1}}_{j'_{\tau'+2}}=B^{\text{S}}_{2}\big(x_{[n]\backslash\{j_{\tau-1}\}},
x'_{[n]\backslash\{j'_{\tau'+2}\}}\big)$. Then by 2) of Lemma
\ref{lem-Cap-SE}, we can obtain \eqref{eq4-jmjm1-i1i2-sgn0} such
that: (i) $[\cdots~\bar{x}_{\tau-1}^{k_{\tau-1}-1}][a^{\sigma_1}~
b~a^{\sigma'_2}][\tilde{x}_{\tau'+2}^{k'_{\tau'+2}-1}\cdots]
=\cdots~\bar{x}_{\tau-1}^{k_{\tau-1}-1}~a^{\sigma_1}~
b~a^{\sigma'_2}~\tilde{x}_{\tau'+2}^{k'_{\tau'+2}-1}\cdots$ and
$v\!~v'\in\Sigma_q^2$; or (ii)
$[\cdots~\bar{x}_{\tau-1}^{k_{\tau-1}-1}][a^{\sigma_1}~
b~a^{\sigma'_2}][\tilde{x}_{\tau'+2}^{k'_{\tau'+2}-1}\cdots]$ is
obtained from
$\cdots~\bar{x}_{\tau-1}^{k_{\tau-1}-1}~a^{\sigma_1}~
b~a^{\sigma'_2}~\tilde{x}_{\tau'+2}^{k'_{\tau'+2}-1}\cdots$ by
exactly one substitution and $vv'\in\{aa,
\bar{x}_{\tau-1}\bar{x}_{\tau'+2}\}$.
\begin{align}
x_{[n]\backslash\{j_{\tau-1}\}}
&=\!~\cdots~\bar{x}_{\tau-1}^{k_{\tau-1}-1}~~~~~a~~~~a^{\sigma_1}\!~\!~
b~a^{\sigma'_2}\!~~\!~\tilde{x}_{\tau'+2}
~\!~~\tilde{x}_{\tau'+2}^{k'_{\tau'+2}-1}~\cdots\label{eq1-jmjm1-i1i2-sgn0}\\
x'_{[n]\backslash\{j'_{\tau'+2}\}}
&=\!~\cdots~\bar{x}_{\tau-1}^{k_{\tau-1}-1}~~\bar{x}_{\tau-1}\!~~~a^{\sigma_1}
\!~\!~b~a^{\sigma'_2}~~~~~a~~~~
~\tilde{x}_{\tau'+2}^{k'_{\tau'+2}-1}~\cdots\label{eq2-jmjm1-i1i2-sgn0}\\
\bm v~~
&=[\cdots~\bar{x}_{\tau-1}^{k_{\tau-1}-1}]~\!~~~~v~\!~\!~~[a^{\sigma_1}~
b~a^{\sigma'_2}]~~~~v'~\!~~
~[\tilde{x}_{\tau'+2}^{k'_{\tau'+2}-1}~\cdots]\label{eq4-jmjm1-i1i2-sgn0}\\
x_{[n]\backslash\{i_1\}}
&=\!~\cdots~\bar{x}_{\tau-1}^{k_{\tau-1}-1}~~\bar{x}_{\tau-1}~\!~~a^{\sigma_1}
\!~\!~b~a^{\sigma'_2}~\!~\!~ \tilde{x}_{\tau'+2}~
~\!~\tilde{x}_{\tau'+2}^{k'_{\tau'+2}-1}~\cdots\label{eq6-jmjm1-i1i2-sgn0}\\
x_{[n]\backslash\{i_2\}}
&=\!~\cdots~\bar{x}_{\tau-1}^{k_{\tau-1}-1}~~\bar{x}_{\tau-1}\!~~~a^{\sigma_1}
\!~\!~a~a^{\sigma'_2}\!~~\tilde{x}_{\tau'+2}
~\!~~\tilde{x}_{\tau'+2}^{k'_{\tau'+2}-1}~\cdots\label{eq7-jmjm1-i1i2-sgn0}
\end{align}
Note that by \eqref{form-x-xp-1}, we have $a\neq \bar{x}_{\tau-1}$
and $a\neq \tilde{x}_{\tau'+2}$. Then we can see that: a) if $\bm
v$ satisfies condition (i), then $d_{\text{H}}\big(\bm v,
x_{[n]\backslash\{i_1\}}\big)\leq 2$ and so $\bm v\in E^{i_1}$; b)
if $\bm v$ satisfies (ii) and $vv'=aa$, then by comparing
\eqref{eq4-jmjm1-i1i2-sgn0} with \eqref{eq6-jmjm1-i1i2-sgn0} and
\eqref{eq7-jmjm1-i1i2-sgn0}, we have $\bm v\in(E^{i_1}\cup
E^{i_2})$ if and only if $[\cdots\bar{x}_{m}^{k_m-1}][a^{\sigma_1}
ba^{\sigma'_2}][\tilde{x}_{j'_{m_1+1}}^{k'_{m_1+1}-1}\cdots]
=\cdots\bar{x}_{m}^{k_m-1}a^{\sigma_1}
a\!~a^{\sigma'_2}\tilde{x}_{j'_{m_1+1}}^{k'_{m_1+1}-1}\cdots$; c)
if $\bm v$ satisfies (ii) and $vv'=
\bar{x}_{\tau-1}\bar{x}_{\tau'+2}$, then $d_{\text{H}}\big(\bm v,
x_{[n]\backslash\{i_2\}}\big)=1$, so $\bm v\in E^{i_2}$ and the
number of such $\bm v$ is $(n-3)(q-1)$. Hence, we have
$\big|E^{j_m}_{j'_{m_1+1}}\cap(E^{i_1}\cup
E^{i_2})\big|=q^2+1+(n-3)(q-1)=(q-1)n+O_q(1)$.

2) Since $\bm
x=\bar{x}_1^{k_1}\!~\cdots\!~\bar{x}_{\tau-1}^{k_{\tau-1}}\!~a^{\sigma_1}
\!~a\!~b\!~b^{\sigma_2}
\!~\bar{x}_{\tau+2}^{k_{\tau+2}}\!~\cdots\!~\bar{x}_{m}^{k_{m}}~($see
Remark \ref{rem-form-run}$)$, then by Observation 2, we have
$d_{\text{H}}\big(x_{[n]\backslash\{j_{\tau-1}\}},
x_{[n]\backslash\{j_{\tau+2}\}}\big)=3$, and so by Lemma
\ref{lem-sub-int-size},
$\big|B^{\text{S}}_{2}\big(x_{[n]\backslash\{j_{\tau-1}\}},
x_{[n]\backslash\{j_{\tau+2}\}}\big)\big|=O_q(1)$. Moreover, we
have $E^{j_{\tau-1}}_{j'_{\tau'+2}}\cap
E^{j_{\tau+2}}_{j'_{\tau'-1}}=B^{\text{S}}_{2}\big(x_{[n]\backslash\{j_{\tau-1}\}},
x'_{[n]\backslash\{j'_{\tau'+2}\}}\big)\cap
B^{\text{S}}_{2}\big(x_{[n]\backslash\{j_{\tau+2}\}},
x'_{[n]\backslash\{j'_{\tau'-1}\}}\big)\subseteq
B^{\text{S}}_{2}\big(x_{[n]\backslash\{j_{\tau-1}\}},
x_{[n]\backslash\{j_{\tau+2}\}}\big)$. Hence,
$\Big|E^{j_{\tau-1}}_{j'_{\tau'+2}}\cap
E^{j_{\tau+2}}_{j'_{\tau'-1}}\Big|\leq
\big|B^{\text{S}}_{2}\big(x_{[n]\backslash\{j_{\tau-1}\}},
x_{[n]\backslash\{j_{\tau+2}\}}\big)\big|=O_q(1)$.
\end{proof}

Now, we can prove Claim 5. By \eqref{eq1-jmjm1-i1i2-sgn0} and
\eqref{eq2-jmjm1-i1i2-sgn0}, we can find
$d_{\text{H}}\big(x_{[n]\backslash\{j_{\tau-1}\}},
x'_{[n]\backslash\{j'_{\tau'+2}\}}\big)=2$, so by Lemma
\ref{lem-sub-int-size} and \eqref{xxp-case2},
$$\Big|E^{j_{\tau-1}}_{j'_{\tau'+2}}\Big|=\xi^{q,n-1}_{2,2}=
2(q-1)n+O_q(1).$$ By Claim 9 and Claim 9$'$, we can obtain
$\Big|E^{j_{\tau-1}}_{j'_{\tau'+2}}\cap\Big(\bigcup_{\ell=1}^{\tau-1}E^{j_\ell}_{j_\ell}\Big)
\Big|=O_q(1)$ and
$\Big|E^{j_{\tau-1}}_{j'_{\tau'+2}}\cap\Big(\bigcup_{\ell=\tau+2}^{m}E^{j_\ell}_{j_\ell}\Big)
\Big|=O_q(1)$. Moreover, by 1) of Claim 10, we have
$\Big|E^{j_{\tau-1}}_{j'_{\tau'+2}}\cap(E^{i_1}\cup
E^{i_2})\Big|=(q-1)n+O_q(1)$. So, we can obtain
$$\Big|E^{j_{\tau-1}}_{j'_{\tau'+2}}\cap\Omega
\Big|=(q-1)n+O_q(1)$$ where
$\Omega=\Big(\bigcup_{\ell=1}^{\tau-1}E^{j_\ell}_{j_\ell}\Big)\cup
\Big(\bigcup_{\ell=\tau+2}^{m}E^{j_\ell}_{j_\ell}\Big)\cup(E^{i_1}\cup
E^{i_2})$ as defined in Claim 5. Therefore, we have
$\Big|E^{j_{\tau-1}}_{j'_{\tau'+2}}\cup\Omega
\Big|=\Big|E^{j_{\tau-1}}_{j'_{\tau'+2}}\Big|+|\Omega|-(q-1)n+O_q(1)=|\Omega|+(q-1)n+O_q(1).$
We require $\tau>1$ and $\tau'<m'-1$ because
$E^{j_{\tau-1}}_{j'_{\tau'+2}}=\emptyset$ when $\tau=1$ or
$\tau'=m'-1$. Similarly, we can obtain
$\Big|E^{j_{\tau+2}}_{j'_{\tau'-1}}\Big|=\xi^{q,n-1}_{2,2}=
2(q-1)n+O_q(1)$ and $$\Big|E^{j_{\tau+2}}_{j'_{\tau'-1}}\cap\Omega
\Big|=(q-1)n+O_q(1).$$ Therefore,
$\Big|E^{j_{\tau+2}}_{j'_{\tau'-1}}\cup\Omega
\Big|=\Big|E^{j_{\tau-1}}_{j'_{\tau'+2}}\Big|+|\Omega|-(q-1)n+O_q(1)=|\Omega|+(q-1)n+O_q(1).$
We require $\tau<m-1$ and $\tau'>1$ because
$E^{j_{\tau+2}}_{j'_{\tau'-1}}=\emptyset$ when $\tau=m-1$ or
$\tau'=1$.

Note that by 2) of Claim 10, we have
$\Big|E^{j_{\tau-1}}_{j'_{\tau'+2}}\cap
E^{j_{\tau+2}}_{j'_{\tau'-1}}\Big|=O_q(1)$, so
$\Big|E^{j_{\tau-1}}_{j'_{\tau'+2}}\cap
E^{j_{\tau+2}}_{j'_{\tau'-1}}\cap\Omega\Big|=O_q(1)$ and we can
obtain $\Big|E^{j_{\tau-1}}_{j'_{\tau'+2}}\cup
E^{j_{\tau+2}}_{j'_{\tau'-1}}\cup\Omega\Big|
=\Big|E^{j_{\tau-1}}_{j'_{\tau'+2}}\Big|+\Big|E^{j_{\tau+2}}_{j'_{\tau'-1}}\Big|+
|\Omega|-\Big|E^{j_{\tau-1}}_{j'_{\tau'+2}}\cap
E^{j_{\tau+2}}_{j'_{\tau'-1}}\Big|-\Big|E^{j_{\tau-1}}_{j'_{\tau'+2}}\cap\Omega\Big|
-\Big|E^{j_{\tau+2}}_{j'_{\tau'-1}}\cap\Omega\Big|+\Big|E^{j_{\tau-1}}_{j'_{\tau'+2}}\cap
E^{j_{\tau+2}}_{j'_{\tau'-1}}\cap\Omega\Big|
=|\Omega|+2(q-1)n+O_q(1),$ which completes the proof.



\end{document}